\newcommand{\cC}{\mathcal{C}}
\newcommand{\overbar}[1]{\mkern 1.5mu\overline{\mkern-1.5mu#1\mkern-1.5mu}\mkern 1.5mu}
\newcommand{\freq}{\mathop\mathrm{freq}}
\DeclareMathOperator{\pot}{pot}
\newcommand{\poly}{\mathop\mathrm{poly}}
\newcommand{\POT}{\mathop\mathrm{POT}}
\newcommand{\AND}{\mathop\mathrm{AND}}
\newcommand{\OR}{\mathop\mathrm{OR}}
\newcommand{\dec}{\mathop\mathrm{dec}}
\newcommand{\cN}{\mathcal N}
\newcommand{\cA}{\mathcal A}
\newcommand{\PSNNdet}{\mathcal{SNN}^{\mathrm{poly}}_{\mathrm{det}}}
\newcommand{\PSNNrand}{\mathcal{SNN}^{\mathrm{poly}}_{\mathrm{rand}}}
\newcommand{\SNNdet}{\mathcal{SNN}_{\mathrm{det}}}
\newcommand{\SNNrand}{\mathcal{SNN}_{\mathrm{rand}}}
\newcommand{\Streamdet}{\mathcal{ST}_{\mathrm{det}}}
\newcommand{\Streamrand}{\mathcal{ST}_{\mathrm{rand}}}
\newcommand{\PStreamdet}{\mathcal{ST}^{\mathrm{poly}}_{\mathrm{det}}}
\newcommand{\PStreamrand}{\mathcal{ST}^{\mathrm{poly}}_{\mathrm{rand}}}
\newcommand{\inc}{\mathrm{inc}}
\newcommand{\countt}{\mathrm{count}}
\newcommand{\B}{\{ 0,1 \}}
\newtheorem{theorem}{Theorem} 
\newtheorem{corollary}[theorem]{Corollary}
\newtheorem{proposition}[theorem]{Proposition}
\theoremstyle{plain}
\newtheorem{observation}[theorem]{Observation}
\newtheorem{question}[theorem]{Question}
\newtheorem{lemma}[theorem]{Lemma}
\newtheorem*{lemma*}{Lemma}
\theoremstyle{plain}
\newtheorem{fact}[theorem]{Fact}
\newtheorem{claim}[theorem]{Claim}
\newtheorem{definition}[theorem]{Definition}
\renewcommand{\paragraph}[1]{\vspace{0.15cm}\noindent {\bf #1}}
\title{Spiking Neural Networks Through the Lens of Streaming Algorithms}
\date{}
\author{
	Yael Hitron \\
	\small Weizmann Institute\\
	\small yael.hitron@weizmann.ac.il
	\and				
	Cameron Musco\\
	\small University of Massachusetts Amherst \\
	\small cmusco@cs.umass.edu
		\and				
	Merav Parter\\
	\small Weizmann Institute \\
	\small merav.parter@weizmann.ac.il \thanks{YH and MP are supported in part by the ISF-BFS grant 2017758.}
}
\begin{document}
\maketitle

\begin{abstract}
We initiate the study of biological neural networks from the perspective of streaming algorithms. Like computers, human brains suffer from memory limitations which pose a significant obstacle when processing large scale and dynamically changing data. In computer science, these challenges are captured by the well-known streaming model, which can be traced back to Munro and Paterson `78 and has had significant impact in theory and beyond. In the classical streaming setting, one must compute some function $f$ of a stream of updates $\mathcal{S} = \{u_1,\ldots,u_m\}$, given restricted single-pass access to the stream. The primary complexity measure is the space used by the algorithm. 

In contrast to the large body of work on streaming algorithms, relatively little is known about the computational aspects of data processing in biological neural networks. In this work, we seek to connect these two models, leveraging techniques developed in for streaming algorithms to better understand neural computation. In particular, we consider the spiking neural network model, a distributed model of biological networks in which
nodes (neurons) are connected by edges (synapses), and communicate with their neighbors via spiking (i.e., firing).
Our primary goal is to design networks for various computational tasks using as few auxiliary (non-input or output) neurons as possible. The number of auxiliary neurons can be thought of as the `space' required by the network. 

Previous algorithmic work in spiking neural networks has many similarities with streaming algorithms. However, the connection between these two space-limited models has not been formally addressed. We take  the first steps towards understanding this connection. On the upper bound side, we design neural algorithms based on known streaming algorithms for fundamental tasks, including distinct elements, approximate median, heavy hitters, and more. The number of neurons in our neural solutions almost matches the space bounds of the corresponding streaming algorithms. As a general algorithmic primitive, we show how to implement the important streaming technique of linear sketching efficient in spiking neural networks. On the lower bound side, we give a generic reduction, showing that any space-efficient spiking neural network can be simulated by a space-efficiently streaming algorithm. This reduction lets us translate 
streaming-space lower bounds into nearly matching neural-space lower bounds, establishing a close connection between these two models.
\end{abstract}
 
\newpage
\tableofcontents
\newpage
\clearpage
%\setcounter{page}{1}
%!TEX root = main-full.tex

\section{Introduction}\vspace{-7pt}
%Human and animal brains receive constant and complex input from sensory organs, which must be processed under significant resource limitations. 
In this work, we seek to understand the role of \emph{memory constraints} in neural data processing. 
We consider data-stream tasks, in which a long stream of inputs is presented over time and a neural network must evaluate some function $f$ of this stream. Examples include identifying frequent input patterns (items) or estimating summary statistics, such as the number of distinct items presented. The network cannot store the full stream and so must maintain some form of compressed representation in its working memory, which allows the eventual computation of $f$.
The primary objective is to compute $f$ with as few auxiliary (non-input or output) neurons as possible. The number of auxiliary neurons can be thought of as the `space' required by the network. 

\noindent In computer science, data processing under space limitations is extensively studied in the area of streaming algorithms \cite{munro1980selection,muthukrishnan2005data}.
We leverage this body of work to further our understanding of space-efficient neural networks. We start by designing neural networks for a large class of data-stream tasks, building off fundamental streaming algorithms and techniques, such as linear sketching. We also establish general connections between these models, showing that streaming-space lower bounds can be translated to neural-space lower bounds. We hope that these connections are a first step in extending work on streaming computation to better understand neural processing of massive and dynamically changing data under memory constraints.

\paragraph{The spiking neural network (SNN) model} \cite{maass1996computational,maass1997networks}.
%We focus on the spiking neural network model, a simplified model of distributed computation in biological neural networks.
A spiking network is represented by a directed weighted graph over $n$ input neurons, $r$ output neurons, and $s$ auxiliary neurons. The edges of the graph represent synapses of different strengths connecting the neurons. 
The network evolves in discrete, synchronous {rounds} as a Markov chain where each neuron $u$ acts as a (possibly probabilistic) threshold gate that either fires (spikes) or is silent in each round. In round $t$, 
the firing status of $u$ depends on the firing status of its incoming neighbors in the preceding round $t-1$, and the strength of the connections from these neighbors. In \emph{randomized} SNNs, there are possibly two sources of randomness: the spiking behavior of the neurons and the selection of random edge weights in the network. In \emph{deterministic} SNNs, the neurons are deterministic threshold gates and the edge weights are deterministically chosen. Aside from their relevance in modeling biological computation, SNNs have received significant attention as more energy efficient alternatives to traditional artificial neural networks \cite{lee2016training,tavanaei2019deep}.

A recent series of works in the emerging area of \emph{algorithmic SNNs} \cite{maass1997networks,maass2000computational,dasgupta2017neural,lynch2017spiking,lynch2017computational,Valiant17,chou2018algorithmic,Legenstein0PV18,su2019spike, 0001PVL19,PapadimitriouV19,HitronPP20} focuses on network design tasks. In this framework, given a target function $f:\{0,1\}^n \to \{0,1\}^r$, one seeks to design a space-efficient SNN (with few auxiliary neurons) that converges rapidly to an output spiking pattern matching $f(x)$ when the input spiking pattern matches $x$. Space-efficient SNNs have been devised for the winner-takes-all problem \cite{lynch2017computational,su2019spike}, similarity testing and compression \cite{LynchMP17,PapadimitriouV19}, clustering \cite{HitronLMP20,Legenstein0PV18}, approximate counting, and time estimation \cite{lynch2019integrating,HitronP19}.  Interestingly, many of these works borrow ideas from related streaming algorithms. However, despite the flow of ideas from streaming to neural algorithms, the connection between these models has not been studied formally.

\paragraph{The streaming model} \cite{munro1980selection,muthukrishnan2005data}. 
A data-stream is a sequence of updates $\mathcal{S}=\{u_1,\ldots,u_m\}$.
A streaming algorithm $\mathcal{A}$ computes some function of $\mathcal{S}$, given restricted access to the stream. In the standard single-pass model, the algorithm can only read the updates in $\mathcal S$ once, in the order they are presented. 

Most commonly, and throughout this work, each update $u_i$ represents the insertion or deletion of an item $x_i$ belonging to a universe $U$ with $|U| = n$. Without loss of generality, we will always consider $U$ to be the set of integers $[n] = 1,\ldots,n$, and $f$ is a function of the frequency vector $\overbar{z} \in \mathbb{Z}^n$, which tracks the total frequency of each item in the stream (the number of insertions minus the number of deletions). In the \emph{insertion-only} setting, only insertions are allowed -- i.e., each update increments some entry of $\overbar{z}$. In the general  \emph{turnstile} (dynamic) setting, there are both insertions and deletions -- i.e., increments and decrements to entries in $\overbar{z}$. The primary complexity measure of a streaming algorithm is the \emph{space} (measured in number of bits) required to maintain the evaluation of $f$ on the data-stream. 

\paragraph{Neural networks from a streaming perspective.}
Our primary goal is to devise space-efficient spiking neural networks that solve natural data-stream tasks, which mirror data processing tasks solved in real biological networks.
In light of the large collection of space-efficient streaming algorithms that have been designed for various problems, we start by asking:
\vspace{-5pt}
\begin{question}\label{ques:upper}
\emph{Is it possible to translate a space-efficient streaming algorithm for a given task into a space-efficient SNN algorithm for that task? Do generic reductions from SNNs to streaming exist?}
\end{question}
\vspace{-5pt}  
The streaming literature is also rich with space lower bounds. For many classical data-stream problems, these lower bounds are nearly tight. To obtain space lower bounds for SNNs, we ask if reductions in the reverse direction exist:
\vspace{-5pt}
\begin{question}\label{ques:lower}
\emph{Is it possible to translate a space-efficient SNN for a given task into a space-efficient streaming algorithm for that task?}
\end{question}\vspace{-5pt}
An affirmative answer to both of these questions would imply that the streaming and SNN models are, roughly speaking, computationally equivalent. A priori, it is unclear if this is the case. On the one hand, streaming algorithms have the potential to be more space-efficient than SNNs. For example, a space-efficient algorithm may still have a lengthy description, which is not taken into account in its space complexity. In the SNN setting, where the algorithm description and memory are both encoded by the auxiliary neurons in the network and their connections, a lengthy description may lead to a large, and hence not space-efficient network.

On the other hand, SNNs have the potential to be more space-efficient than streaming algorithms. For example, a randomized SNN with a large number of input neurons but a small number of auxiliary neurons may have a large number of random bits encoded in random connections between its inputs and auxiliary neurons. These bits are not counted as part of its space complexity. In contrast, a streaming algorithm that requires persistent access to many random bits must store these bits, possibly leading to large space complexity.
 
\vspace{-10pt}
\subsection{Our Results} \vspace{-5pt}
We take the first steps towards formally understanding the connections between streaming algorithms and spiking neural networks. The first part of the paper is devoted to studying upper bounds for SNNs, addressing Question \ref{ques:upper}. 
We design space-efficient neural networks for a wide class of streaming problems by simulating their respective streaming algorithms. These simulations must overcome several challenges in implementing traditional algorithms in neural networks. Most notably, in an SNN, the spiking status of the auxiliary neurons encodes the working memory of the algorithm, and their connections encode the algorithm itself. A space-efficient network with few auxiliary neurons thus inherently has limited ability to express complex algorithms. In many data-stream algorithms, the target space complexity is only polylogarithmic in the input size, making this challenge significant. Additionally, unlike traditional algorithms, a neural network evolves continuously in response to its inputs. This leads to synchronization issues -- for example, if an input is not presented for a sufficient number of rounds, the firing status of the network may not converge to a proper state before the next input is presented.

The second part of the paper focuses on lower bound aspects, addressing Question \ref{ques:lower}. We show that any space-efficient neural network can be translated into a space-efficient streaming algorithm, while paying a small additive term (logarithmic in the stream length/universe size). For deterministic SNNs, such a reduction is not difficult. For randomized SNNs, the reduction is more involved, as it must account for the large number of random bits that may be implicitly stored in the random edge weights of the network.
Throughout, we use the $\widetilde{O}()$ notation to hides factors that are poly-logarithmic in $n,m$ and $1/\delta$, where $n$ is the size of the domain, $m$ is a bound on the stream length and $\delta$ is the error parameter. 
\vspace{-10pt}
\subsubsection{Efficient Streaming Algorithms Yield Efficient SNNs} %\vspace{-3pt}
We consider data-stream tasks in which each update is an insertion or deletion of an integer item $x \in [n]$, and $f$ is a function of the frequency vector $\overbar{z} \in \mathbb{Z}^n$ of these items. In the streaming setting, each update can be thought as an $n$-length vector with a single $\pm 1$ entry, corresponding to an item insertion or deletion. In the SNN setting, each update may be encoded as the firing of one of $n$ input neurons along with a sign neuron indicating if the update is an increment or a decrement. Or, the update may be encoded via $O(\log n)$ input neurons, indicating the item to be inserted or deleted. These different encodings correspond to different natural settings -- the first corresponds to a network that collects firing statistics from a large set of inputs and the second to a network that records statistics on a large number of possible input patterns, encoded in the spiking patterns of a smaller number of input neurons.
 
In either case, each input is presented for some \emph{persistence time}, a certain number of rounds in which the input is fixed to allow the network state to converge before the next input is presented. 

\paragraph{Linear sketching.} A linear sketching algorithm is a streaming algorithm in which the state of the algorithm is a linear function of the updates seen so far. In particular, the state can be represented as the multiplication of a sketching matrix $A \in \mathbb{R}^{r \times n}$ with the frequency vector $\overbar{z} \in \mathbb{Z}^n$. Such algorithms have many useful properties applicable in both the turnstile setting and in distributed settings. For example, the additive nature of these algorithms allows one to split the data-stream across multiple sites, which can process the data in an independent manner. Additionally, the obliviousness of linear sketching algorithms to the ordering of the stream yields an efficient generic derandomization scheme using the Nisan's PRG for space bounded computation \cite{indyk2006stable}.  Linear sketching algorithms constitute the state-of-the-art algorithms for essentially all problems in the turnstile model, including heavy-hitters, coresets for clustering problems \cite{indyk2011k}, and $\ell_p$ estimation \cite{cormode2003comparing}.  In fact, Li, Nguyen and Woodruff \cite{LiNW14} present a general reduction from the streaming turnstile model to linear sketching. This reduction, and its caveats have been further studied in a recent work by Kallaugher and Price \cite{Kallaugher20}.
Given their ubiquity in turnstile streaming algorithms, an important step in designing space-efficient SNNs for data-stream problems is an efficient implementation of linear sketching in the neural setting. We give such an implementation:
\vspace{-3pt}
\begin{theorem}[Linear Sketch]\label{lem:linear-sketch}
Let $\cA$ be an algorithm approximating a function $f(\overbar{x})$
in the turnstile model using a linear sketch with an integer matrix $A$ of size $r \times n$. Let $\ell$ be a bound on the maximum entry in $|A\overbar{x}|$ for every item $\overbar{x}$. 
There exists a network $\cN$ with $n+1$ input neurons, $r \cdot (\lceil\log \ell \rceil+1)$ output neurons, $O(r \cdot \log \ell)$ auxiliary neurons which implements $\cA$ in the following sense. 
The first $n$ input neurons $x=(x_1, \ldots, x_n)$ represent the inserted item $[1,n]$, and the additional input neuron $s$ indicates the sign of the update. Each input update has a persistence time of $O(\log \ell)$ rounds.
The output neurons are divided into $r$ vectors $\overbar{y}_1, \ldots, \overbar{y}_r$ each of length $\log \ell$, and $r$ neurons $s_1, \ldots , s_r$. For every $i \in \{1,\ldots, r\}$, the decimal value of the binary vector 
$\overbar{y}_i$ is equal to the absolute value of the $i^{th}$ entry of $A \cdot \overbar{z}$, and the sign neuron $s_i$ indicates the sign, where $\overbar{z}$ is the sum of all input items presented in the current stream.
\end{theorem}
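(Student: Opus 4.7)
The plan is to build $r$ independent \emph{row circuits}, one per row of $A$, each maintaining a signed binary counter that tracks $(A\overbar{z})_i$. Because linear sketching is additive, it suffices to show that the $i$-th circuit correctly updates its counter by $\pm A_{ij}$ whenever the $j$-th input fires with a given sign bit. The matrix entries are hard-coded into the synaptic structure: from $x_j$ I place fixed weight-$1$ edges into exactly those bit-position neurons of the row-$i$ circuit that correspond to the nonzero bits of $|A_{ij}|$, and I use an auxiliary layer driven by $s$ and the sign of $A_{ij}$ to decide whether the current update is an addition or a subtraction. Because each input update is one-hot in $(x_1,\ldots,x_n)$, the row-$i$ circuit receives exactly the magnitude bits and signed direction of $A_{ij}$ in that round.

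Inside each row circuit I keep the counter in signed-magnitude form, with $\lceil\log\ell\rceil$ magnitude neurons forming the output $\overbar{y}_i$, one output sign neuron $s_i$, and $O(\log\ell)$ auxiliary neurons implementing a ripple-carry adder/subtractor. In round $k$ of the persistence window the bit-position-$k$ gadget combines bit $k$ of the update, bit $k$ of the stored magnitude, and the incoming carry (or borrow) to produce the new bit and an outgoing carry/borrow for position $k+1$. When the update sign matches the stored sign the circuit runs in addition mode; when they differ, the same hardware operates as a ripple-borrow subtractor, and a final round detects an overflow borrow out of the top bit and, if present, performs a bitwise complement together with a flip of $s_i$, restoring canonical signed-magnitude form.

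The main obstacle is synchronizing ripple propagation with the persistence time. A carry or borrow may need to traverse all $\lceil\log\ell\rceil$ bit positions before the next update arrives, which is precisely why the theorem allots $\Theta(\log\ell)$ rounds per update. Correctness will be proved by induction on the bit position: after $k$ rounds, bits $0,\ldots,k-1$ of the stored counter equal the corresponding bits of the new value and the carry/borrow neuron at position $k$ holds the correct residue. Tuning each bit-neuron's threshold so that it latches its value in the absence of a fresh input, and using constant-size local transition gadgets at each bit position, yields both the correct dynamics and the $O(\log\ell)$ auxiliary-neuron budget per row. Running all $r$ row circuits in parallel off the shared input layer gives the claimed $O(r\log\ell)$ total auxiliary neurons and $r(\lceil\log\ell\rceil+1)$ output neurons, completing the construction.
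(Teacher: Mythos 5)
Your proposal takes a genuinely different route from the paper. The paper avoids arithmetic circuitry altogether: it splits $A=A^{+}-A^{-}$ into two non-negative matrices (so each input neuron stays sign-consistent, as the model requires), computes $A^{+}\overbar{x}$ and $A^{-}\overbar{x}$ as membrane \emph{potentials} of single neurons, and then repeatedly invokes a potential-encoding module ($\POT$, Lemma~\ref{clm:binary-potential}) to convert those potentials into binary vectors. The running sketch is updated by adding the new contribution to the current output (again as a potential, re-encoded into binary), and a delay chain with a reset inhibitor guarantees the update is applied exactly once per persistence window. Your plan instead builds a signed-magnitude ripple-carry adder/subtractor per row, which is more ``hardware-like'' and more elementary in flavor: it needs no potential-to-binary module and makes the $O(\log\ell)$ persistence time visible as a carry-propagation delay. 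That is an appealing alternative, but as written it has gaps.

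The most serious gap is the exactly-once update. You say each bit-neuron ``latches its value in the absence of a fresh input,'' but during the entire $O(\log\ell)$-round persistence window the one-hot input $x_j$ \emph{is} present. Nothing in your description stops bit position $0$ from re-applying the update in round $2$ after having applied it in round $1$, and the carry that bit $1$ consumes must be computed against the \emph{old} bit $1$ while the new bit $0$ has already been written. You need an explicit one-shot trigger that marches down the bit positions (or a reset inhibitor like the paper's chain $C$) so that each bit-position gadget fires exactly once and against a stable snapshot of the stored magnitude. A second, smaller issue is the restore step: if a borrow propagates out of the top bit in signed-magnitude subtraction, the stored pattern is $2^{\ell}-(|u|-|V|)$, and a plain bitwise complement gives $(|u|-|V|)-1$, off by one; you need complement-plus-one (or an equivalent), which also costs another ripple and must be fit inside the same persistence window. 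Finally, you should address the inhibitory/excitatory constraint explicitly: each bit-position gadget needs both $s$ and its negation, and both the stored bit and its negation, to select add vs.\ subtract and to compute the sum/borrow bits; the paper handles this by introducing inhibitory copies, and your construction will need the analogous bookkeeping to stay within the model.
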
 \vspace{-5pt}
Theorem \ref{lem:linear-sketch} applies to linear sketches using integer matrices, which are commonly used, see \cite{LiNW14}. 
Via scaling, the construction can be extended to rational matrices as well.  
We note that the network of Theorem \ref{lem:linear-sketch} does not implement the `decoding' step which estimates $f( \overbar{z})$ from $A \cdot \overbar{z}$. This step depends on the problem being solved, however it is often very simple and thus implementable via a space-efficient SNN. E.g., in $\ell_p$ norm estimation one might  just have to compute the $\ell_p$ norm of $A \cdot  \overbar{z}$ \cite{indyk2006stable}. In frequency estimation, one might have to compute an average of a subset of entries in $A \cdot  \overbar{z}$ \cite{charikar2002finding}. 
 
Beyond our generic linear sketching reduction, we give neural solutions for two challenging problems in the insertion-only model, namely, distinct elements and median estimation. These simulation results are less general and provide several tools for bypassing critical obstacles that arise in streaming to SNN reductions. 
%\vspace{-15pt}

\paragraph{Distinct elements.} In the \emph{distinct elements problem} one must approximate the number of distinct items appearing in a data-stream with repeated items. It is well known that an exact solution by a single-pass streaming algorithm requires linear space. In fact, as we discuss later on, one can also show that the exact computation requires linear space in the SNN setting. Therefore, we restrict our attention to $(1+\epsilon)$ approximation for the number of distinct elements for any $\epsilon \in (0,1)$. This problem has been studied thoroughly in the streaming literature \cite{chassaing2007efficient,Bar-YossefJKST02,DurandF03, flajolet2007hyperloglog,KaneNW10,Blasiok18,IndykW03,Woodruff04,alon1999space}.

In this work, we provide an efficient neural implementation for the well-known LogLog streaming algorithm by \cite{DurandF03,flajolet2007hyperloglog}. While the LogLog and its improved variant the hyper-Loglog algorithm provide sub-optimal space bounds, they are commonly used in practice due to their simplicity. As we will see, they are efficiently implementable in the neural setting. In addition, we provide a nearly matching space lower bound. \vspace{-5pt}
\begin{theorem} [Neural Computation of Distinct Elements]\label{lem:disnct-elements}
For every $n \in \mathbb{N}$, $\epsilon, \delta \in (0,1)$, given $n$ input neurons representing the elements in $[n]$  
there exists a network $\cN$ with $\log n$ output neurons $\widetilde{O}(1/\epsilon^2)$ auxiliary neurons, and $O(\log \log n)$ persistence time that encode the logarithm of an $(1\pm \epsilon)$ approximation of the number of distinct elements in the current stream, with probability $1-\delta$. In addition, any SNN requires $\Omega(\log n+1/\epsilon^2)$ auxiliary neurons to compute an $(1\pm \epsilon)$ approximation for the problem, with constant probability. 
\end{theorem}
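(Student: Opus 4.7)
The upper bound plan is to implement the LogLog algorithm of Durand and Flajolet within the SNN framework. At initialization, the network samples a random hash function $h:[n]\to\{0,1\}^{O(\log n)}$, encoded via fixed random edge weights from the $n$ input neurons; the first $\log k$ bits of $h(x)$ select one of $k=\Theta(1/\epsilon^2)$ buckets, and the remaining bits are used to determine $\rho(x)$, the position of the leading $1$ bit. Each bucket $j$ maintains a max-register $M_j$ in binary using $O(\log\log n)$ auxiliary neurons, giving the target space of $\widetilde{O}(1/\epsilon^2)$. On input $x$, a bucket-selection subnetwork routes the signal to bucket $b(x)$, a leading-zeros subnetwork computes $\rho(x)$, and a comparator-and-update subnetwork performs $M_{b(x)}\leftarrow\max(M_{b(x)},\rho(x))$; these three stages can be pipelined within $O(\log\log n)$ rounds of persistence time using standard threshold-gate constructions. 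A final decoding layer averages the $M_j$'s and outputs $\log_2$ of the scaled LogLog estimate in unary over $\log n$ output neurons.

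For the lower bound, I appeal to the generic SNN-to-streaming reduction described in the overview of the paper's second part: an SNN with $s$ auxiliary neurons can be simulated by a streaming algorithm whose space is $s$ plus an additive term logarithmic in $n,m$. The classical streaming lower bound of $\Omega(1/\epsilon^2)$ bits for $(1\pm\epsilon)$-approximating distinct elements (e.g., via reduction from \textsc{Gap-Hamming-Distance}) therefore transfers, yielding the $\Omega(1/\epsilon^2)$ bound on auxiliary neurons. The $\Omega(\log n)$ term is not immediately recovered from the generic reduction if the additive loss already consumes a $\log n$; to handle this, I would give a direct argument tailored to the neural setting: an SNN with $o(\log n)$ auxiliary neurons has only $2^{o(\log n)}=o(n)$ distinguishable internal configurations, which is insufficient to produce the $\Theta(\log n)$ distinct output values the problem must emit (e.g., by considering a hard family of streams of the form $\{1\},\{1,2\},\ldots,\{1,\ldots,n\}$ and applying a pigeonhole/indistinguishability argument between pairs of streams whose distinct-element counts differ by more than a $(1+\epsilon)$ factor).

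The main obstacle is coordinating the update pipeline within the $O(\log\log n)$ persistence window: the bucket-select, leading-zeros, and binary-max-update subnetworks must be carefully synchronized so that exactly one register updates per input and no update is lost when the next input arrives. I would address this using gating and pipelining primitives familiar from prior algorithmic SNN work (clocks, WTA-style selectors, and staged comparator circuits), together with a constant-size input latch that buffers the current item until its update has propagated into the registers. A secondary challenge is implementing the hash function within the SNN model: I would encode the random bits of $h$ as fixed random synaptic weights from each input neuron to dedicated routing and $\rho$-computing neurons, so that these bits are free in the auxiliary count (consistent with the randomness model assumed throughout the paper).
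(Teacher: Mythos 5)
Your upper bound mirrors the paper's construction: both implement Durand--Flajolet LogLog with $\Theta(1/\epsilon^2)$ buckets holding $O(\log\log n)$-bit max-registers, a pairwise-independent hash realized via random synaptic weights, a leading-zeros extractor, a compare-and-update stage, and an averaging/decoding layer, with delay chains used to gate the update so it fires exactly once per item. (The paper additionally runs $O(\log 1/\delta)$ independent copies and takes a median to reach failure probability $\delta$, which you elide, but that is routine.)

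Your lower bound, however, is not the paper's argument and has a genuine gap. The paper does \emph{not} route through the generic SNN-to-streaming simulation of Theorem~\ref{lem:snn-rand-low-spaceIntro}; it instead gives a \emph{direct} reduction from one-way communication complexity with public randomness. Concretely: since the random edge weights play the role of public coins, Alice and Bob both know the network $\mathcal N$ exactly. Alice feeds her half of the input through $\mathcal N$, ships the firing state of the $S$ non-input neurons to Bob, and Bob continues the simulation. This yields a one-way protocol with $S$ bits of communication, so the Gap-Hamming lower bound of Woodruff gives $S = \Omega(1/\epsilon^2)$ and the Alon--Matias--Szegedy Disjointness reduction gives $S = \Omega(\log n)$, with no additive loss. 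By contrast, your route via the generic simulation only shows $S = \Omega(1/\epsilon^2 - \log(nm))$ (weaker, and vacuous when $\epsilon$ is a constant), and you correctly sense that the $\Omega(\log n)$ term is then lost entirely.

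Your proposed patch for the $\Omega(\log n)$ term does not work: with $o(\log n)$ auxiliary neurons you have $2^{o(\log n)}$ reachable memory states, and while this is $o(n)$, it is still vastly larger than the $\Theta((\log n)/\epsilon)$ distinguishability classes your hard instance requires, so no pigeonhole contradiction arises. The correct move is exactly the public-coin CC reduction above, which sidesteps the pseudorandomness machinery (and hence the $O(\log(nm))$ loss) precisely because the two parties can share the network's random weights for free. This is the key idea you are missing.
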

The lower bound is obtained via a communication complexity reduction that mimics the corresponding streaming reduction. We note that this reduction works perfectly, i.e., without any asymptomatic loss in the space-bound (compared to the streaming bound).

\paragraph{Count-Min sketch.} A common tool used in many of the streaming algorithms is the Count-Min sketch data structure, which maintains frequency estimates for all items in a stream. Count-Min sketch is in fact a linear sketch, and thus can be implemented via Theorem \ref{lem:linear-sketch}. However, it is not immediately clear how to implement certain important operations, like approximate frequency (count) queries via this reduction. We thus provide a direct implementation. Our implementation applies in the setting where there are $O(\log n)$ input neurons representing each insertion/deletion of an item $x \in [n]$. However, it can easily be extended to the setting in which there are $n$ input neurons, one for each item. 
\vspace{-5pt}
\begin{definition}[Count-Min Sketch~\cite{cormode2005improved}] Given parameters $\epsilon, \delta \in (0,1)$, the Count-Min sketch is a probabilistic data structure that serves as a frequency table of items in a stream. It supports two operations: (i) $\inc(x)$ increases the frequency of $x$ by one; (ii) $\countt(x)$ returns an $(1+\epsilon)$ approximation of the frequency of $x$ with probability $1- \delta$.
\end{definition}\vspace{-3pt}
For given parameters $\epsilon, \delta  > 0$, the Count-Min sketch data structure contains $\ell= O(\log 1/\delta)$ hash tables $T_1, \ldots T_\ell$ each with $b= O(1/\epsilon)$ bins, and each table $T_i$ is indexed using a different pairwise independent hash function $h_i$. 
%\textbf{Letting $h_i(x)$ be the location in the $i^{th}$ hash table that $x$ is mapped to, the $\inc(x)$ operation applies $h_i(x)\gets h_i(x)+1$ for every $i \in [\ell]$. 
%The $\countt(x)$ operation returns $\min_{i \in [\ell]} T_i[h_i(x)]$, which is shown to provide a good approximation for the frequency of $x$.}
The $\inc(x)$ operation applies $T_i[h_i(x)]\gets T_i[h_i(x)]+1$ for every $i \in [\ell]$. The $\countt(x)$ operation returns $\min_{i \in [\ell]} T_i[h_i(x)]$, which is shown to provide a good approximation for the frequency of $x$. 
The Count-Min data structure is used in many streaming algorithms including heavy-hitters, range queries, quantile estimation, and more. 
We provide an efficient neural implementation of a Count-Min sketch data structure, and show:
\begin{theorem}[Neural Implementation of Count-Min Sketch] \label{cor:min-sketch}
For every $n,m \in \mathbb{N}$ and $\epsilon, \delta \in (0,1)$ there exists a network $\cN$ with $\log n$ input neurons,
% $\widetilde{O}(b \cdot \cdot \ell  + \ell^2)$
$O(1/\epsilon \cdot \poly(\log m, \log 1/\delta)))$ auxiliary neurons, and $\widetilde{O}(1)$ persistence time that implements a Count-Min sketch with approximation ration $(1+\epsilon)$ and success probability $1-\delta$, for an input stream of length at most $m$. 
\end{theorem}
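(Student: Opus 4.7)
The plan is to directly simulate the Count-Min data structure in the neural setting, rather than invoking Theorem~\ref{lem:linear-sketch}, because the linear-sketch reduction there takes $n$ one-hot input neurons, whereas here we only have $\log n$ binary input neurons and cannot afford an $n$-neuron one-hot decoder. The network will consist of $\ell = O(\log 1/\delta)$ parallel \emph{table modules}, each containing: (i) a hash-evaluation subnetwork computing $h_i(x) \in [b]$; (ii) a one-hot bin decoder of $b = O(1/\epsilon)$ indicator neurons, exactly one of which fires on any given input; and (iii) $b$ binary counters, each of width $O(\log m)$, storing $T_i[j]$.

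First, I would instantiate the pairwise-independent hash families as $GF(2)$-linear maps $h_i(x) = A_i x$ with $A_i$ a random $(\log b) \times (\log n)$ binary matrix. Since we are in the randomized SNN setting, $A_i$ is encoded directly in the random edge weights of the network and incurs no auxiliary-neuron cost for storage. Each of the $\log b$ output bits is an XOR of a subset of the $\log n$ input bits, realized by a balanced XOR-tree of depth $O(\log \log n)$ built from constant-size threshold-gate XOR modules. The $b$ bin-indicator neurons then compare the hash value to each bin index via a single threshold gate (an AND over $\log b$ matched bit-positions), contributing $O(b)$ neurons per table. On an insertion (sign neuron $s$ firing positive), the fired bin indicator triggers an increment of the corresponding counter; I would use a standard binary-counter construction from prior algorithmic SNN work such as \cite{HitronP19}, which propagates the carry across $O(\log m)$ rounds. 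For $\countt(x)$, the bin indicators route the $\ell$ selected counter values into a minimum-computation subnetwork built as a tournament of $\log m$-bit comparators, of size $O(\ell \log m)$ and depth $O(\log \ell)$.

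Adding the contributions: the counters dominate at $O(\ell b \log m) = O((1/\epsilon) \cdot \log(1/\delta) \cdot \log m)$ neurons, bin indicators contribute $O(\ell b)$, and the XOR-trees and query module contribute lower-order terms, giving the target bound $O((1/\epsilon) \cdot \poly(\log m, \log 1/\delta))$. The pipeline depth---hash evaluation ($O(\log \log n)$), one-hot decoding ($O(1)$), counter increment ($O(\log m)$), and query min-tree ($O(\log m + \log \ell)$)---is $\widetilde{O}(1)$. Correctness is then immediate: the constructed network exactly simulates the pairwise-independent Count-Min data structure of \cite{cormode2005improved}, inheriting its $(1+\epsilon)$-approximation guarantee and $1-\delta$ success probability.

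The main obstacle I anticipate is \emph{synchronization}: the hash, increment, and query stages have different depths, so they must be aligned within a single persistence window, and each input presentation must trigger exactly one increment per table (spurious re-triggering from the still-firing bin indicator across later rounds would double-count). I would address this with a small ``phase'' subnetwork that gates the increment signal to pulse only on the first round after the bin indicators stabilize, and by fixing the persistence time to exceed the longest pipeline stage (the carry propagation of the counter). A secondary issue is ensuring counter values persist between updates, which I would handle via the self-sustaining latch pattern used in earlier SNN counter constructions.
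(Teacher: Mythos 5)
Your proposal follows essentially the same architecture as the paper's proof: $\ell = O(\log 1/\delta)$ parallel table modules, each consisting of a pairwise-independent $\mathrm{GF}(2)$-linear hash sub-network, a one-hot bin decoder, and $b = O(1/\epsilon)$ neural counters of width $O(\log m)$, plus a minimum sub-network for $\countt$ queries and an inhibition/gating mechanism ensuring exactly one increment per insertion. The one non-cosmetic difference is in the hash evaluation: you realize each parity output bit as an $O(\log n)$-size XOR tree of constant-fan-in threshold modules, whereas the paper (Lemma~\ref{lem:hash}) instead accumulates the inner product $\sum_j H_{ij} x_j$ over the integers in a single neuron's \emph{potential}, extracts its binary representation with only $O(\log\log n)$ neurons via the potential-encoding module of Lemma~\ref{clm:binary-potential}, and reads off the least-significant bit as the parity -- giving a hash sub-network of size $O(\log b \cdot \log\log n)$ rather than your $O(\log b \cdot \log n)$. (The paper also invokes Maass's all-pairs minimum network of size $O(\ell^2 + \ell \log m)$ in place of your tournament comparator; both fit the bound since $\ell = O(\log 1/\delta)$.)
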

Our neural implementation of the Count-Min sketch can immediately be used to give, e.g., a simple neural approximate heavy-hitters algorithm, which returns TRUE if a presented item has frequency $\ge m/k$ in a data-stream for some integer $k$, and FALSE if it has frequency $\le (1-\epsilon) m/k$. Setting $\epsilon' = O(\epsilon/k)$, a $\countt(x)$ query will return a frequency estimate $\ge m/k$ for any true heavy-hitter $x$ and $\le m/k$ for any $x$ with frequency $\le (1-\epsilon) m/k$. By keeping a counter for $m$ using $O(\log m)$ neurons and performing a comparison operation with the output of $\countt(x)$, we can thus solve the heavy hitters problem.
Other applications of Count-Min sketch require more complex processing of the data structure's output. To illustrate how this processing can be implemented efficiently in an SNN, we detail one such application, to median approximation.

\paragraph{Approximate median.} One of the most fundamental statistical measures of a data-stream is its quantile. The 1/2-quantile known as the median, attracts most attention in the streaming literature \cite{munro1980selection,manku1998approximate,charikar2002finding, ChenZ20}. Its non-linearity nature makes it considerably harder to maintain compared to its linear cousin, the mean. As in many other streaming problems, the exact computation of the median requires linear space both in the streaming and in the SNN setting (as will be discussed later on). This motivates the study of the relaxed $(1+\epsilon)$ approximation task. In the latter, the algorithm is allowed to output an item $j$ provided that the total number of items with value at most $j$ is in the range $[m/2-\epsilon m, m/2+\epsilon m]$.

Cormode and Muthukrishnan \cite{cormode2005improved} presented an elegant streaming algorithm for this problem using a space of $\widetilde{O}(1/\epsilon)$ bits. The algorithm is based on the Count-Min sketch data structure, combined with a \emph{dyadic decomposition} technique that is used in a number of other streaming algorithms. One of our key technical algorithmic contributions is in providing an efficient neural implementation of this algorithm. 
\begin{theorem}[Neural Computation of Approximate Median] \label{lem:approx-median}
For every $n, m \in \mathbb{N}$ and $\epsilon, \delta \in (0,1)$, there exists a neural network $\cN_{n,m}$ solving the $(1+\epsilon)$-approximate median problem using $O(1/\epsilon \cdot \poly(\log m, \log n, \log 1/ \delta))$ auxiliary neurons and persistence time $\widetilde{O}(1)$ with probability at least $1-\delta$. 
\end{theorem}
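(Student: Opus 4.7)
}
The plan is to simulate the Cormode--Muthukrishnan streaming algorithm, which combines dyadic decomposition with Count-Min sketching. Conceptually, imagine a balanced binary tree over the universe $[n]$; the $\ell$-th level partitions $[n]$ into $n/2^\ell$ dyadic intervals of length $2^\ell$. We maintain $L = \lceil \log n \rceil$ independent neural Count-Min sketches $\cC_0,\dots,\cC_{L-1}$, one per level, each instantiated via Theorem~\ref{cor:min-sketch} with parameters $\epsilon' = \Theta(\epsilon/L)$ and $\delta' = \Theta(\delta/L)$. In parallel we keep an $O(\log m)$-bit binary counter for the total stream length.

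The update phase is easy in the SNN setting. When an item $x \in [n]$ is presented via its $\log n$ input neurons, its ancestor at level $\ell$ is encoded by the most significant $\log n - \ell$ bits of $x$, so we can feed sketch $\cC_\ell$ directly by wiring only those bits into its input interface. Thus the input neurons of each $\cC_\ell$ are just a hard-wired subset of our $\log n$ global input neurons, and the increment to all $L$ sketches happens simultaneously during one update with persistence time $\widetilde{O}(1)$. We also simultaneously increment the global stream counter.

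The interesting part is the query phase, which performs a binary search down the dyadic tree to find the median. We maintain two registers: a current dyadic interval $I$ (initially the root $[1,n]$) and a residual target rank $\tau$ (initially $\lfloor m/2 \rfloor$, read off from the stream counter via a single-step bit shift). At level $\ell$ from the top, we issue a $\countt$ query to $\cC_{L-\ell}$ for the left child of $I$; if the returned estimate is at least $\tau$, we descend left, otherwise we subtract the estimate from $\tau$ and descend right. Each step amounts to (i) computing the address of the left child from $I$ (a fixed bit-append operation implementable with $O(\log n)$ neurons), (ii) invoking one Count-Min query module, and (iii) comparing and possibly subtracting two $O(\log m)$-bit binary numbers -- an operation that has a standard threshold-gate implementation with $\widetilde{O}(1)$ neurons and $\widetilde{O}(1)$ persistence per step. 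After $L$ steps we output the surviving singleton interval.

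The main obstacle is the \emph{sequential} nature of the search: the decision at level $\ell+1$ depends on the outcome of the query at level $\ell$, so the whole query phase must be scheduled by the SNN and driven by synchronization signals firing once per step. I would address this with a small $O(\log n)$-state controller (a cyclic chain of auxiliary neurons that enables one Count-Min sketch at a time), ensuring each Count-Min module is queried during a disjoint window of $\widetilde{O}(1)$ rounds, so the entire decoding finishes in $\widetilde{O}(L) = \widetilde{O}(1)$ rounds. The correctness follows from a union bound: by choosing $\epsilon' = \Theta(\epsilon/L)$ and $\delta' = \Theta(\delta/L)$, all $L$ queries return $(1\pm \epsilon/L)$ multiplicative approximations with probability $1-\delta$, whose telescoped error over the $L$-step binary search is at most $\epsilon m$ in additive rank, exactly as in the streaming analysis of \cite{cormode2005improved}. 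The total auxiliary neuron count is $L$ copies of Theorem~\ref{cor:min-sketch}'s bound plus lower-order overhead for the controller, counter, and comparator, giving $O(1/\epsilon \cdot \poly(\log n, \log m, \log 1/\delta))$ auxiliary neurons and $\widetilde{O}(1)$ persistence time, as claimed.
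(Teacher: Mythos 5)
Your proposal is essentially the same approach as the paper: instantiate $L=\lceil\log n\rceil$ Count-Min sketches (Theorem~\ref{cor:min-sketch}) on the dyadic decomposition with $\epsilon'=\Theta(\epsilon/L)$, $\delta'=\Theta(\delta/L)$, keep an $O(\log m)$-bit stream counter, update all sketches simultaneously by hard-wiring the appropriate bit-prefixes of the input into each sketch, and answer a median query with a binary search down the dyadic tree driven by a sequential scheduling mechanism (what you call a controller; what the paper calls a timing chain $T$ of length $\log n\cdot\tau'$). Your variant of the binary search maintains a residual target rank and issues one $\countt$ query per level on a child interval; the paper instead compares the accumulated prefix frequency $\sum_{j\ge i}\dec(\overbar{f}_j)$ against $m'/2\pm(\epsilon/2)m'$ directly through threshold gates with inhibitory copies ($g_i$, $s_i$, $e_i$, and inhibitors $g'_i$ to cancel the $\overbar{f}_{i+1}$ contribution when descending left). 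These are equivalent formulations of the same search; yours requires an explicit $O(\log m)$-bit subtraction circuit per step, while the paper's avoids subtraction by working on the running sum directly, which is somewhat more neural-idiomatic.

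One concrete gap: the paper's definition requires the output to be an element $x_j\in\mathcal S$, i.e.\ an item that actually appears in the stream. Your surviving singleton after $L$ descents is an element of $[n]$ with approximately correct rank, but nothing in your scheme guarantees it appeared. The paper adds $\log n$ extra Count-Min sketches $\cC'_i$ at singleton granularity and modifies the $s_i$ and $e_i$ comparison neurons to fire only when the singleton sketch indicates nonzero frequency (see "Ensuring the Output is a Stream Element"). You would need an analogous mechanism: before committing to a singleton, consult a frequency estimate of that singleton and, if it is (estimated) absent, continue the search or shift to a neighboring present element. Without this, the network as you describe it fails the problem specification as stated. Everything else in your outline---the sequential scheduling to handle the data-dependent search path, the error telescoping over $L$ levels, the neuron and persistence-time accounting---matches the paper's argument.
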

\vspace{-10pt}
\subsubsection{Streaming Lower Bounds Yield SNN Lower Bounds} Our second contribution focuses on Question \ref{ques:lower}, showing that space-efficient SNNs can be translated into space-efficient streaming algorithms, and thus that lower bounds in the streaming model imply lower bounds in the neural setting.
The underlying intuition for this transformation is based on the following observation.
\begin{observation}
A spiking neural network with deterministic edge weights, $n$ input neurons, and $S$ non-input neurons can be simulated by a streaming algorithm using $S$ bits of space. 
\end{observation}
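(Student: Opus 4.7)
The idea is that in a deterministic SNN the edge weights and the network topology are fixed in advance, so they form part of the streaming algorithm's \emph{description} rather than its working memory. The only quantity that evolves during execution and must be tracked is the firing pattern of the $S$ non-input neurons, which is one bit per neuron and thus fits in $S$ bits.

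Concretely, I would have the streaming algorithm $\cA$ maintain an $S$-bit vector $\sigma$ encoding the current firing statuses of all auxiliary and output neurons, initialized to the SNN's starting configuration. Upon reading an update $u_i$ from the stream, $\cA$ (conceptually) sets the firing pattern of the $n$ input neurons from the deterministic encoding of $u_i$; this input pattern need not be stored, as it is a deterministic function of the current symbol under the stream head. Then, for each of the (at most polylogarithmic) persistence-time rounds associated with $u_i$, $\cA$ simulates one synchronous round of the SNN as follows. For each non-input neuron $v$, it looks up $v$'s incoming edges and their weights from the hard-coded description, computes the weighted sum of firing statuses of $v$'s neighbors (reading each non-input neighbor's bit from $\sigma$, and regenerating each input neighbor's bit from $u_i$), compares this sum to $v$'s deterministic threshold, and writes the resulting bit to an updated copy of $\sigma$. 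Once the stream has been fully consumed, the bits of $\sigma$ indexed by the $r$ output neurons constitute the simulated network's output.

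The only mild subtlety is the synchronous update: since the new firing status of each neuron depends on the \emph{old} firing statuses of all its neighbors, a fully in-place update would corrupt the simulation. A standard double-buffering step resolves this, with an additive $O(S)$ bits that the observation absorbs into its informally stated bound. Beyond this, there is no real obstacle: each simulated round reads only from $\sigma$ and the hard-coded network and produces a new $\sigma$, so no stream-length-dependent auxiliary state accumulates. The genuine difficulty of the streaming-to-SNN reduction program lies in the randomized case, where random edge weights may implicitly encode far more bits than can be baked into the algorithm's description; that case requires the more involved argument hinted at later in the paper.
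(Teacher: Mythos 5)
Your proof is correct and takes essentially the same approach as the paper: store only the $S$ firing states of the non-input neurons, regenerate input-neuron firing from the current stream symbol, and treat the fixed topology, weights, and thresholds as part of the algorithm's description rather than its working memory. You add the double-buffering subtlety, which the paper leaves implicit, but the core argument is identical.
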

In the SNN model, the spiking behavior of neurons in a given round depends only on the firing states of their incoming neighbors in the previous round. Thus, to simulate the behavior of the network as one pass over the data-stream, it is sufficient to maintain the firing states of all non-input neurons in the network, thus storing $S$ bits of information. When the edge weights of the network are randomly sampled such a small-space simulation becomes more involved. The explicit storage of all the edge weights might be too costly since there can be $\Omega(nS+S^2)$ edges in a network with $n$ input neurons and $S$ non-inputs. Nevertheless, we show that a small-space simulation is still possible using a pseudorandom number generator, if we pay an additive logarithmic overhead in the length of the stream and universe size.
\vspace{-1pt}
\begin{theorem}\label{lem:snn-rand-low-spaceIntro}
Any SNN $\mathcal{N}$ with $n$ input neurons, $S$ non-input neurons for $S=\poly(n)$, and $\poly(n)$ persistence time can be simulated over a data-stream of length $m$ using a total space of $O(S+\log(nm))$. The success guarantee of the simulation is $1-1/\poly(n,m)$. 
\end{theorem}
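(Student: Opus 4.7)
The plan is to reduce to the deterministic case and then to handle the randomness by feeding the simulation a pseudorandom tape in place of the true one. Start with the deterministic observation: because the firing status of each non-input neuron in round $t$ depends only on the $S$ firing states in round $t-1$ (and the current input pattern, which is read from the stream), a single $S$-bit vector plus an $O(\log(nm))$-bit counter that tracks the current update index $i \le m$ and round within its persistence window $t \le \poly(n)$ is enough to simulate the network in one streaming pass. This already gives $O(S+\log(nm))$ space in the deterministic case.

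For randomized SNNs the obstruction is twofold: there can be $\Omega(nS+S^2)$ randomly weighted edges sampled once at setup, and each probabilistic neuron may flip a coin every round. Storing all of this explicitly is prohibitive. My plan is to generate every random bit used by the network on the fly from a short seed, using a pseudorandom generator for space-bounded computation (Nisan's PRG or a suitable variant). To apply such a PRG, I model the whole simulation as a read-once, space-bounded branching program whose input is a canonical random tape: first a block encoding each discretized edge weight (each to $O(\log n)$ bits of precision, indexed by endpoint-pair), then per-round spiking coins in round-major, neuron-major order. At every step, the program needs only the current firing vector, the $O(\log(nm))$-bit round/stream counter, and a constant amount of scratch to compute the threshold rule. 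Thus the simulation is a space-$O(S+\log(nm))$ branching program reading $R=\poly(n,m)$ random bits. Replacing its true random tape with a pseudorandom one from a short seed preserves its output distribution up to $1/\poly(n,m)$ total-variation distance, which is absorbed into the success guarantee.

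The main obstacle will be enforcing the read-once, space-bounded structure that the PRG requires. The tricky part is that a single edge weight may be "consulted" in many different rounds, so the pseudorandom bits representing that weight must be reproducible deterministically from the seed plus the edge's index each time it is needed, rather than sampled afresh. I plan to handle this by dedicating disjoint pseudorandom blocks to edge-indices and to (neuron, round) pairs, and by committing to a fixed scanning order across time and neurons so that, within each round, the weights and coins relevant to the current transition are produced by deterministic queries into the expanded pseudorandom string. Inside a round, the firing-state update can then be computed neuron-by-neuron using only $O(S)$ workspace (accumulating each neuron's weighted sum and comparing to its threshold), and weights that have already been generated in a previous round can be regenerated identically from the seed whenever the branching program revisits them.

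Once this branching-program view is in place, the remaining steps are standard: apply the PRG to obtain a pseudorandom substitute for the random tape, store the seed together with the $S$-bit firing vector and the $O(\log(nm))$-bit counter, and use the union bound over the $\poly(n,m)$ simulation rounds and the PRG's fooling error to conclude that the simulation succeeds with probability $1-1/\poly(n,m)$. This yields the claimed $O(S+\log(nm))$ overall space bound.
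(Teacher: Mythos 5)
Your deterministic reduction is correct and matches the paper's Observation. Your high-level plan for the randomized case — replace the true random tape with a pseudorandom one generated from a short stored seed — is also the paper's strategy. But the specific PRG you invoke does not quite do the job, and the issue you flag as "the main obstacle" is in fact a genuine gap that your proposed fix does not close.

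You propose to model the simulation as a read-once, space-bounded branching program over the random tape and then apply Nisan's PRG for space-bounded computation. The problem is precisely the one you notice: each randomly chosen edge weight is consulted in every round in which its tail neuron fires, so the branching program re-reads the same tape positions many times. Your fix — dedicating disjoint blocks to edge indices and "regenerating identically from the seed whenever the branching program revisits them" — describes how the \emph{streaming algorithm} avoids storing the weights, but it does not change the fact that, viewed as a function of the random tape, the distinguisher is a multi-read branching program. Nisan's PRG is proved only against read-once (oblivious, fixed-order) branching programs; it does not in general fool multi-read ones, and no reordering trick (of the kind that works for order-oblivious linear sketches, as in Indyk's derandomization) is available here because an SNN's state evolution is not order-oblivious. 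A secondary issue is quantitative: Nisan's PRG has seed length $\Theta(\log^2)$, so even if the read-once obstacle were surmountable you would get space $O(S+\log^2(nm))$ rather than the claimed $O(S+\log(nm))$.

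The paper sidesteps both problems by using a different pseudorandom object: a non-explicit $(t,\epsilon)$-PRG against all circuits of size $t$, with seed length $O(\log t + \log 1/\epsilon)$ (Proposition~\ref{prop:PRG}). The entire simulation of $\cN$ over the stream, with all its repeated reads of the edge-weight bits, is a single $\poly(n,m)$-size circuit, and such a PRG fools it by definition — no read-once structure is needed, and the seed length is $O(\log(nm))$ as required. The price is that the PRG is non-constructive, so the resulting streaming algorithm may have exponential update time; this is why the paper's conclusion lands in $\Streamrand$ rather than $\PStreamrand$. If you want to keep your branching-program framing you would need either a PRG for multi-read space-bounded programs (a much stronger and not generally available object) or a structural argument showing that the SNN simulation can be reorganized into a genuinely read-once computation over the tape, neither of which your proposal supplies.
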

\vspace{-5pt}
Theorem \ref{lem:snn-rand-low-spaceIntro} is a powerful tool, since it lets us apply any streaming space lower bound (of which there are many) to give an SNN lower bound, with a loss of an $O(\log (nm))$ factor. In some cases, we can avoid this loss by more directly considering the lower bound technique. This is obtained when the streaming lower bounds are derived via a reduction to communication complexity with shared randomness that can be applied in the SNN setting with no loss. 
For example, using this tighter approach we show that our neural network for the distinct elements problem is nearly space-optimal (see Section~\ref{sec:distinct}).

%\vspace{-10pt}
\subsection{Preliminaries}%\vspace{-5pt}

\paragraph{Spiking neural networks.}
A deterministic neuron $u$ is modeled by a \emph{deterministic} threshold gate. 
Letting $b(u)$ to be the threshold value of $u$, then $u$ outputs $1$ if the weighted sum of its incoming neighbors exceeds $b(u)$. A \emph{spiking neuron} is modeled by a \emph{probabilistic} threshold gate, which fires with a sigmoidal probability that depends on the difference between its weighted incoming sum and its threshold $b(u)$. 
\indent A \emph{Neural Network} (NN) $\cN =\langle X,Z,Y, w,b \rangle$ consists of $n$ input neurons $X=\{x_1, \ldots, x_{n}\}$, $m$ output neurons $Y=\{y_1, \ldots, y_{m}\}$, and $k$ auxiliary neurons $Z = \{z_1,...,z_{k} \}$. In spiking neural networks (SNN), the neurons can be either deterministic threshold gates or probabilistic threshold gates. The directed weighted synaptic connections between $V=X \cup Z \cup Y$ are described by the weight function $w: V \times V \rightarrow \mathbb{R}$. %Throughout $w(u,v)=\poly(n) for every $(u,v)$
A weight $w(u,v) =0$ indicates that a connection is not present between neurons $u$ and $v$.  Finally, for any neuron $v$, the value $b(v) \in \mathbb{R}$ is the bias value (activation threshold). 
Additionally, each neuron is either inhibitory or excitatory:  if $v$ is inhibitory, then $w(v,u)\leq 0$ and if $v$ is 
excitatory, then $w(v,u)\geq 0$ for every $u$. This restriction arises from the biological structure of the neurons. 

\noindent\textbf{Network dynamics.}
The network evolves in discrete, synchronous rounds as a Markov chain. 
The firing status of every neuron $u$ in round $\tau$ denoted as $\sigma_\tau(u)$, depends on the firing status of its neighbors in round $\tau-1$. %via a standard sigmoid function. 
For each neuron $u$, and each round $\tau \ge 0$, let $\sigma_{\tau}(u)=1$ if $u$ fires (i.e., generates a spike) in round $\tau$. For every neuron $u$ and every round $\tau \ge 1$, let 
$\pot(u,\tau)= \sum_{v \in V}w(v,u)\cdot \sigma_{\tau-1}(v) -b(u)$ denote the membrane potential at round $\tau$. A deterministic threshold gate $u$ fires in round $\tau$ iff $\pot(u,\tau)\geq 0$. A probabilistic threshold gate fires in round $\tau$ with a probability that depends on  $\pot(u,\tau)$. All our network constructions in this work use deterministic threshold gates, and the randomness of the network comes from the randomized selection of the edge weights.

\noindent{\textbf{Neural networks for data-stream problems.}} A data-stream problem is defined by a relation $P_n \subset \mathbb{Z}^n \times \mathbb{Z}$. The length of the stream is upper bounded by some integer $m$. Each data-item is represented by a binary vector of length $n$. A value $i \in [1,n]$ is represented by having the $i^{th}$ input neuron fire while all other input neurons are idle. Each input is presented for some persistence time, at the end of which the output neurons of the network encode (in binary) the evaluation of a given relation over the current stream. To avoid cumbersome notation, we may assume that $m$ and $n$ are powers of $2$.

%\vspace{-10pt}
\subsection{Basic Tools} \label{sec:tools}
%\vspace{-5pt}
Our constructions are based on several neural network modules. We start by describing existing tools we will be using, and then describe additional new tools. 

\paragraph{Neural timers and counters.}
%In the neural timer problem, given a time parameter $t$, it is required to design a succinct timer, activated by the firing of its input neuron, that alerts when exactly $t$ rounds have passed.
%In the \emph{neural timer} problem, given an input neuron $x$, an output neuron $y$, and a time parameter $t$, it is required to design a neural network such that any firing of $x$ in a given round invokes the firing of $y$ for the next $t$ consecutive rounds. 
For a given time parameter $t$, a neural timer $\mathcal{NT}_t$ is an SNN network that consists of an input neuron $x$, an output neuron $y$, and additional auxiliary neurons. The network satisfies that in every round $\tau$, $y$ fires in round $\tau$ iff there exists $\tau' \in [\tau - t,\tau]$ such that $x$ fired in round $\tau'$.  
It is fairly easy to design a neural timer network with $O(t)$ auxiliary neurons. \cite{HitronP19} presented a construction of a more succinct network $\mathcal{NT}_t$ with only $O(\log t)$ neurons.
In the related setting of neural counting, the network is required to encode the \emph{number} of firing events of its input neuron within a given time window.  Specifically, given a time parameter $t$, a \emph{neural counter network} $\mathcal{NC}_t$ has a single input neuron $x$, and $\lceil \log t \rceil$ output neurons that encode the number of firing events of $x$ within a span of $t$ rounds. 
\begin{fact}\label{fc:neural-counter}\cite{lynch2019integrating,HitronP19}
For every integer parameter $t$, there exist (i) a neural timer network $\mathcal{NT}_t$ with $O(\log t)$ neurons, and (ii) a neural counter network $\mathcal{NC}_t$ with $O(\log t)$ auxiliary neurons, such that for every round $i$, the output neurons encode $f_i$ by round $i+O(\log t)$ where $f_i$ is the number of firing events up to round $i$. Both networks $\mathcal{NT}_t$ and $\mathcal{NC}_t$ are deterministic.
\end{fact}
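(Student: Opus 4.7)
The plan is to prove both claims by constructing a standard $\lceil \log t \rceil$-bit binary counter using deterministic threshold gates, and then to build the timer as a small wrapper around this counter. For $\mathcal{NC}_t$, I would represent the current count in binary by neurons $b_0, b_1, \ldots, b_{\lceil \log t \rceil - 1}$, each equipped with a self-loop of weight $1$ and threshold just below $1$, so that once $b_i$ spikes it continues to spike indefinitely unless actively reset; silent $b_i$ encodes $0$ and spiking $b_i$ encodes $1$. Incrementing on a spike of $x$ is implemented by a ripple-carry gadget: a carry neuron $c_i$ realizing the conjunction $x \wedge b_0 \wedge \cdots \wedge b_{i-1}$ (one deterministic threshold gate of fan-in $i{+}1$) together with a toggle gadget of $O(1)$ neurons per bit that flips $b_i$ exactly when $c_i$ fires. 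All bits and gadgets together use $O(\log t)$ neurons. Depending on whether the AND-chain is computed in one round (wide fan-in) or propagated one level per round, the update latency is either $O(1)$ or $O(\log t)$ rounds, both consistent with the stated delay of $i + O(\log t)$.

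For the timer $\mathcal{NT}_t$, the plan is to reuse essentially the same gadget but in decrement/reload mode. A spike of $x$ hard-loads the binary encoding of $t$ into $(b_0, \ldots, b_{\lceil \log t \rceil - 1})$ via a single layer of excitatory and inhibitory synapses of sufficient weight (excitatory to the bits of $t$ that should become $1$, strong inhibitory to the rest). In every round in which $x$ is silent, a symmetric borrow-propagation gadget decrements the counter by one. The output neuron $y$ is a single OR, realized by a threshold gate with weight $1$ from each bit and threshold $1/2$, so $y$ fires in round $\tau$ iff the counter is nonzero, i.e., iff fewer than $t$ rounds have elapsed since the last spike of $x$. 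This matches the specification that $y$ fires in $\tau$ iff $x$ spiked in some $\tau' \in [\tau - t, \tau]$.

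The step I expect to require the most care is the synchronization argument during carry/borrow propagation. In the synchronous SNN model the value of each neuron in round $\tau+1$ is a function only of values in round $\tau$, so one must verify that the toggle and carry gadgets correctly implement $b_i^{(\tau+1)} = b_i^{(\tau)} \oplus \bigl(x^{(\tau)} \wedge \bigwedge_{j<i} b_j^{(\tau)}\bigr)$ despite everything happening in parallel, and that a fresh spike of $x$ arriving while a previous ripple is still in flight does not corrupt the state. I would formalize this by an induction on rounds, maintaining the invariant that by round $i + c \log t$, for a sufficient constant $c$, the bits $(b_0, \ldots, b_{\lceil \log t \rceil -1})$ encode $f_i \bmod 2^{\lceil \log t \rceil}$; the same invariant, transcribed for the decrement circuit, gives the timer guarantee. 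Since every gate used is a deterministic threshold gate with integer weights, both networks are deterministic, completing the two claims.
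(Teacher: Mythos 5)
This Fact is cited from \cite{lynch2019integrating,HitronP19}; the present paper states it as a black box and gives no proof, so there is no ``paper's own argument'' to match against. Judged on its own terms, your plan has the right broad shape---a $\lceil\log t\rceil$-bit binary register is indeed the only way to get $O(\log t)$ neurons, and the reload-and-count timer is a reasonable companion---but the step you flag as ``requiring the most care'' is in fact where the construction lives or dies, and as written it is not yet an argument.

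Two concrete gaps. First, $b_i \oplus c_i$ is not a threshold function, so the ``toggle gadget of $O(1)$ neurons'' necessarily has $\Omega(1)$ latency, and the ripple-carry chain, if computed one level per round, gives a full increment latency of $\Theta(\log t)$. Your stated invariant (``by round $i + c\log t$ the bits encode $f_i \bmod 2^{\lceil\log t\rceil}$'') does not by itself show that two in-flight ripples cannot interact destructively: if $x$ fires again before a previous carry has settled, the carry neuron $c_j$ reads a mixture of old and new bit values and the register can lose or double-count a spike. You assert this does not happen but offer no mechanism preventing it. The honest fix is either to assume an inter-spike gap of $\Omega(\log t)$ rounds (which is consistent with how the counter is actually invoked in this paper, where every input has persistence time $\widetilde\Omega(1)$, but should be stated), or to redesign so that each bit's update depends only on \emph{local} information available in a single round. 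The cited constructions take essentially the latter route: they build a frequency-divider chain in which bit $i$ flips on the falling edge of bit $i-1$, so there is no global carry wavefront to keep coherent across rounds; the $O(\log t)$ in the delay bound is then the depth of the chain rather than a convergence time that must be amortized against the input rate.

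Second, the timer. The definition in the paper is exact: $y$ fires in round $\tau$ iff $x$ fired in some $\tau' \in [\tau - t,\tau]$. Your reload-and-decrement register decrements once per round with a borrow chain whose latency is again $\Theta(\log t)$; since a new decrement is issued every round, borrow waves never get a quiet window to settle, and the ``nonzero'' OR test is being read off a register that is permanently mid-ripple. The construction in \cite{HitronP19} sidesteps this by \emph{counting up} the rounds since the last reset with a divider chain (each bit has a locally determined period $2^i$, so no cross-bit borrow is needed) and comparing against $t$; that is a genuinely different design from your decrementer, and it is different precisely because the decrementer's synchronization issue is real. Your proposal would be salvageable if you replaced the decrement register with this self-clocking chain, or weakened the timer guarantee to allow an additive $O(\log t)$ slack in $t$ and argued that the borrow register's OR can only be late in turning off, never early---but as it stands the correctness argument for both the counter under rapid input and the timer under once-per-round decrement is a gap, not a technicality.
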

%\vspace{-5pt}
%
\paragraph{Maximum computation.}
Given a neural representation of $m$ elements $x_1, \ldots, x_m$ in $[n]$, it is required to design a neural network that computes the maximum value $x^*=\max_i x_i$. The network has an input layer of $m\cdot \log n$ neurons that represent the elements $x_1, \ldots, x_m$, and an output layer of $\log n$ neurons that should encode the value of the maximum value $x^*$.
Maass~\cite{maass2000computational} presented a network construction with $O(m^2+m \cdot \log n)$ auxiliary neurons. 
In the high level, in this network for every pair of elements $x_i,x_j$, there is a designated comparison neuron $c_{i,j}$ which fires if and only if $x_i \geq x_j$. The output is then computed using additional $m$ neurons $g_1, \ldots g_m$ where $g_i$ fires if and only if all the comparison neurons $c_{i,1}, \ldots, c_{i,m}$ fired. We have:
%\vspace{-5pt}
\begin{fact}\label{fc:max}\cite{maass2000computational}
Given $m$ vectors of neurons $\overbar{x}_1, \ldots, \overbar{x}_m$ each of size $\log n$, there exists a deterministic neural network with $\log n$ output neurons $\overbar{y}$ and $O(m^2+m \log n)$ auxiliary neurons, such that if the input neurons encode the values $x_1, \ldots , x_m$ in round $t$, the output neurons $\overbar{y}$ represents the value $\max_{i}x_i$ by round $t+O(1)$.
\end{fact}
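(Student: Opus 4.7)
The plan is to instantiate the three-layer structure already sketched in the excerpt, showing that each layer fits within the claimed $O(m^2+m\log n)$ budget and that the whole module converges in $O(1)$ rounds. The key observation that makes this budget tight is that a \emph{single} deterministic threshold gate already suffices to compare two $\log n$-bit numbers, so the pairwise-comparison layer costs only $O(m^2)$ neurons rather than the naive $O(m^2\log n)$.

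Concretely, for each ordered pair $(i,j)$ I would build the comparison neuron $c_{i,j}$ as follows. Let $b_{i,0},\ldots,b_{i,\log n-1}$ denote the bit-neurons encoding $x_i$. Connect each $b_{i,k}$ into $c_{i,j}$ with weight $+2^k$ and each $b_{j,k}$ with weight $-2^k$, and set the bias of $c_{i,j}$ to $0$. The weighted incoming sum into $c_{i,j}$ is then exactly $x_i-x_j$, so $c_{i,j}$ fires one round later iff $x_i\ge x_j$. The indicator neuron $g_i$ is a single threshold gate of bias $m-\tfrac12$ fed with unit weights from $c_{i,1},\ldots,c_{i,m}$, so $g_i$ fires iff $x_i\ge x_j$ for every $j$, i.e., iff $x_i$ attains the maximum.

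For the output layer, I would insert, for each pair $(i,k)$, an AND gate $a_{i,k}$ of bias $\tfrac32$ with unit weights from $g_i$ and (a latched copy of) $b_{i,k}$; then let each output bit $y_k$ be a single OR gate of bias $\tfrac12$ taking unit inputs from $a_{1,k},\ldots,a_{m,k}$. If several indices tie for the maximum then several $g_i$'s fire, but all tied indices share the same $x_i$ and therefore the same bits, so the OR still decodes the correct value of $\max_i x_i$. Counting up: $m^2$ comparison neurons, $m$ indicator neurons, $m\log n$ AND gates and $\log n$ output neurons, yielding $O(m^2+m\log n)$ auxiliary neurons and constant depth as claimed.

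The main subtlety, and where I expect the argument to require the most care, is the excitatory/inhibitory restriction: a single input bit $b_{i,k}$ is not allowed to emit both the positive weight into $c_{i,j}$ and the negative weight into $c_{j,i}$. I would resolve this by adding, for each input bit, one inhibitory shadow neuron $\overbar{b}_{i,k}$ (a threshold gate of bias $\tfrac12$ with a positive incoming edge of weight $1$ from $b_{i,k}$), and routing all the negative $-2^k$ connections through the shadows instead of the original inputs. This adds only $O(m\log n)$ auxiliary neurons and one extra round of latency, so both the budget and the $O(1)$-round guarantee are preserved. An analogous one-neuron latch per input bit ensures $b_{i,k}$ is still available when the $a_{i,k}$ gates read it; since every stage fires exactly one round after its predecessor, the module finishes within a constant number of rounds.
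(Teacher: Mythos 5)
Your construction matches the paper's sketch of the Maass construction point for point: pairwise comparison neurons $c_{i,j}$ (realized as single threshold gates with weights $\pm 2^k$ on the corresponding bit neurons), indicator neurons $g_i$ that AND the row $c_{i,1},\ldots,c_{i,m}$, and an AND-then-OR decoder to project the winner's bits onto the output, together with inhibitory shadow copies to respect the sign-consistency constraint. This is essentially the same argument the paper attributes to Maass, with the weight assignments and the excitatory/inhibitory fix made explicit; no gaps.
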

Upon very small modifications, the same network solution can be adapted to compute the minimum and the median elements.
We next describe \emph{new} tools introduced in this work which will be heavily used in our constructions.
%Missing proofs are deferred to Appendix \ref{sec:miss}.

\paragraph{Potential encoding.}
Our SNN constructions are based on a module that encodes the potential $p$ of a given neuron $x$ by its binary representation using $O(\log p)$ neurons. We will use this modules in the constructions of Theorem~\ref{lem:linear-sketch} and Lemma~\ref{lem:hash}. 
\begin{lemma} \label{clm:binary-potential}
Let $x$ be a deterministic neuron such that $\pot(x,t')\leq 2^{\ell}$ for every $t' \in [t, t+O(\ell)]$ for some integer $\ell \in \mathbb{N}_{>0}$. 
There exists a deterministic network $\POT_{\ell}(x)$ which uses $\ell$ identical copies of $x$ (with the same input and bias), $2 \ell$ auxiliary neurons, and $\ell$ output neurons $y_0 \ldots y_{\ell-1}$ that encodes $\pot(x,t)$ in a binary form within $O(\ell)$ rounds. 
\end{lemma}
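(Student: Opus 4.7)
The plan is to extract the binary digits of $p := \pot(x,t)$ one at a time, from the most significant down to the least significant, using the $\ell$ copies of $x$ as shifted threshold gates. Write $p = \sum_{i=0}^{\ell-1} b_i 2^i$ with $b_i \in \{0,1\}$; the goal is that the output neuron $y_i$ latches onto $b_i$.

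\textbf{Construction.} Name the copies $x_0,\ldots,x_{\ell-1}$. Each inherits the incoming weights and bias of $x$ verbatim, so its baseline potential equals $\pot(x,\cdot)$. I would introduce one always-on auxiliary neuron $a$ (realized via a self-loop of weight $1$ and bias $0$, kicked on once), together with one auxiliary buffer per bit used for output synchronization. The added synapses are: from $a$ to $x_i$ with inhibitory weight $-2^i$ (supplying the threshold shift), and from each $x_j$ with $j > i$ to $x_i$ with inhibitory weight $-2^j$ (cancelling the contributions of the higher, already-decoded bits). This uses $1 + \ell$ auxiliary neurons, well within the $2\ell$ budget.

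\textbf{Correctness by backward induction.} I would prove that $x_i$ fires stably from round $t+(\ell-i)$ onwards iff $b_i = 1$. For the base case $i=\ell-1$: in round $t+1$ the effective potential of $x_{\ell-1}$ is $p - 2^{\ell-1}$, which is $\geq 0$ iff $b_{\ell-1}=1$. For the inductive step, assuming $b_{\ell-1},\ldots,b_{i+1}$ are already latched by round $t+(\ell-i)-1$, the effective potential of $x_i$ in round $t+(\ell-i)$ equals $p - \sum_{j>i} b_j 2^j - 2^i = \sum_{j \leq i} b_j 2^j - 2^i$, which is $\geq 0$ iff $b_i = 1$, using the binary-expansion fact that $\sum_{j \leq i} b_j 2^j < 2^{i+1}$. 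The stability hypothesis $\pot(x,t') \leq 2^\ell$ throughout $[t, t + O(\ell)]$ ensures that the baseline potentials of all copies stay equal to $p$ during this downward ripple, so each bit remains latched once set.

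\textbf{Output stage and main obstacle.} The output neurons $y_0,\ldots,y_{\ell-1}$ either mirror the corresponding $x_i$ directly (with a one-round delay) or are driven through the auxiliary buffers, so that from round $t + O(\ell)$ onward they simultaneously encode the binary representation of $p$. The main technical subtlety is the timing alignment: because correctly latched bits propagate one copy per round, the induction must carefully track when each $x_i$'s higher-order dependencies become available, and the stability hypothesis is precisely what keeps the baseline potential driving the copies fixed throughout this $O(\ell)$-round ripple.
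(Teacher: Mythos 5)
Your high-level plan — ripple from the most significant bit down, subtracting off already-decoded higher bits and comparing against a shifted threshold — is exactly the paper's strategy. But there is a genuine gap in the construction: you route the inhibition \emph{directly} from $x_j$ to the lower copies $x_i$ ($j > i$) with weight $-2^j$. In the SNN model of this paper, each neuron must be purely inhibitory or purely excitatory, meaning all its outgoing weights share a sign. The copies $x_j$ must be excitatory, because $y_j$ has to fire iff $x_j$ fires (a positive, not inverted, relationship), and so $x_j$'s outgoing edge to $y_j$ (or to any buffer driving $y_j$) must be positive. Consequently $x_j$ cannot also carry negative edges to $x_0,\ldots,x_{j-1}$. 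This is not a cosmetic issue: it is precisely why the paper's construction introduces $\ell-1$ additional \emph{inhibitory relay} neurons $v_1,\ldots,v_{\ell-1}$, where $v_j$ fires one round after $x_j$ and carries the negative weight $-2^j$ to the lower copies. That relay accounts for both the $2\ell$ auxiliary-neuron budget (which your count of $1+\ell$ undershoots) and the factor-of-two in the $O(\ell)$ timing: each bit takes two rounds to propagate downward (copy $\to$ relay $\to$ lower copies), so $y_{\ell-i-1}$ stabilizes by round $t_0+2i$, not $t_0+(\ell-i)$.

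Two further, smaller points. First, your single always-on inhibitor $a$ with per-target weights $-2^i$ is a legitimate economy over the paper's $\ell$ separate always-on neurons $r_0,\ldots,r_{\ell-1}$, but the paper uses the offset weight $-(2^i - \tfrac{1}{2})$ rather than $-2^i$; this matters because the lemma does not assume $\pot(x,t)$ is an integer (the proof explicitly writes $\lfloor \pot(x,t)\rfloor$), and the $+\tfrac12$ pushes the comparison strictly to one side of the firing threshold. Second, once you add back the $\ell-1$ relay neurons, the "one auxiliary buffer per bit" for output synchronization is no longer within budget; in the paper the outputs $y_i$ are simply one-round-delayed copies of the $x_i$ and are counted as the $\ell$ output neurons, not as auxiliaries.
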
 
\begin{proof}[Proof of Lemma \ref{clm:binary-potential}]
We begin with describing the network. 
\begin{itemize}
\item The input to the network are  $\ell$ excitatory copies of $x$ denoted as $x_0, \ldots, x_{\ell-1}$. Each $x_i$ has all the incoming edges and bias as the neuron $x$. 
\item There are $\ell$ inhibitory neurons $r_0, \ldots r_{\ell-1}$ each with bias $-1$ and no incoming edges. Hence, these inhibitory neurons keep on firing on every round. Every neuron $r_i$ is connected to $x_i$ with weight $w(r_i,x_i)=-(2^i-1/2)$.
\item There are additional $\ell-1$ inhibitory neurons $v_1, \ldots, v_{\ell -1}$. Each $v_i$ has an incoming edge from $x_i$ with weight $w(x_i,v_i)=1$ and bias $b(v_i)=1$. Additionally, every $v_i$ has $i-1$ outgoing edges to $x_0, \ldots x_{i-1}$ with weight $w(v_i,x_j)=-2^i$. 
\item Let $y_0, \ldots, y_{\ell-1}$ be the output neurons of the network. Each $y_i$ has an incoming edge from $x_i$ with weight $w(x_i,y_i)=1$ and bias $b(y_i)=1$. Hence, $y_i$ fires in some round $t$ iff $x_i$ fired in the previous round.
\end{itemize}
See Figure~\ref{fig:potential} for an illustration of the network.

\begin{figure}[h!]
\begin{center}
\includegraphics[scale=0.4]{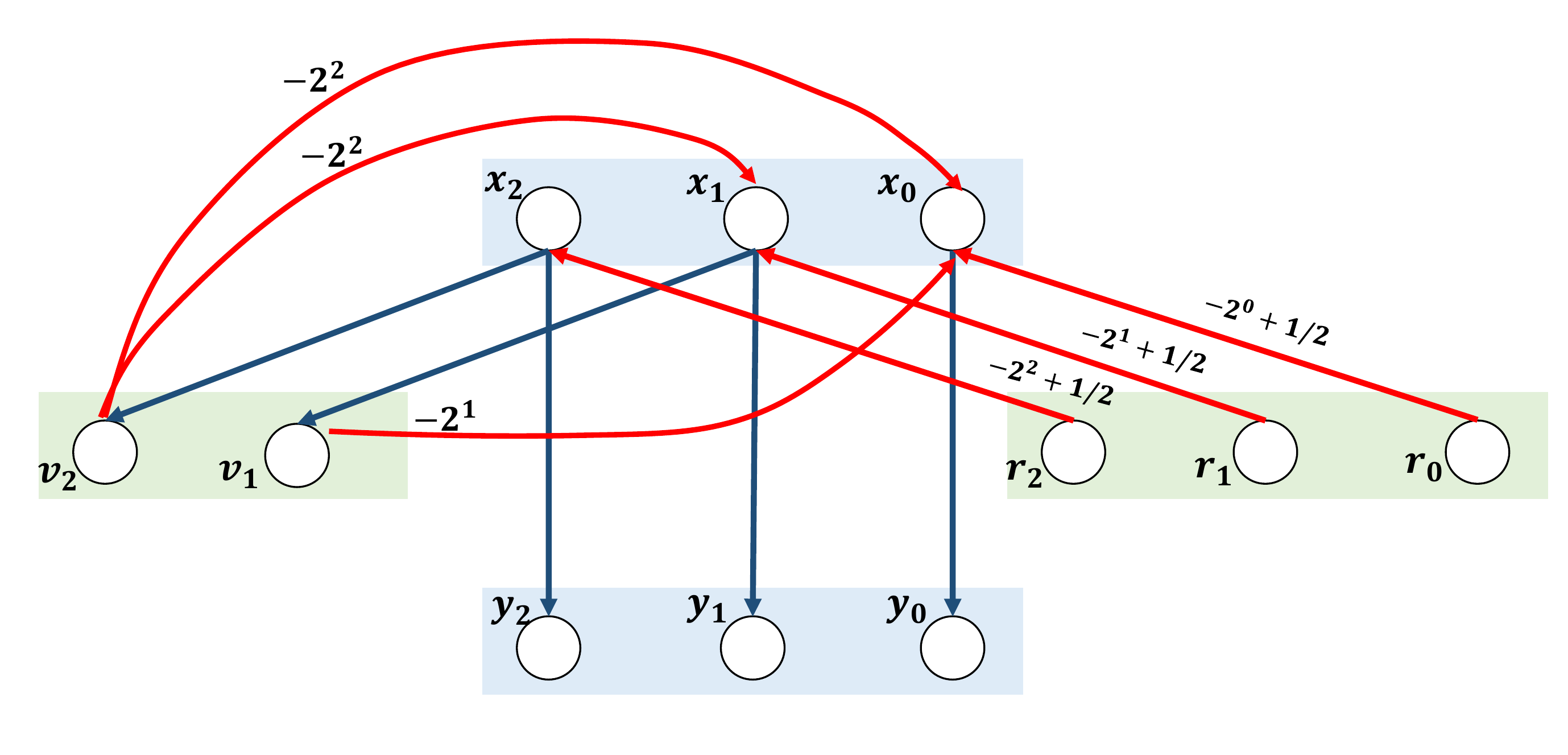}
\caption{ An illustration of the potential encoding module. \label{fig:potential}}
\end{center}
\end{figure}

\paragraph{Correctness analysis:}
Let $\lfloor \pot(x,t) \rfloor = \sum_{i=0}^{\ell-1}a_i\cdot 2^i$ be the potential of $x$ in some round $t_0$. We assume the input is persistent for at least $2\ell$ rounds. This implies that the potential of $x$ does not change for at least $2\ell$ rounds.
We will show by induction on $i$ that starting round $t_0+2\cdot i$ the output neuron $y_{\ell-i-1}$ fires iff $a_{\ell-i-1} = 1$. Base case: for neuron $x_{\ell-1}$, the only inhibitor inhibiting $x_{\ell-1}$ is $r_{\ell-1}$ with weight $-(2^{\ell-1}-1/2)$. Hence, if $a_{\ell-1}=1$ then starting round $t_0+1$ the potential of $x_{\ell-1}$ is at least $\sum_{i=0}^{\ell-1}a_i\cdot 2^i-2^{\ell-1}+1/2 \geq 1/2>0$. Thus, $x_{\ell-1}$ fires starting round $t_0+1$ and therefore $y_{\ell-1}$ fires starting round $t_0+2$. On the other hand, if $a_{\ell-1}=0$ then starting round $t_0+1$ the potential of $x_{\ell-1}$ is given by $\sum_{i=0}^{\ell-2}a_i\cdot 2^i-2^{\ell-1}+1/2 \leq (2^{\ell-1}-1)-2^{\ell-1}+1/2<0$ and therefore starting round $t_0+1$ the neuron $x_{\ell-1}$ is idle and $y_{\ell-1}$ does not fire starting round $t_0+2$. Assume the claim is correct for neurons $y_{\ell-1}, \ldots, y_{\ell-i}$ and consider neuron $y_{\ell-i-1}$. 

By the induction assumption by round $t_0+2\cdot i$ the neurons $y_{\ell-1}, \ldots y_{\ell-i}$ encode $a_{\ell-1}, \ldots a_{\ell-i}$. By the definition of the network, we conclude that for the inhibitors $v_{\ell-1}, \ldots v_{\ell-i}$, each $v_j$ fires starting round $t_0+2i$ iff $a_j=1$. Hence, the potential of $x_{i-1}$ in round $t_0+2i+1$ is equal to $\sum_{i=0}^{\ell-i-1}a_j\cdot 2^j - 2^{\ell-i-1}+1/2$. Therefore $x_{\ell-i-1}$ fires starting round $t_0+2i+1$ iff $a_{\ell-i-1}=1$ and $y_{\ell-i-1}$ encodes $a_{i-1}$ starting round $t_0+2(i+1)$.
\end{proof}
%}%\APPENDBINPOT 

\paragraph{Implementing pairwise independent hash functions.} %\label{sec:hash}
Many streaming algorithms in the insertion only model are based on the notion of pairwise independent hash functions. \vspace{-5pt}
\begin{definition}[Pairwise Independence Hash Functions]
A family of functions $\mathcal{H}:[a] \to [b]$ is \emph{pairwise independent} if for every $x_1 \neq x_2 \in [a]$ and $y_1, y_2 \in [b]$, we have:
$$\Pr[h(x_1) = y_1 \mbox{~and~} h(x_2) = y_2] = 1/b^2.$$
\end{definition}
\vspace{-5pt}
For ease of notation, assume that $a,b$ are powers of $2$.
\begin{definition}[Pairwise Independence Hash SNN]
Given two integers $a,b$, a \emph{pairwise independent hash network} $\mathcal{N}_{a,b}$ is an SNN with an input layer of $\log a$ neurons, an output layer of $\log b$ neurons, and a set of $s$ auxiliary spiking neurons. For every input value $x$ presented at round $t$, let $\mathcal{N}(x)$ be the value of the output layer after $\tau_{a,b}$ rounds. Then, for every $x \neq x' \in [a]$, it holds that $\Pr[\mathcal{N}(x)=\mathcal{N}(x')]=1/b$. 
\end{definition}\vspace{-5pt}
We show a neural network implementation of a pairwise independent hash function using the construction of pairwise hash function by \cite{carter1979universal}.
\vspace{-5pt}
\begin{lemma} [Neural Implementation of Pairwise Indep. Hash Function]\label{lem:hash}
For every integers $a,b$, there exists a pairwise independent hash network $\cN_{a,b}$ with $s=O(\log b \cdot \log \log a)$ auxiliary neurons and $O(\log \log a)$ persistence time. 
\end{lemma}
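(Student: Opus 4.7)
The plan is to implement the classical Carter--Wegman linear hash family over $\mathrm{GF}(2)$. Concretely, draw a uniformly random matrix $M \in \{0,1\}^{\log b \times \log a}$ and a uniformly random offset $v \in \{0,1\}^{\log b}$, and define $h(x) = Mx \oplus v$. Pairwise independence is standard: for any $x \neq x'$ we have $h(x) \oplus h(x') = M(x \oplus x')$, and since $x \oplus x' \neq 0$ while $M$ is uniform, $M(x \oplus x')$ is uniform over $\{0,1\}^{\log b}$. Combined with the independent uniform offset $v$, this yields $\Pr[h(x) = y, h(x') = y'] = 1/b^2$ for all $y, y'$.

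To realize $h$ in the SNN model, I bake the random bits of $M$ and $v$ into random edge weights and biases drawn once, at construction time, so that the resulting network is deterministic given its random parameters. The output coordinate $y_j$ must equal the parity of the subset of input bits selected by row $j$ of $M$, XORed with $v_j$. The main task is thus to compute $\log b$ parities in parallel, each over a subset of the $\log a$ input neurons.

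To compute a single parity I exploit the potential encoding module of Lemma~\ref{clm:binary-potential}. For each $j \in [\log b]$ I introduce a sum neuron $s_j$ with bias $-v_j$ and incoming edge weight $M_{j,i} \in \{0,1\}$ from input $x_i$. Its potential is $\pot(s_j) = \sum_i M_{j,i}\,\sigma(x_i) + v_j$, a non-negative integer bounded by $\log a + 1$. Applying $\POT_\ell(s_j)$ with $\ell = \lceil \log(\log a + 2)\rceil$ yields the binary representation of this potential in $O(\log \log a)$ rounds using $O(\log \log a)$ additional neurons (the $\ell$ copies of $s_j$ plus the $2\ell$ auxiliary and $\ell$ output neurons of the module). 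The least significant output bit of the module equals $\sum_i M_{j,i} x_i + v_j \pmod 2 = \bigoplus_i M_{j,i} x_i \oplus v_j$, which I take as the $j$-th output of $\cN_{a,b}$; the remaining high-order output bits of the module are discarded.

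Totalling over the $\log b$ output coordinates gives $O(\log b \cdot \log \log a)$ auxiliary neurons and persistence time $O(\log \log a)$, as required. Pairwise independence transfers from Carter--Wegman since the whole network is a deterministic function of $(M,v)$, and $(M,v)$ is drawn from the Carter--Wegman distribution. The main conceptual obstacle is that parity is notoriously hard for a shallow threshold circuit; the key observation is that the sum of the selected $\{0,1\}$-inputs is bounded by $\log a$ and hence fits in $O(\log \log a)$ bits, so extracting its low bit via the generic potential-to-binary module of Lemma~\ref{clm:binary-potential} is both natural and resource-efficient.
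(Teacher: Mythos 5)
Your proof is correct and matches the paper's construction: both implement the Carter--Wegman $\mathbb{F}_2$-linear hash by routing each row's $\{0,1\}$-weighted dot product into the potential of a single sum neuron, invoking the potential-encoding module of Lemma~\ref{clm:binary-potential} (with $\ell = O(\log\log a)$, since the sum is at most $\log a$), and reading off the least significant output bit as the desired parity. The only substantive difference is that you add a random offset vector $v$ via the biases to obtain full pairwise independence, whereas the paper's proof omits the offset (which still satisfies its stated collision-probability definition of a pairwise independent hash SNN); this is a minor, harmless variation on the same route.
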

\begin{proof}[Proof of Lemma \ref{lem:hash}] 
Our neural network implements the well-known construction of a pairwise independent hash function of \cite{carter1979universal}.
In this construction, the input is a binary vector of length $c=\log a$, and the output is a binary vector of length $d = \log b$. Letting $H$ be a binary $c \times d$ matrix sampled uniformly at random, define $h(x) = H\cdot x$, 
where the addition operations are defined over the field $\mathbb{F}_2$. 
\cite{carter1979universal} showed that for every $x \neq y$, $\Pr[h(x)=h(y)]=1/2^d=1/b$.

For each $i \in [\log b]$, the network connects all input neurons to one intermediate neuron $r_i$ with random weights in $\{0,1\}$ and bias $b(r_i)=0$. Hence, the potential of $r_i$ is equal to the multiplicity of $\overbar{x}$ with a binary random vector. Next, the network extracts the potential of $r_i$ using the sub-network $\POT(r_i)$ defined in Lemma~\ref{clm:binary-potential}. For that purpose, it introduces $O(\log \log a)$ copies of the neuron $r_i$. Next, in order to compute the addition in $\mathbb{F}_2$, the network computes the parity of the potential of $r_i$ by connecting the least significant bit of the output of the sub-network $\POT(r_i)$ to the $i^{th}$ output neuron $y_i$. 
The correctness of the constructed network follows from Lemma~\ref{clm:binary-potential}.
\end{proof}
%}%\APPENDHASH
\section{Linear Sketching}\label{sec:Lin-Sketch}%\vspace{-8pt}
A linear sketching algorithm is a streaming algorithm in which the state of the algorithm at time $t$ is a linear function of the updates seen up to time $t$. We start with a formal definition.

\begin{definition}[Linear Sketching Algorithm, \cite{kapralov2020fast}]
A linear sketching algorithm $\mathcal{L}$ gives a method for processing a vector $\overbar{x} \in \mathbb{R}^n$.
The algorithm is characterized by a (typically randomized) sketch
matrix $A \in \mathbb{R}^{r \times n}$, and by a possibly randomized decoding function $f : \mathbb{R}^r \to O$
where $O$ is some output domain. Algorithm $\mathcal{L}$ is executed by first computing $A \cdot \overbar{x}$ and then outputting 
$f(A \cdot \overbar{x}$). Note that $f$ only takes $A \cdot \overbar{x}$ as input, $f$ cannot depend on $A$ in any other way, e.g. it cannot share randomness with $A$.
\end{definition}
Linear sketching algorithms provide the state-of-the-art space bounds for a large collection of problems in the turnstile model. 

%\merav{Yael, in the below  $A \in \mathbb{Z}^{n \times r}$ but above  $A \in \mathbb{R}^{r \times n}$.} 
\vspace{-10pt}
\paragraph{The challenge and our approach.} %\merav{ Cameron, I added this next sentence, is it OK?} 
Throughout we assume the sketching matrix is integral, i.e., $A \in \mathbb{Z}^{r \times n}$, which captures most of the classic implementations in the turnstile model.
We start by describing a straw man approach for computing the value $A \overbar{x}$ in the neural setting: Take a single-layer neural network with an input layer of length $n+1$ and an output layer of length $r$. Specifically, the input layer contains $n$ neurons $x_1,\ldots,x_n$ that represent the absolute value of the update, and an additional \emph{sign} neuron that indicates the sign of the update. For example, an update vector $[0,0,-1,0]$ is represented by letting $x_3=1$, $s=1$ and $x_1,x_2,x_4=0$. The output layer is defined by $r$ output neurons $y_1,\ldots, y_r$. The edge weights are specified by the matrix $A$ where $w(x_j,y_i)=A_{i,j}$. It is then easy to verify that the weighted sum of the incoming neighbors of each neuron $y_j$ (i.e., its potential) is the value of the $j^{th}$ entry in the vector $A\overbar{x}$. 

This naive description fails for various reasons. First, from a biological perspective, each input neuron can be either inhibitory or excitatory. This implies that the sign of the outgoing edge weights of a given neuron must be either a plus (excitatory) or a minus (inhibitory). Mathematically, this requires the sketch matrix $A$ to be sign-consistent (i.e., the sign of all entries in a given raw are either a plus or a minus). 
However, in general, the given sketch matrix might not be sign-consistent.
The second technicality is that the neurons $y_1,\ldots,y_n$ have a \emph{binary} output (either firing or not) rather then an \emph{integer} value. The third aspect to take into account is concerned with the update mechanism. Specifically, given a stream of data items, one should make sure that each data item would be processed exactly \emph{once} by the network. This requires a more delicate update mechanism. 

In the high-level, we handle the sign-consistency challenge by dividing the sketch matrix $A$ into a non-negative matrix $A^{+}$ and a non-positive matrix $A^{-}$ where $A = A^{+}-A^{-}$. Then, given a new update $(\overbar{x}, s)$, the network computes $A\overbar{x}$ and $-A\overbar{x}$ using $A^{+}\overbar{x}$ and $A^{-}\overbar{x}$. The final output $A\overbar{x}$ is computed by using these values combined with the sign neuron $s$. 
To handle the second challenge,  we use the module of Lemma~\ref{clm:binary-potential} to translate the \emph{potential} of each output neuron $y_j$ (corresponding to the $j$'th bit in the sketch) into its binary representation. The output layer consists of $O(r\log n)$ output neurons that encode the value of the current $r$-length sketch.
%The complete implementation details are in Sec. \ref{app:Lin-Sketch}.

%\def\APPENDSKETCMATRIX{
\paragraph{Network Description.} 
Let $(\overbar{x}=(x_1, \ldots ,x_n),s)$, be the input neurons of the linear sketch network $\cN$, where $\overbar{x}$ represents the current update and $s$ represents the update sign, firing if the update is negative. Let $A \in \mathbb{Z}^{r \times n}$ be the sketch matrix of the sketching algorithm to be implemented. We denote the multiplication of the input vector $x'$ represented by $(\overbar{x},s)$ and $A$ by $A \circ (\overbar{x},s)$. This notation is needed since $\overbar{x}$ represents only the absolute value of the update. 
The output layer of the network consists of $r$ vectors $\overbar{y}_1, \ldots , \overbar{y}_r$ and $r$ sign-neurons 
$s_1, \ldots s_r$. Each vector $\overbar{y}_j$ contains $O(\log n)$ neurons that are used to encode in binary the absolute value of the $j^{th}$ entry in the output sketch. The sign of this entry is specified by the sign neuron $s_j$.  

In order for the output neurons to continue presenting the correct value throughout the execution, all output neurons $\overbar{y_1}, \ldots , \overbar{y_r}, s_1, \ldots s_r$ have self-loops with large positive weights. For each output vector $\overbar{y}_i$ with sign neuron $s_i$ the algorithm introduces an equivalent vector of inhibitory neurons $\overbar{y}'_i$ where each neuron $y'_{i,j}$ serves as an $\AND$ gate between $y_{i,j}$ and $s_i$. In addition, for each of the vectors $\overbar{y}_{i}$, $\overbar{y}'_{i}$, an inhibitor and excitatory copies are introduced, in which each neuron has the same incoming edges and biases as its corresponding neuron. These copies will assist us in case the new value after the current update will be negative. In our network description, all neurons, unless specified otherwise, are excitatory by default.
\vspace{-10pt}
\paragraph{(1) Matrix Multiplication.} Let $A^+$ (resp., $A^{-}$) be the matrix containing the non-negative (resp., non-positive) entries of $A$, where 

$$(A^{+})_{i,j} =
\begin{cases}
A_{i,j},& \text{if } A_{i,j} \geq 0 \\
0,& \text{Otherwise}~.
\end{cases} 
\mbox{~~~and~~~}
	(A^{-})_{i,j} =
	\begin{cases}
	-A_{i,j},& \text{if } A_{i,j} < 0 \\
	0,& \text{Otherwise}~.
	\end{cases}
	$$
Hence, both $A^{+}$ and $A^{-}$ are non-negative matrices and $A\overbar{x}=A^{+}\overbar{x}-A^{-}\overbar{x}$.
The network contains two vectors of neurons $a^{+}$ and $a^-$ of length $r$ that are connected to the input neurons with weights $w(x_i,a^+_{j})=A^{+}_{i,j}$,  $w(x_i,a^{-}_{j})=A^{-}_{i,j}$ for every $i \in \{1,\ldots, n\}$ and $j \in \{1,\ldots,r\}$.  For each neuron $a^+_{j}$ and each neuron $a^{-}_{j}$, there are $\lceil \log \ell \rceil$ copies, in order to describe their potential by a binary vector denoted as $z^+_{j}$, $z^{-}_{j}$. This can be done using the potential-encoding module of Lemma~\ref{clm:binary-potential}. In addition, each vector $z^+_{j}$, $z^{-}_{j}$ has an inhibitory vector copy $z'^+_{j}$, $z'^{-}_{j}$.
\vspace{-10pt}
\paragraph{(2) Computing $A\overbar{x}$ and $-A\overbar{x}$.}
In order to calculate $A\overbar{x}$ (and $(-A\overbar{x}$)), for each $i \in [r]$ the algorithm introduces two additional neurons $a_{p,i}, a_{n,i}$ such that the potential of $a_{p,i}$ equals $A\overbar{x}$, and the potential of  $a_{n,i}$ equals $(-A\overbar{x})$. Neuron $a_{p,i}$ has incoming edges from $z^+_i$ and the inhibitor vector $z'^{-}_{i}$ with weights $w(z^{+}_{i})_j,a_{p,i})= 2^{j-1}$ and $w((z'^{-}_{i})_j),a_{p,i})= -2^{j-1}$. Similarly, neuron $a_{n,i}$ has incoming edges from $z^{-}_{i}$ and the inhibitors $z'^{+}_{i}$ with weights $w((z^{-}_{i})_j,a_{n,i})= 2^{j-1}$ and $w((z'^{+}_{i})_j,a_{n,i})= -2^{j-1}$. 
	We then introduce $\log \ell$ copies for each $a_{p,i}$ and $a_{n,i}$, and extract their potential into binary vectors using Lemma~\ref{clm:binary-potential}. The output neurons of the sub-networks $\POT(a_{p,i})$ and $\POT(a_{n,i})$ together with the sign neuron $s$, are connected to four neural vectors $\overbar{d}_{p,i}$, $\overbar{d}'_{p,i}$, $\overbar{d}_{n,i}$, $\overbar{d}'_{n,i}$ as follows. 
	
The vector $\POT(a_{p,i})$ is connected to $\overbar{d}_{p,i}$ and $\overbar{d}'_{p,i}$, where for every $j \in [\log \ell]$, the $j^{th}$ neuron in $\overbar{d}_{p,i}$ fires only if the $j^{th}$ neuron in $\POT(a_{p,i})$ fire.
In addition, the $j^{th}$ neuron of $\overbar{d}'_{p,i}$ is an $\AND$ gate of 
$s$ and the $j^{th}$ neuron of $\POT(a_{p,i})$. In a similar manner, the vector $\POT(a_{n,i})$ is connected to the inhibitory neurons $\overbar{d}'_{n,i}$ where for every $j \in [\log \ell]$, the $j^{th}$ neuron in $\overbar{d}'_{n,i}$ fires only if the $j^{th}$ neuron in $\POT(a_{p,i})$ fire. The $j^{th}$ neuron of the excitatory neurons $\overbar{d}_{n,i}$ is an $\AND$ gate of $s$ and the $j^{th}$ neuron of $\POT(a_{n,i})$.
	
Note that because $(A\overbar{x})_i>0$ or maybe $(-A\overbar{x})_i>0$ but not both, for every coordinate $i$ either $\overbar{d}_{p,i}$, $\overbar{d}'_{p,i}$ contain firing neurons or $\overbar{d}_{n,i}$, $\overbar{d}'_{n,i}$ contain firing neurons but not both.
\vspace{-10pt}
\paragraph{(3) Computing $A \circ (\overbar{x},s)$.}
For every coordinate $i$, there is an intermediate neuron $q_i$ whose potential corresponds to the value of the $i^{th}$ coordinate in the updated vector. The neuron $q_i$ has positive incoming edges from the neurons in $\overbar{y}_i$, $\overbar{y}'_i$ with weights $w(y_{i,j},q_i)=2^{j-1}$ and $w(y'_{i,j},q_i)=-2 \cdot 2^{j-1}$ respectively. Hence, in case the sign neuron $s_i$ is idle, the output neurons contribute $\dec(\overbar{y}_i)$ to the potential of $q_i$. 
In case the sign neuron $s_i$ fires, the output neurons contribute $-\dec(\overbar{y}_i)$ to the potential of $q_i$.
	
The next step is to add the value $A\overbar{x}$ to the potential value of every $q_i$. To do that, the network connects the vectors $\overbar{d}_{p,i}$, $\overbar{d}'_{p,i}$, $\overbar{d}_{n,i}$, $\overbar{d}'_{p,i}$ to $q_i$ in the following manner. 
The vectors $\overbar{d}_{p,i}$, $\overbar{d}'_{p,i}$ are connected to $q_i$ with weights $w((\overbar{d}_{p,i})_j,q_i) =2^{j-1}$, and $w((\overbar{d}'_{p,i})_j,q_i) = -2 \cdot 2^{j-1}$. 
Hence, in case where the sign neuron $s$ is idle ,these neurons contribute $\dec(\overbar{d}_{p,i})$ to the potential of $q_i$. In case where the sign neuron $s$ fires, these neurons contribute $-\dec(\overbar{d}_{p,i})$. Recall that the neural vectors $\overbar{d}_{p,i}$, $\overbar{d}'_{p,i}$ have firing neurons only if $(A\cdot \overbar{x})_i > 0$, and in this case $\dec(\overbar{d}_{p,i})= (A\cdot \overbar{x})_i$. 
	
In the same manner, the vectors $\overbar{d}_{n,i}$ and $\overbar{d}'_{n,i}$ are connected to $q_i$ with weights $w((\overbar{d}_{n,i})_j,q_i) =2 \cdot 2^{j-1}$, and $w((\overbar{d}'_{n,i})_j,q_i) = - 2^{j-1}$.
Recall that the vectors $\overbar{d}_{n,i}$, $\overbar{d}'_{n,i}$ have firing neurons only if $(A\cdot \overbar{x})_i < 0$ and in this case $\dec(\overbar{d}_{p,i})= -(A\cdot \overbar{x})_i$.
	
Thus, by the above description the potential of $q_i$ encodes the $i^{th}$ coordinate of the updated output vector. 
To support the case where the potential of $q_i$ is negative, there is an equivalent neuron $q'_i$ whose potential 
is the additive inverse of the potential of $q_i$. This can be obtained by using 
the relevant excitatory and inhibitory copies of the vectors $\overbar{y}_i$, $\overbar{y'}_i$, $\overbar{d}_{p,i}$, $\overbar{d}'_{p,i}$, $\overbar{d}_{n,i}$, $\overbar{d}'_{p,i}$ that are connected to $q'_i$ with the corresponding weights (as used in $q_i$ up to a flip in the sign).
\vspace{-10pt}
\paragraph{(4) Updating the Output Sketch (Exactly Once).} The network updates the output vectors $\overbar{y}_1, \ldots , \overbar{y}_r$, $s_1, \ldots s_r$ in the following manner.
Using Lemma~\ref{clm:binary-potential} and $\log \ell$ identical copies of $q_i$ and $q'_i$, it extracts their potential into vectors of neurons denoted as $Q_i$, $Q'_i$. The algorithm connects each neuron $q'_i$ to the sign neuron $s_i$ and the vectors $Q_i$, $Q'_i$ to the output vector $\overbar{y}_i$ via a delay chains of size $O(1)$, where $y_{i,j}$ has an incoming large positive weight from the $j^{th}$ neurons of $Q_i$ and $Q'_i$. The reason we use a delay chain is that we wish to add the update $A\overbar{x}$ to the output $\overbar{y}$ after the previous value is deleted, and only \emph{once}. For that purpose, the following \emph{reset mechanism} is added. 
	
The algorithm connects the input neurons $\overbar{x}$ to an intermediate excitatory neuron $r_0$ which serves as a simple $\OR$ gate. Let $\tau$ be an upper bound on the number of rounds between the first round the input $\overbar{x}$ is presented and the round in which the neurons in $Q_1 , \ldots Q_{r}$ and $Q'_1 \ldots Q'_{r}$ output the desired outcome as specified in Lemma~\ref{clm:binary-potential}. The neuron $r_0$ is connected to a chain $C$ of size $\tau+1$ where each neuron $c_j$ in $C$ has an incoming edge from $c_{j-1}$ with weight $1$ and bias $1$. The neuron $c_{\tau}$ is then connected to the first neurons in the delay chains connected to $q_i, Q_i, Q'_i$ for every coordinate $i$, where these neurons serve as $\AND$ gates of the input from $c_{\tau}$ and the corresponding neuron in $q_i, Q_i, Q'_i$. The last neuron in $C$, (i.e. $c_{\tau+1}$) is an inhibitory neuron with outgoing edges to all neurons in the network besides the delay chains with large negative weights (including the output neurons). 
Hence, once the network is reset the algorithm will update the output neurons \emph{once}, and all neurons will be idle until the next update. Figure~\ref{fig:sketch} illustrates the constructed network. 

\begin{figure}[h!]
\begin{center}
\includegraphics[scale=0.5]{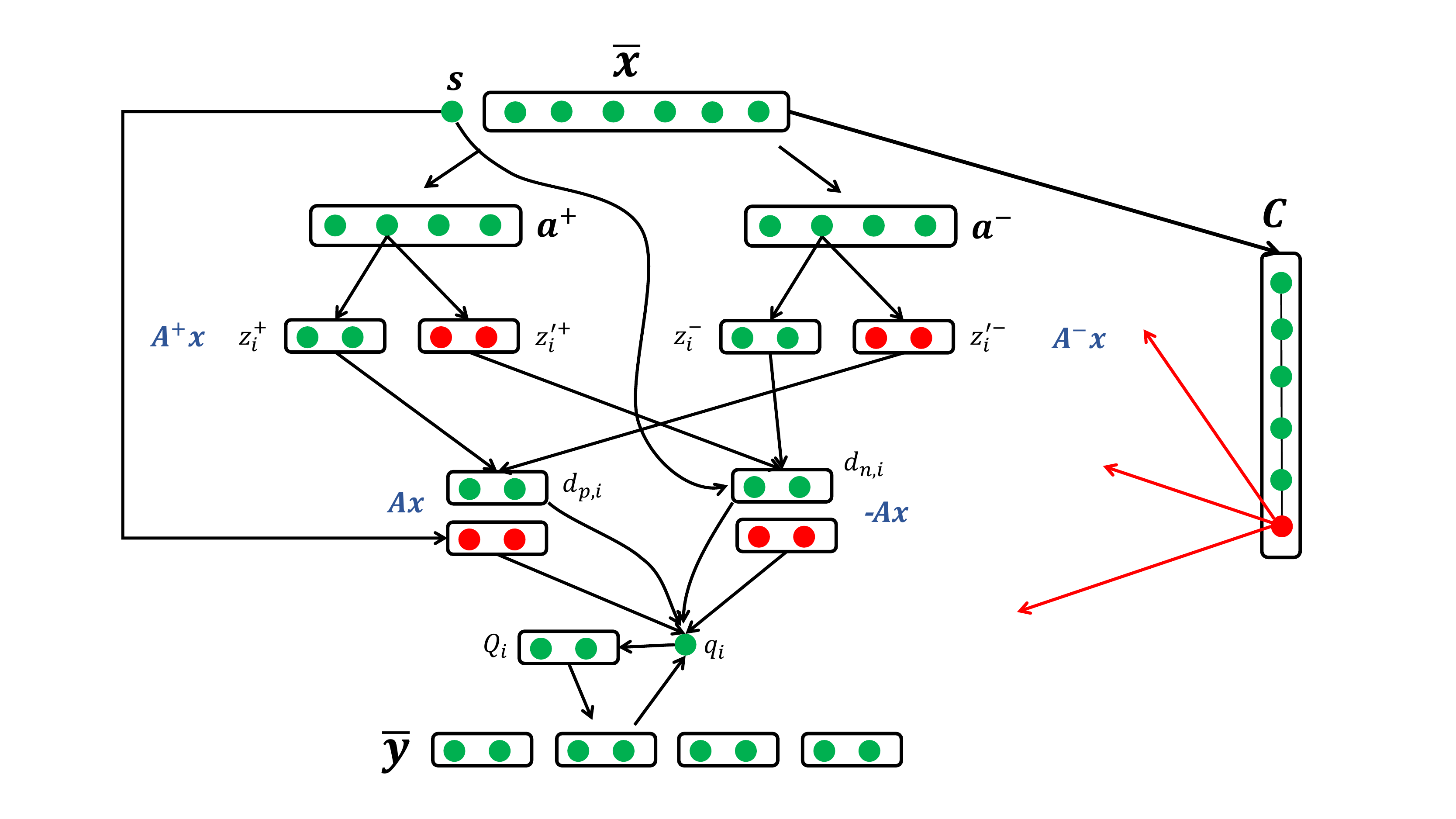}
\caption{\label{fig:sketch} \small{An illustration of the linear sketching network.
The red circles represent inhibitor neurons, and the green circles represent excitatory neurons. In the first layer, the potential values of the vector $a^+$ ($a^-$) represent $A^+\overbar{x}$ ($A^-\overbar{x}$). For simplicity from that point, we focus on updating the $i^{th}$ coordinate of the output neurons. In the third layer the potential value of $(A^+\overbar{x})_i$ ($(A^-\overbar{x})_i$) are extracted into a binary vector $z^+_i$ ($z^-_i$). These neurons are then used to compute $A\overbar{x}$ and $-A\overbar{x}$ which are represented in $d_{p,i}$ and $d_{n,i}$ and their inhibitory copies. The inhibitory neurons $d'_{p,i}$ and the excitatory neurons $d_{n,i}$ fire only if $s=1$. In the next layer, these neurons and the output neurons $y_i$ are connected to $q_i$ such that the potential of $q_i$ is equal to the updated value. The potential of $q_i$ is extracted into a binary vector $Q_i$ that is connected to the output neurons. The chain $C$ is responsible to schedule the update so that it will occur only once, and only after the reset of $y$.}}
\end{center}
\end{figure}

\paragraph{Correctness.}
Let $\overbar{x},s$ be an update presented in round $\tau_0$.
\begin{observation} \label{obs:A+-values}
For every coordinate $i \in [r]$, the firing neurons in $z^+_i$ encode the binary representation of $(A^{+}\overbar{x})_i$, and the firing neurons in $z^-_i$ encode $(A^{-}\overbar{x})_i$  by round $\tau_1 = \tau_0+O(\log \ell)$. 
\end{observation}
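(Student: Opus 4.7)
The plan is to decouple the argument into two stages: first, identify the membrane potentials of the intermediate neurons $a^+_j$ and $a^-_j$ as exactly the desired coordinates of $A^+\overbar{x}$ and $A^-\overbar{x}$; second, invoke the potential-encoding module of Lemma~\ref{clm:binary-potential} to convert these potentials into their binary representations in $z^+_i$ and $z^-_i$.

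For the first stage, I would track the potentials at round $\tau_0+1$. By the construction in step (1) of the network, the neuron $a^+_j$ has incoming edges $w(x_i, a^+_j) = A^+_{i,j}$ from the input vector $\overbar{x}$, and no other incoming connections. Since the input is presented at round $\tau_0$ and persists for the $O(\log \ell)$ rounds guaranteed by the theorem, the firing pattern $\sigma_\tau(x_i) = x_i$ is stable for every $\tau \in [\tau_0, \tau_0 + O(\log \ell)]$. Consequently, at every such round,
\[
\pot(a^+_j, \tau+1) = \sum_{i=1}^n A^+_{i,j}\cdot \sigma_\tau(x_i) = (A^+\overbar{x})_j,
\]
and the identical computation with $A^-$ gives $\pot(a^-_j, \tau+1) = (A^-\overbar{x})_j$. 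By the hypothesis $|(A\overbar{x})_j| \le \ell$ together with the non-negativity of $A^+\overbar{x}$ and $A^-\overbar{x}$, both potentials are bounded by $\ell \le 2^{\lceil \log \ell\rceil}$, matching the precondition of Lemma~\ref{clm:binary-potential}.

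For the second stage, recall that step (1) of the construction creates $\lceil \log \ell \rceil$ identical copies of each $a^+_j$ (with the same incoming edges and bias) and feeds them into the $\POT$ module of Lemma~\ref{clm:binary-potential}. Applying that lemma with parameter $\ell' = \lceil \log \ell \rceil$ yields, after $O(\log \ell)$ additional rounds, the binary encoding of $\pot(a^+_j, \cdot) = (A^+\overbar{x})_j$ on the neurons $z^+_j$. The same argument applied to $a^-_j$ gives the encoding in $z^-_j$. Composing the two $O(\log \ell)$ delays and setting $\tau_1 = \tau_0 + O(\log \ell)$ yields the claim.

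The one point that warrants checking is the persistence precondition of Lemma~\ref{clm:binary-potential}, which requires that $\pot(a^+_j, \cdot)$ remains unchanged for the full $O(\log \ell)$-round decoding window. This is automatic here because $a^+_j$ and $a^-_j$ receive signals only from the input layer (not from any downstream feedback), and the theorem guarantees input persistence of $O(\log \ell)$ rounds. Hence there is no genuine obstacle; the observation is essentially bookkeeping that composes the trivial single-round multiplication with the $\POT$ module from Lemma~\ref{clm:binary-potential}.
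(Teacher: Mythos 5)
Your proposal follows the same two-stage decomposition as the paper's (one-paragraph) proof: first compute the membrane potentials of $a^+_j,a^-_j$ directly from the edge weights specified in step~(1), then invoke Lemma~\ref{clm:binary-potential} to binarize those potentials in $z^+_j,z^-_j$ within $O(\log\ell)$ rounds; you add an explicit check of the persistence and boundedness preconditions, which the paper elides. One small imprecision: the inference ``$|(A\overbar{x})_j|\le\ell$ plus non-negativity of $A^+\overbar{x},A^-\overbar{x}$ implies both are $\le\ell$'' is not valid for general $\overbar{x}$ (the two nonnegative terms could each exceed $\ell$ while their difference is small); it does hold here because $\overbar{x}$ is a unit vector encoding a single item, so for each $j$ exactly one of $(A^+\overbar{x})_j,(A^-\overbar{x})_j$ is nonzero and equals $|(A\overbar{x})_j|$, and that is the fact you should cite.
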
 
\begin{proof}
Starting round $\tau_0+1$ due to the edge weights between $\overbar{x}$ and $a^+$ ($a^-$), the potential value of $(a^+)_i$ is equal to $(A^{+}\overbar{x})_i$ and the potential value of $(a^-)_i$ is equal to $(A^{-}\overbar{x})_i$. Since all entrees in $A^+$ and $A^-$ are non-negative, these potential values are non-negative. Hence, by Lemma~\ref{clm:binary-potential} the neurons $z^+_i$ and $z^-_i$ holds a binary representation of these potential within $O(\log \ell)$ rounds. 
\end{proof}
We now show that the potential of $q_i$ is equal to the updated value within $O(\log \ell)$ rounds. Let $\widehat{y}$ be the values encoded in the output neurons in round $\tau_0$.
\begin{claim} \label{clm:sketch-potential}
For every coordinate $i \in [r]$, the potential of $q_i$ is equal to the updated value $(\widehat{y} + (1-2s)A\overbar{x})_i$ by round $\tau_2 = \tau_0+ O(\log \ell)$. 
\end{claim}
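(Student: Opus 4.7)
The plan is to trace the signal through the four layers of the network and then combine the contributions to $q_i$ with a short case analysis. I will assume throughout that the input $(\overbar{x},s)$ remains persistent for the full $O(\log \ell)$ rounds required by the construction; this is built into the persistence-time guarantee of the theorem.

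First, by Observation~\ref{obs:A+-values}, within $\tau_1 = \tau_0 + O(\log \ell)$ rounds the vectors $z^{+}_i$ and $z^{-}_i$ hold the binary encodings of $(A^{+}\overbar{x})_i$ and $(A^{-}\overbar{x})_i$, respectively. By construction, the potential of $a_{p,i}$ equals $\dec(z^{+}_i)-\dec(z'^{-}_i) = (A\overbar{x})_i$, and symmetrically the potential of $a_{n,i}$ equals $-(A\overbar{x})_i$. Exactly one of these is non-negative, so invoking Lemma~\ref{clm:binary-potential} on each of the two sets of $\log \ell$ copies of $a_{p,i}$ and $a_{n,i}$ yields, within another $O(\log \ell)$ rounds, binary vectors $\POT(a_{p,i})$ and $\POT(a_{n,i})$ encoding $\max\{0,(A\overbar{x})_i\}$ and $\max\{0,-(A\overbar{x})_i\}$, respectively.

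Next, I would unfold the definitions of $\overbar{d}_{p,i}, \overbar{d}'_{p,i}, \overbar{d}_{n,i}, \overbar{d}'_{n,i}$ and compute their contributions to the potential of $q_i$ by a case analysis on the pair $(\text{sign}(A\overbar{x})_i,\, s)$. Using the weighting $+2^{j-1}$ on $\overbar{d}_{p,i}$ and $-2 \cdot 2^{j-1}$ on $\overbar{d}'_{p,i}$ (which is the AND of $s$ and $\POT(a_{p,i})$), the contribution of the $p$-block to $q_i$ is $(1-2s)\max\{0,(A\overbar{x})_i\}$. Symmetrically, using the weighting $+2 \cdot 2^{j-1}$ on $\overbar{d}_{n,i}$ (the AND of $s$ and $\POT(a_{n,i})$) and $-2^{j-1}$ on $\overbar{d}'_{n,i}$, the $n$-block contributes $-(1-2s)\max\{0,-(A\overbar{x})_i\}$. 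Adding the two blocks gives the net update contribution $(1-2s)(A\overbar{x})_i$, regardless of the sign of $(A\overbar{x})_i$.

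The contribution of the previous output comes from $\overbar{y}_i$ (weights $+2^{j-1}$) and its AND-copy $\overbar{y}'_i$ with $s_i$ (weights $-2 \cdot 2^{j-1}$), and by the same computation equals $(1-2s_i)\dec(\overbar{y}_i) = \widehat{y}_i$ under the signed interpretation used in the theorem statement. Because all the auxiliary vectors have settled by round $\tau_1 + O(\log\ell) = \tau_0 + O(\log\ell)$, the potential of $q_i$ at the following round equals $\widehat{y}_i + (1-2s)(A\overbar{x})_i$, proving the claim. The main delicate point I anticipate is bookkeeping the two sign flips (the one coming from the stored output's sign $s_i$ and the one coming from the update's sign $s$) so that the AND-gate behaviour of $\overbar{d}'_{p,i}$ and $\overbar{d}_{n,i}$ combines with the inhibitor weights to yield the single factor $(1-2s)$; once this is verified in each of the four cases the remaining verification is routine.
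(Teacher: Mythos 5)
Your proof is correct and follows essentially the same approach as the paper: invoke Observation~\ref{obs:A+-values}, pass through Lemma~\ref{clm:binary-potential} to get $\POT(a_{p,i})$ and $\POT(a_{n,i})$, then account for the contributions of the $\overbar{d}$-vectors and $\overbar{y}_i,\overbar{y}'_i$ to $q_i$. The only (minor) difference is presentational: where the paper writes out a four-way case split over $(\mathrm{sign}(A\overbar{x})_i, s)$, you compress it into the identity $(1-2s)\max\{0,(A\overbar{x})_i\} - (1-2s)\max\{0,-(A\overbar{x})_i\} = (1-2s)(A\overbar{x})_i$, which is a clean way to organize the same bookkeeping.
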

\begin{proof}
By Observation~\ref{obs:A+-values}, the neurons $z^+_i$, $z^-_i$  encodes $(A^{+}\overbar{x})_i$ and $(A^{-}\overbar{x})_i$ respectively by round $\tau_1 =\tau_0+O(\log \ell)$. 
Since $A = A^+-A^-$, the potential value of $a_{p,i}$ equals $(A\overbar{x})_i$ and the potential value of $a_{n,i}$ is equal to $-A\overbar{x})_i$ by round $\tau_1+1$. Hence, by Lemma~\ref{clm:binary-potential} for some round $\tau'= \tau_1 + O(\log \ell)$, if $(A\overbar{x})_i \geq 0$, the output neurons of $\POT(a_{p,i})$ encodes $(A\overbar{x})_i$ by round $\tau'$ and if $(A\overbar{x})_i < 0$ the output neurons of $\POT(a_{p,i})$ are idle. Similarly, if $(A\overbar{x})_i \leq 0$, the output neurons of $\POT(a_{n,i})$ encodes $(-(A\overbar{x})_i)$ by round $\tau'$ and if $(A\overbar{x})_i > 0$ the output neurons of $\POT(a_{n,i})$ are idle. 

Next, we calculate the contribution of the vectors $\overbar{d}_{n,i}, \overbar{d}'_{n,i}$ , $\overbar{d}_{p,i}, \overbar{d}'_{p,i}$ to the potential of $q_i$. 
Starting at round $\tau'+1$, if $(A\overbar{x})_i \geq 0$ then $\dec(\overbar{d}_{p,i}) = \dec(\POT(a_{p,i}))= (A\overbar{x})_i$ and $\dec(\overbar{d}_{n,i})=\dec(\overbar{d}'_{n,i})=0$. In case $s=0$, then $\dec(\overbar{d}'_{p,i}) = 0$ and therefore the neurons $\overbar{d}_{n,i}, \overbar{d}'_{n,i}$ , $\overbar{d}_{p,i}, \overbar{d}'_{p,i}$ contribute $(A\overbar{x})_i$ to the potential of $q_i$. In case $(A\overbar{x})_i \geq 0$ and $s=1$ then $\dec(\overbar{d}'_{p,i}) = (A\overbar{x})_i$ and by the definition of the weights from $\overbar{d}'_{p,i}$ to $q_i$, the neurons $\overbar{d}'_{p,i}$ contribute $-2(A\overbar{x})_i$ to the potential of $q_i$. Hence, if $(A\overbar{x})_i \geq 0$ and $s=1$ the neurons $\overbar{d}_{n,i}, \overbar{d}'_{n,i}$ , $\overbar{d}_{p,i}, \overbar{d}'_{p,i}$ contribute $(A\overbar{x})_i-2(A\overbar{x})_i=-(A\overbar{x})_i$ to the potential.

On the other hand, in case $(A\overbar{x})_i \leq 0$ then $\dec(\overbar{d}_{p,i})=\dec(\overbar{d}'_{p,i})=0$ and $\dec(\overbar{d}'_{n,i})=\dec(\POT(a_{n,i}))=-(A\overbar{x})_i$. If $s=0$ then $\dec(\overbar{d}_{n,i})=0$ and the neurons $\overbar{d}_{n,i}, \overbar{d}'_{n,i}$ , $\overbar{d}_{p,i}, \overbar{d}'_{p,i}$ contribute $-(-(A\overbar{x})_i)=(A\overbar{x})_i$ to the potential of $q_i$. If $s=1$ then $\dec(\overbar{d}_{n,i})=-(A\overbar{x})_i$ and these neurons contribute $(A\overbar{x})_i-2(A\overbar{x})_i=-(A\overbar{x})_i$ to the potential of $q_i$.  

Similarly, as for the output neurons $\overbar{y}_i$, $\overbar{y'}_i$, if $s_i=0$, these neurons contribute $\dec(\widehat{y}_i)$ to the potential of $q_i$, and if $s_i=1$, these neurons contribute $-\dec(\widehat{y}_i)$. By choosing $\tau_2=\tau'+2$, the claim follows. 
\end{proof}
\begin{claim}
The output neurons encode the values $\widehat{y} + (1-2s)A\overbar{x}$ by round $\tau'=\tau_0 + O(\log \ell)$. Moreover, the output neurons continue to present the updated value until the next update presented in the input neurons. 
\end{claim}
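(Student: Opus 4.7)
The plan is to combine Claim~\ref{clm:sketch-potential} with the potential-encoding module of Lemma~\ref{clm:binary-potential}, and then to trace the reset chain $C$ to verify that the write to the output neurons happens at the correct time and exactly once. Let me denote the target vector by $\widetilde{y} := \widehat{y} + (1-2s)A\overbar{x}$.

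First, I would invoke Claim~\ref{clm:sketch-potential} to conclude that by some round $\tau_2 = \tau_0 + O(\log \ell)$, for every coordinate $i$ the potential of $q_i$ equals $\widetilde{y}_i$, while by the symmetric construction (flipped signs) the potential of $q'_i$ equals $-\widetilde{y}_i$. Exactly one of $\{q_i, q'_i\}$ has non-negative potential, and it is the one that fires. Applying Lemma~\ref{clm:binary-potential} to the $\log \ell$ identical copies of $q_i$ (resp.\ $q'_i$), within another $O(\log \ell)$ rounds the binary vector $Q_i$ (resp.\ $Q'_i$) encodes $|\widetilde{y}_i|$, while the other one is all-idle. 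Simultaneously, the sign neuron $s_i$ should be set from $q'_i$: since $q'_i$ fires exactly when $\widetilde{y}_i<0$, this correctly records the sign.

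Second, I would trace the reset chain. The $\OR$ neuron $r_0$ fires at round $\tau_0 + 1$, and the signal propagates down $C$ so that $c_\tau$ fires at round $\tau_0 + \tau + 1$. Here $\tau$ is chosen to be exactly an upper bound on the rounds needed for $Q_i, Q'_i$ to stabilize (as guaranteed by Claim~\ref{clm:sketch-potential} together with Lemma~\ref{clm:binary-potential}), and on $s_i$ to reflect the sign of the update. At round $\tau_0 + \tau + 2$, $c_{\tau+1}$ fires and inhibits every non-chain auxiliary neuron with a large negative weight; this silences $q_i, q'_i, Q_i, Q'_i$ and all the upstream computation. Meanwhile, the $\AND$ gates on the delay chains feeding $\overbar{y}_i$ pass the values $Q_i, Q'_i$ to $\overbar{y}_i$ exactly at the single round in which $c_\tau$ is active, so the write happens precisely once. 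Setting $\tau' = \tau_0 + \tau + O(1) = \tau_0 + O(\log \ell)$, the output neurons hold the binary encoding of $|\widetilde{y}_i|$ and the sign neuron $s_i$ reflects the sign of $\widetilde{y}_i$ by round $\tau'$.

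Third, to prove persistence, I would use the self-loops of large positive weight on each $\overbar{y}_i$ and $s_i$: once these neurons fire at round $\tau'$, their membrane potential in every subsequent round is at least the self-loop weight minus any possible inhibition from $c_{\tau+1}$, and by construction the self-loop weight dominates. Since by this time $c_{\tau+1}$ has turned off (it fires a single round) and all auxiliary neurons are reset to idle, no further contribution arrives at $\overbar{y}_i, s_i$ until the next input update fires $r_0$ again. Hence the outputs retain their values until the next update is presented.

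The main obstacle is the careful synchronization of the reset mechanism with the write: we must ensure that (i) the $Q_i, Q'_i$ vectors have stabilized before $c_\tau$ fires, (ii) the previous output values are cleared before the new write (handled by the inhibitory copies $\overbar{y}'_i$ canceling $\overbar{y}_i$'s contribution to $q_i$, which is already embedded in Claim~\ref{clm:sketch-potential}), and (iii) the self-loops on $\overbar{y}_i, s_i$ are strong enough to survive the single round of global inhibition from $c_{\tau+1}$. Choosing $\tau$ as the explicit bound from Lemma~\ref{clm:binary-potential} and giving the self-loop a weight exceeding the absolute value of the inhibitory weight from $c_{\tau+1}$ suffices.
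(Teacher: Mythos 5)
Your first step is correct and matches the paper: you invoke Claim~\ref{clm:sketch-potential} for $q_i$ and the flipped-sign copy $q'_i$, apply Lemma~\ref{clm:binary-potential} to get $Q_i$, $Q'_i$, and observe that exactly one of them contains firing neurons, with $q'_i$ driving the sign neuron $s_i$. That part is essentially the paper's argument.

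However, your handling of the reset mechanism inverts the intended design. You write that ``the self-loops on $\overbar{y}_i, s_i$ are strong enough to survive the single round of global inhibition from $c_{\tau+1}$'' and that ``by construction the self-loop weight dominates.'' This cannot be right: if the self-loop dominates the inhibition from $c_{\tau+1}$, then the output neurons $\overbar{y}_i$ are never cleared, and since they carry the \emph{old} value $\widehat{y}_i$ via the self-loop, the arrival of the new bits from the delay chain would be superimposed on (OR'd with) the old bit pattern, producing garbage unless the new value's set of $1$-bits happens to contain the old one's. In the paper's construction the relationship is the opposite: the inhibitory weight from $c_{\tau+1}$ must \emph{dominate} the self-loop so that the output neurons are driven to idle (reset). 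The whole point of the length-$3$ delay chains from $Q_i,Q'_i,q'_i$ to $\overbar{y}_i,s_i$ is to make the new write land one or two rounds \emph{after} the reset, so that $y_i,s_i$ are ``updated once with the correct value (after these neurons are already reset).'' Your item (ii) in the ``main obstacle'' paragraph also misattributes the clearing to the inhibitory copies $\overbar{y}'_i$ --- those only cancel $\overbar{y}_i$'s contribution to the \emph{potential of $q_i$}; they do not clear the firing state of $\overbar{y}_i$ itself. That clearing is done exclusively by $c_{\tau+1}$'s inhibition, which is why the delay before the write is essential. Finally, your statement that the delay-chain $\AND$ gates pass the values to $\overbar{y}_i$ ``exactly at the single round in which $c_\tau$ is active'' elides the extra propagation through the length-$3$ chain, which is precisely what postpones the write past the reset; without that postponement the old and new values collide.
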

\begin{proof}
By Claim~\ref{clm:sketch-potential} for every coordinate $i$, the potential value of $q_i$ is equal to the updated value $(\widehat{y} + (1-2s)A\overbar{x})_i$ 
by round $\tau_2 = \tau_0+ O(\log \ell)$. Therefore, it also holds that the potential value of $q'_i$ is equal to $-(\widehat{y} + (1-2s)A\overbar{x})_i$ by round $\tau_2$. 
Thus, by Lemma~\ref{clm:binary-potential} if $(\widehat{y} + (1-2s)A\overbar{x})_i \geq 0$ then the neurons $Q_i$ encode $(\widehat{y} + (1-2s)A\overbar{x})_i$ by some round $\tau_3 = \tau_2 + O(\log \ell)$ and are idle otherwise. Similarly, if $-(\widehat{y} + (1-2s)A\overbar{x})_i \geq 0$ then $Q'_i$ encodes $-(\widehat{y} + (1-2s)A\overbar{x})_i$ by round $\tau_3$. 
Note that either $Q_i$ or $Q'_i$ holds firing neurons but not both. 
In case $Q'_i$ contains firing neurons, $-(\widehat{y} + (1-2s)A\overbar{x})_i> 0$, and $q'_i$ fires as well (in such a case the sign neuron of the $i^{th}$ coordinate $s_i$ should be updated to $1$).
	 
We set the chain $C$ to be of size $\tau = O(\log \ell) > \tau_2 - \tau_0+2$. Since the first neurons in the delay chains that correspond to $q'_i,Q'_i, Q_i$ serve as $\AND$ gates of $c_\tau$ and the corresponding neurons in $q'_i,Q'_i, Q_i$, they begin to fire only after the neurons $q'_i,Q'_i, Q_i$ hold the correct values in round $\tau_0 +\tau+1$. Additionally, due to the inhibition of $c_{\tau+1}$, starting round $\tau_0+\tau+2$ all neurons in the network (including $y_i, s_i$) are idle except the delay chains which holds the updated values. We set all the delay chains connected to the output neurons to be of size $3$ and therefore in round $\tau_0 + \tau+3$ the neurons $y_i$, $s_i$ will be updated once with the correct value (after these neurons are already reset). Moreover, due to the self loops of the output neurons they will continue to present the updated value until the next update is presented.  
\end{proof}
%}%\APPENDSKETCMATRIX{

%\subsection{Distinct Elements}
%\vspace{-10pt}
\section{The Distinct Elements Problem}\label{sec:distinct}%\vspace{-10pt}
In the distinct elements problem, given is a stream of integers $\mathcal{S}=\{x_1, \ldots, x_m\}$ where each $x_i \in [n]$. It is then required to maintain an estimate for the number of distinct elements in the stream. 
We start by stating the state-of-the-art bounds for this problem in the streaming model. 
\begin{fact}[Streaming Space Bounds, \cite{Blasiok18},\cite{LiW13}]\label{fc:dist-el}
For any $\epsilon, \delta \in (0,1)$, there is an $(1+\epsilon)$ approximation algorithm for distinct elements with success probability of $1-\delta$ using $O(1/\epsilon^2 \cdot\log 1/\delta+\log n)$ space. Moreover, the space bound is optimal.
\end{fact}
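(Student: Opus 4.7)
The plan is to establish the two halves of the Fact separately: a constructive $O(1/\epsilon^2 \cdot \log(1/\delta) + \log n)$-space algorithm, and a matching $\Omega(1/\epsilon^2 \cdot \log(1/\delta) + \log n)$ space lower bound.

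For the upper bound, I would begin with the classical Bar-Yossef--Flajolet--Martin style construction: pick a pairwise-independent hash $h:[n]\to [M]$ with $M=\Theta(n^3)$ and maintain, over the stream, the $k$ smallest hash values seen so far. A Chebyshev calculation shows that for $k=\Theta(1/\epsilon^2)$ the estimator $\widehat{F_0} := kM/v_k$, where $v_k$ is the $k$-th smallest hash value, is a $(1\pm\epsilon)$ approximation to $F_0$ with constant probability; however, this costs $O((\log n)/\epsilon^2)$ bits per copy, which is too expensive by a $\log n$ factor. To eliminate the multiplicative $\log n$, I would pass to a subsampling-based estimator along the lines of Kane--Nelson--Woodruff and Blasiok: hash each item to a geometric level $g(x)\in\{0,1,\ldots,\log n\}$ with $\Pr[g(x)\ge j] = 2^{-j}$, and at each level maintain a constant-size bucket. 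The ``correct'' level $j^\star \approx \log F_0$ yields a $\Theta(1/\epsilon^2)$-size sub-sample giving the $(1\pm\epsilon)$ estimate; since $j^\star$ is unknown a priori, all levels are stored inside a single global data structure using $O(1/\epsilon^2 + \log n)$ bits via compact bit-packing. Standard median-of-$O(\log(1/\delta))$-copies amplification then yields the full $O(1/\epsilon^2 \cdot \log(1/\delta) + \log n)$ bound.

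For the lower bound I would combine three independent sources of hardness. First, $\Omega(\log n)$ follows trivially since the estimator must distinguish between $F_0 = 1$ and $F_0 = n$ and so its output requires $\log n$ bits. Second, the $\Omega(1/\epsilon^2)$ term under constant success probability follows from a one-way communication reduction from the Gap-Hamming-Distance problem, for which Chakrabarti--Regev proved an $\Omega(1/\epsilon^2)$ lower bound: Alice and Bob encode their $\{0,1\}^t$ strings as insertion streams whose $F_0$ distinguishes close from far Hamming distance. Third, the $\log(1/\delta)$ factor arises from a direct-sum argument, or concretely from a reduction that runs $\Theta(\log(1/\delta))$ independent copies of a $\Omega(1/\epsilon^2)$-hard instance whose correct decoding requires success on all copies simultaneously.

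The main obstacle is the additive rather than multiplicative $\log n$ in the upper bound: achieving $O(1/\epsilon^2 + \log n)$ per copy (before $\delta$-amplification) requires the careful level-based subsampling and a compact encoding of the surviving bucket contents, since naive implementations immediately pay an extra $\log n$ per level. The Gap-Hamming lower bound on the other side is also technically demanding, relying on an information-theoretic corruption-bound argument, but I would invoke it as a black box from \cite{LiW13} rather than re-prove it.
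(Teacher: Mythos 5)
The paper does not prove this Fact; it is stated as a citation to the literature, with \cite{Blasiok18} supplying the upper bound and \cite{LiW13} the matching lower bound. So there is no ``paper proof'' to compare against, but your sketch contains a concrete error worth flagging.

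The amplification step at the end of your upper-bound argument does not give the claimed bound. Running $\Theta(\log(1/\delta))$ independent copies of an $O(1/\epsilon^2 + \log n)$-space algorithm and taking the median yields space $O\bigl((1/\epsilon^2 + \log n)\log(1/\delta)\bigr) = O\bigl(1/\epsilon^2\log(1/\delta) + \log n \cdot \log(1/\delta)\bigr)$, which has a multiplicative $\log n \cdot \log(1/\delta)$ term rather than the additive $\log n$ claimed in the Fact. Obtaining the sharper $O(1/\epsilon^2 \cdot \log(1/\delta) + \log n)$ is precisely the main technical content of \cite{Blasiok18}: one cannot simply black-box the constant-failure-probability algorithm of Kane--Nelson--Woodruff and amplify, because that duplicates the $\Theta(\log n)$-bit hash/level infrastructure across repetitions. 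Instead, the construction must share a single level-sampling backbone (the $O(\log n)$ bits) across all $\Theta(\log(1/\delta))$ estimators, so that only the per-estimator $O(1/\epsilon^2)$ state is replicated. Your sketch as written proves the weaker $O((1/\epsilon^2 + \log n)\log(1/\delta))$ bound. The remaining pieces --- the $\Omega(\log n)$, the Gap-Hamming-based $\Omega(1/\epsilon^2)$, and the $\Omega(1/\epsilon^2\log(1/\delta))$ from \cite{LiW13} --- are invoked essentially correctly, though as you note these are deep results being cited rather than re-derived, which is exactly what the paper itself does.
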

In this section, we provide a neural implementation for the well-known LogLog streaming algorithm by \cite{DurandF03,flajolet2007hyperloglog}. This algorithm obtains sub-optimal space, but its simplicity makes it much more applicable in the neural setting. 
\begin{lemma}\cite{DurandF03,flajolet2007hyperloglog} \label{lem:stream-distinct}
Given a data-stream of elements in $[n]$, there exists an $(1+\epsilon)$ approximation algorithm for the distinct elements problem using $O((1/\epsilon^2 \log \log n + \log n)\log(1/\delta))$ space, with probability of $1-\delta$. 
\end{lemma}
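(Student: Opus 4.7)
\medskip

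\noindent\textbf{Proof proposal for Lemma \ref{lem:stream-distinct}.} The plan is to implement the LogLog estimator of Durand--Flajolet. First, I would fix a hash family $\mathcal{H}$ mapping $[n]$ to $\{0,1\}^{\lceil\log n\rceil}$; using a $k$-wise independent family for constant $k$ suffices for the analysis and requires only $O(\log n)$ bits of seed. For each $x\in[n]$, define $\rho(x)$ to be one plus the position of the leading zero in $h(x)$ (equivalently, the length of the run of leading zeros, plus one). The key intuition is that if the stream has $D$ distinct elements then $\max_{x\in\mathcal{S}} \rho(x)$ concentrates near $\log_2 D$, because the fraction of hash values starting with $k$ zeros is $2^{-k}$, so with $D$ distinct items the maximum $\rho$ is $\log_2 D\pm O(1)$ in expectation.

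To turn this into a $(1+\epsilon)$-approximation, I would use \emph{stochastic averaging}: partition the hash output into a $b=O(1/\epsilon^2)$-bucket index (taken from the first $\log b$ bits of $h(x)$) and a rank computed from the remaining bits. Maintain a register $M_j$, $j\in[b]$, holding the maximum rank seen so far in bucket $j$. Return
\[
  \widehat D \;=\; \alpha_b \cdot b \cdot 2^{\,(1/b)\sum_{j=1}^{b} M_j},
\]
where $\alpha_b$ is the standard bias-correction constant. The heart of the analysis is to show that, conditioned on the (roughly balanced) distribution of the $D$ distinct items into the $b$ buckets, the geometric-mean-based estimator has standard deviation $O(1/\sqrt{b})\cdot D$; choosing $b=\Theta(1/\epsilon^2)$ then gives relative error $\epsilon$ with constant probability. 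This variance bound is the main obstacle, and it follows from the moment-generating-function computation of Durand--Flajolet applied to the geometric random variables $M_j$; under limited independence one pays only constant factors.

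For the space bound, note that: storing a seed for $h$ costs $O(\log n)$ bits; each register $M_j$ holds an integer in $[0,\log n]$, and hence takes $O(\log\log n)$ bits, for a total of $O((1/\epsilon^2)\log\log n + \log n)$ bits in a single instance. Finally, to amplify the success probability from a constant to $1-\delta$, I would run $t=O(\log(1/\delta))$ independent copies of the estimator with independent hash seeds and output the median; by a standard Chernoff argument the median is an $(1\pm\epsilon)$-approximation with probability $1-\delta$, and the total space multiplies by $t$, giving the claimed $O((1/\epsilon^2\log\log n + \log n)\log(1/\delta))$ bound. The only subtlety I expect is being careful that the limited-independence hash family used (in order to keep the seed to $O(\log n)$ bits) still suffices for both the variance and the tail bounds on $\max_j M_j$; here one can cite the standard analyses that show $4$-wise independence is enough for LogLog-type estimators.
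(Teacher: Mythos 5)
Your proposal follows essentially the same route as the paper: the LogLog estimator with stochastic averaging over $b=\Theta(1/\epsilon^2)$ buckets, each bucket storing the maximum leading-zero rank in $O(\log\log n)$ bits, a geometric-mean estimate $\alpha_b \, b \, 2^{\bar M}$, and a median of $O(\log(1/\delta))$ independent copies to amplify the constant success probability. The accounting of the space bound is also the same: $O(\log n)$ for the hash seed plus $O((1/\epsilon^2)\log\log n)$ for the registers per copy, times $O(\log(1/\delta))$ copies.

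There is one concrete imprecision that would make a step fail as stated: you hash $[n]$ into $\{0,1\}^{\lceil\log n\rceil}$, i.e., a range of size about $n$. When the number $D$ of distinct elements is close to $n$, the birthday effect makes hash collisions unavoidable — under (nearly) uniform hashing the number of distinct hash values is roughly $n(1-e^{-D/n})$, which is about $(1-1/e)D \approx 0.63 D$ at $D=n$. Since colliding items land in the same bucket with the same rank, the estimator tracks the number of distinct \emph{hash values}, not distinct stream items, so the output is systematically too low by a constant factor and the $(1\pm\epsilon)$ guarantee breaks. The paper's proof hashes into $2^{2\log(1/\epsilon)+3\log n}$ — i.e., $\Theta(\log(1/\epsilon))$ bits for the bucket index \emph{plus} $3\log n$ bits for the rank — precisely so that collisions among the at most $n$ distinct items occur with negligible probability. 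Replacing your $\lceil\log n\rceil$ with $O(\log n + \log(1/\epsilon))$ (with a large enough constant) fixes this and does not affect the asymptotic space, since the seed length and register sizes change by only constant factors.
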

\begin{proof}[Proof Sketch]
We describe the high-level idea of the randomized LogLog algorithm under a constant success guarantee.
To provide a success guarantee of $1-\delta$, the same algorithm is repeated for $O(\log(1/\delta))$ times, thus inuring an overhead of $O(\log(1/\delta))$ factor in the space complexity.

The algorithm uses a pairwise independent hash function $h: [n] \to [2^{(2\log 1/\epsilon+3\log n)}]$ 
to map each input value $x_i \in [n]$ into a random string $h(x_i)$. The first $k=2\log 1/\epsilon$ bits of $h(x_i)$ are used in order to map $x_i$ into $2^k=1/\epsilon^2$ buckets $b_1,\ldots, b_{2^k}$. Let $\rho(x_i)$ be the number of leading zeros in the remaining $3\log n$ bits of $h(x_i)$. In each bucket $b_{\ell}$, the algorithm maintains the maximum value of $\rho(x_i)$ for every stream element $x_i$ that maps into the bucket $b_\ell$. Denote this maximum value by $N_{\ell}$. The estimate on the number of distinct elements is given by 
$\alpha \cdot 2^k \cdot 2^{\bar{N}}$ where $\bar{N}$ is the average of the $N_{\ell}$ values over the $2^k$ buckets $b_1,\ldots, b_\ell, \ldots, b_{2^k}$ for some constant $\alpha$.
\end{proof}
\vspace{-10pt}
\paragraph{A Neural Network for the Distinct Element Problem.}
We next turn to describe a neural implementation of the classical LogLog algorithm and prove Theorem~\ref{lem:disnct-elements}. 

\begin{definition}[SNN for Distinct Elements]
Given parameters $n$ and $\epsilon,\delta \in (0,1)$, an SNN network $\cN$ for the distinct elements problem has $n$ input neurons $\overbar{x}$,  and $\log n$ output neurons $\overbar{y}$. For every round $t$, let $f_1(t)$ 
be\footnote{We call it $f_1$ since the distinct elements problem computes the $F_1$ norm of the stream.} the number of distinct elements arrived by round $t$. For every fixed round $t$, it holds that by round $\tau(t)$, the output neurons $\overbar{y}$ encode the binary representation of an $(1 + \epsilon)$ approximation of $f_1(t)$ with probability $1-\delta$. 
\end{definition}

We describe the network construction based on the sequence of operations applied on the input layer. See Figure~\ref{fig:distinct} for an illustration. 
\paragraph{(1) Encoding the input in a binary Form.} The network contains $\log n$ neurons $x'_1, \ldots x'_{\log n}$ that represent the binary encoding of the presented element. The algorithm connects the input neurons $\overbar{x}$ to the neurons $\overbar{x}' = x'_1, \ldots , x'_{\log n}$ such that for every $i \in [n]$, $j \in [\log n]$ the edge weight $w(x_i,x'_j)=1$ if the $j^{th}$ bit in the binary representation of $i$ is equal to $1$ and $w(x_i,x'_j)=0$ otherwise. The bias values of these neurons are set to $b(x'_j)=1$ for every $j\in \{1, \ldots \log n \}$.

\paragraph{(2) Hashing.} The network contains a sub-network
$\mathcal{H}$ which implements a pairwise independent hash function $h: \{0,1\}^{\log n} \to \{0,1\}^{2\log (1/\epsilon)+3\log n}$ using Lemma~\ref{lem:hash}.
% for $\ell=\exp(2\log (1/\epsilon)+3\log n)$ using Lemma~\ref{lem:hash}.
The input to the sub-network $\mathcal{H}$ are the neurons $\overbar{x}'$. 
Let $t=O(\log \log n)$ be an upper bound on the number of rounds required for the computation of the sub-network $\mathcal{H}$.
In order to maintain the persistence of the input $\overbar{x}'$ for $\Theta(t)$ rounds, the network contains a neural timer $\mathcal{NT}$ using Fact~\ref{fc:neural-counter}.

The output of $\mathcal{H}$ is denoted by $\overbar{h}_b$, $\overbar{h}_s$, where $\overbar{h}_b$ is of length $2\log (1/\epsilon)$, and $\overbar{h}_s$ is of length $3 \log n$. The output neurons $\overbar{h}_b$ will encode the bucket number the input is mapped to, and the vector $\overbar{h}_s$ holds a binary string of size $3\log n$. % bits of the hashed value $h(x)$.
The neurons $\overbar{h}_b$, $\overbar{h}_s$ also have inhibitory copies denoted by $\overbar{h}'_b,\overbar{h}'_s$.
 %It is important the input update to be persistent for $\ell$ rounds until the update operation is complete.

 \paragraph{(3) Computing the number of leading zeros.} In the next step the network computes the number of leading zeros in the hashed string $\overbar{h}_s$.
 For that purpose, we first connect the inhibitory neurons $\overbar{h}'_s$ to $\overbar{h}_s$ such that $\overbar{h}_s$ will contain a \emph{single} firing neuron corresponding to the leading one entry in the binary string $\overbar{h}_s$.
 This is done by connecting each inhibitory neuron $h'_{s,i}$ to the neurons $h_{s,1}, \ldots ,h_{s , i-1}$ with large negative weight. As a result, $\overbar{h}_s$ contains one firing neuron such that the neuron $h_{s,i}$ fires iff the number of leading zeros in the hashed string $\overbar{h}_s$ is $(3 \log n -i)$.
 Next, the number of leading zeros in $\overbar{h}_s$ is encoded into a binary form using a collection of $O(\log \log n)$ neurons denoted as $\overbar{z}$, which have incoming edges from $\overbar{h}_s$.

\paragraph{(4) Representing the $B=1/\epsilon^2$ buckets.} The buckets used in the LogLog algorithm are represented using $B$ sets of neurons $\overbar{b}_1 \ldots \overbar{b}_B$, each of cardinality $\log \log n$. To maintain the value stored in each bucket, the neurons $\overbar{b}_1 \ldots \overbar{b}_B$ have self-loops with large positives weights. Additionally, each set of neurons $\overbar{b}_i$ is connected to an inhibitor copy $\overbar{b}'_i$.
The invariant is that for all inputs seen so far that were mapped to bucket $i$, the firing state of $\overbar{b}_i$ will encode the maximum number of leading zeros among all the observed strings $\overbar{h}_s$.
%of the maximum leading zeros seen in the string $\overbar{h}_s$, for inputs that were mapped to bucket $i$ according to index encoded in $\overbar{h}_b$.

In order to extract the index of the current bucket, the algorithm introduces $B$ excitatory neurons $a_1, \ldots a_B$, with incoming edges from the neurons $\overbar{h}_b, \overbar{h}'_b$ such that $a_i$ fires only if $\dec(\overbar{h}_b)=i$.

\paragraph{(5) Comparing the number of leading zeros with the value stored in the buckets.}
Let $j=\dec(\overbar{h}_b)$ be the decimal value encoded is the neurons $\overbar{h}_b$ at round $t$.
In the next step, our goal is to compare the number of leading zeros encoded in the neurons $\overbar{z}$ with the %maximum number of leading zeros seen so far in inputs that \
value stored in the $j^{th}$ bucket, encoded using the neurons $\overbar{b}_j$. 
In order to control the precise timing of the comparison, the network introduces a \emph{delay chain} of size $\tau= O(\log \log n)$ denoted as $C =\sigma_1, \ldots \sigma_\tau$. The first neuron in the chain $\sigma_1$ serves as an $\OR$ gate of the input neurons $\overbar{x}$, and for $i=2, \ldots ,\tau$ the neuron $\sigma_i$ has an incoming edge from $\sigma_{i-1}$ with weight $1$ and bias $1$.

The network then compares the value stored in the buckets $\overbar{b}_1, \ldots \overbar{b}_B$ with the current value stored in $\overbar{z}$ using $B$ comparison neurons $c_1, \ldots c_B$. Each comparison neuron $c_i$ will fire only if (i) $\sigma_{\tau}$ fired, ensuring the comparison occurs when $\overbar{z}$ holds the correct value, (ii) $a_i$ fired, indicating the input is mapped to bucket $i$, and (iii) $\dec(\overbar{z}) > \dec(\overbar{b}_i)$, indicating the current number of leading zeros is larger than the value stored in the bucket $\overbar{b}_i$.
This is done by setting $c_i$ to have incoming edges from $a_i$ and $\sigma_{\tau}$ with weight $w(a_i, c_i)=w(\sigma_\tau,c_i) = 10 \log n$, incoming edges from $\overbar{z}$ with weight $\dec(\overbar{z})$, negative incoming edges from $\overbar{b}'_i$ with weight $-\dec(\overbar{b}_i)$, and bias $b(c_i)=20 \log n+1$.

\paragraph{(6) Updating the relevant bucket.}
Once the comparison neuron $c_j$ fires, the goal is to copy the new value encoded in $\overbar{z}$ to the bucket $\overbar{b}_j$. First, the current value stored in the bucket is deleted as follows. For every $i \in [B]$ the neuron $c_i$ is connected to an inhibitory neuron $r_i$, and $r_i$ is connected to $\overbar{b}_i$  with a large negative edge weight. In order to update the value stored in the bucket \emph{after} the deletion,  the neuron $c_i$ is connected to a chain of two neurons $c^1_i$ and $c^2_i$, such that $c^1_i$ has an incoming edge from $c_i$, and $c^2_i$ has an incoming edge from $c^1_i$. 
Next, for every $k$ and $i$, the $k^{th}$ neuron in $\overbar{b}_i$ serves as an $\AND$ gate of $c_i^2$ and the $k^{th}$ neuron $z_k$, 

To avoid additional false updates, the neuron $c_i^1$ is connected to an inhibitory neuron $r_{i,2}$ that has negative outgoing edge weights to the chain $C$, the neurons $\overbar{z}$, and the input neurons $\overbar{x}$. We note that in a setting where there is a signaling neuron that fires upon the arrival of a new element, the inhibition of the input neurons can be avoided.
 
\paragraph{(7) Averaging.} All neurons $\overbar{b}_1, \ldots \overbar{b}_B$ are connected to an intermediate neuron $p$ such that the potential of $p$ is set to be $\log m \cdot \alpha \cdot \sum_{i=1}^{B}\dec(\overbar{b}_i)$, where $\alpha$ is a constant chosen according to the LogLog algorithm. The potential of $p$ is encoded by the output neurons $\overbar{y}$ using the $\POT(p)$ sub-network of Lemma~\ref{clm:binary-potential}. 

\paragraph{(8) Amplification of the success guarantee.} To amplify the success probability to $1-\delta$, the final network consists of $k=O(\log 1/\delta)$ copies of the basic sub-network described above. The final estimation is obtained by computing the median of the $k$ outputs of the sub-networks denoted as $\overbar{y}_1, \ldots, \overbar{y}_k$. This is done using a variation of the network for computing the maximum value by Maass~\cite{maass2000computational} as described in Fact~\ref{fc:max}. 
%As for the output neurons in the network denoted as $y^*_1, \dots y^*_k$ the output neuron $y^{*}_i$ is an $\OR$ gate between the $i^{th}$ neurons $\widehat{y}_{1,i}, \ldots \widehat{y}_{k,i}$. % \yael{I can add more details if needed.}

\begin{figure}[h!]
\begin{center}
\includegraphics[scale=0.5]{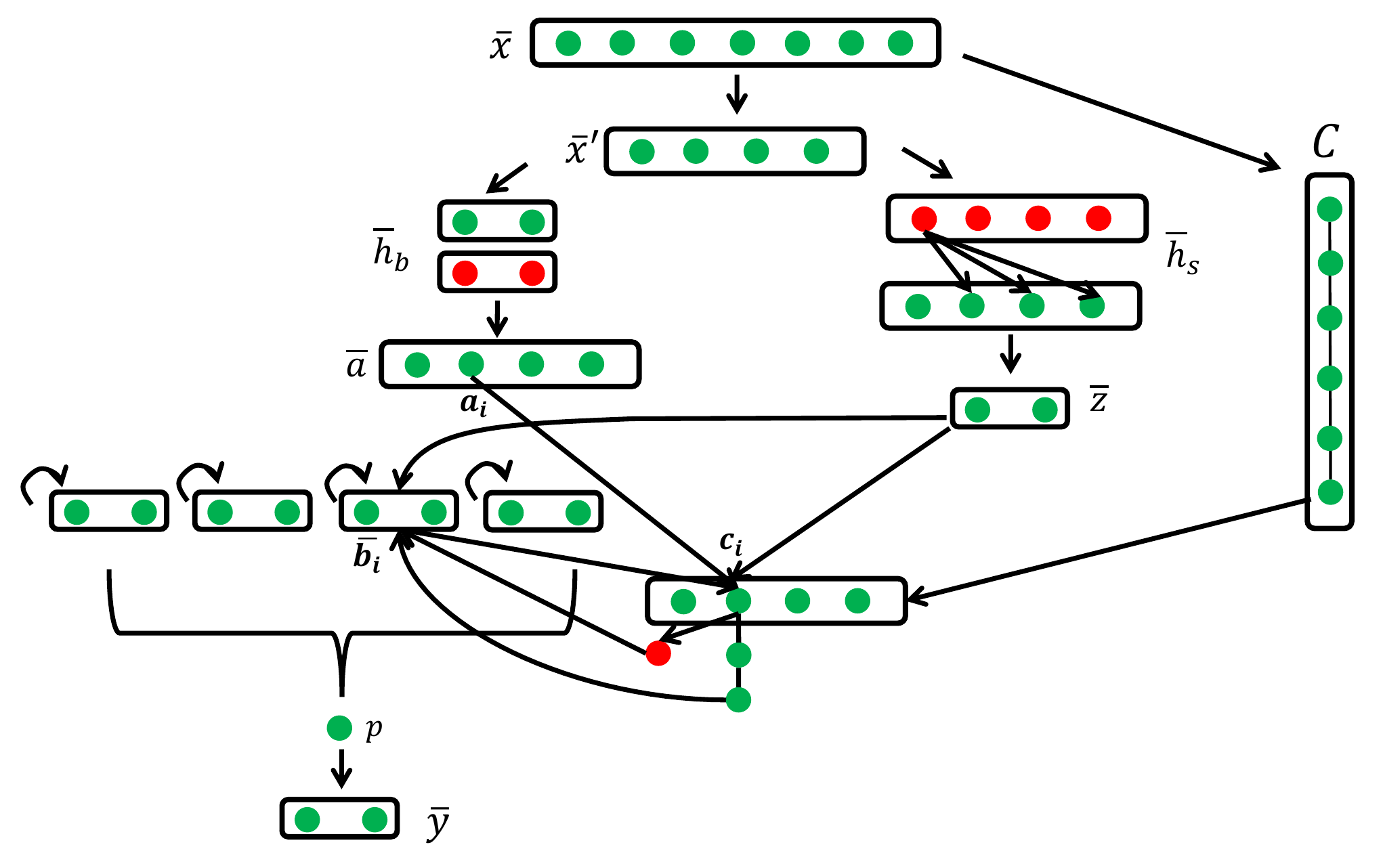}
\caption{\label{fig:distinct} \small{The distinct elements network. The red circles represent inhibitory neurons and the green circles represent excitatory neurons. First, the input element $x$ is encoded into the $\log n$ neurons $x'$ representing the value of $x$ in binary. Then $x'$ is hashed into two strings: (i) $\overbar{h}_b$ that encodes the bucket to which $x$ is mapped, and (ii) $\overbar{h}_s$ which is the $3\log n$-bit suffix of the hash value of $x$. The network then encodes the number of leading zeros in the hash string $\overbar{h}_s$ into a binary vector $\overbar{z}$, and extracts the bucket index into a unit vector $\overbar{a}$. The current maximum leading zeros in each bucket is stored in the counters on the left ($\overbar{b}_1, \ldots \overbar{b}_B$). Next, the network compares between the number of leading zeros in the current string and the value stored in the relevant bucket using the comparison neurons $c_1, \ldots c_{B}$. The chain $C$ is responsible for scheduling the comparison to occur only after the updated value has been computed. If needed, the corresponding bucket $\overbar{b}_i$ is updated with the value encoded in $\overbar{z}$ using the neurons connected to $c_i$.}}
\end{center}
\end{figure}
We are now ready to analyze the correctness of the construction, and by that complete the proof of Theorem \ref{lem:disnct-elements}.

\paragraph{Proof of Theorem \ref{lem:disnct-elements}.}
We show that the proposed network implements the LogLog algorithm of~\cite{DurandF03}. 
Given an input $\overbar{x}$ representing an element $i \in [n]$ introduced in round $\tau_0$, in round $\tau_0+1$ the neurons $\overbar{x}'$ hold the binary encoding of $i$. 
Moreover, due to the neural timer $\mathcal{NT}$ connected to $\overbar{x}'$, we can assume that $\overbar{x}'$ keeps presenting the value $i$ for $O(\log \log n)$ rounds. 

By Lemma~\ref{lem:hash}, the neurons $\overbar{h}_s$ encode the output of a pairwise independent hash functions $h_1: [n] \rightarrow [n^3]$, and $\overbar{h}_b$ encodes the output of a pairwise independent hash functions $h_2: [n] \rightarrow [1/\epsilon^2]$ by round $\tau_1 = \tau_0 + O(\log \log n)$.
We next observe that the neurons $\overbar{z}$ encode the number of leading zeros in $h_1(\overbar{x})$ starting round $\tau_1+2$. 
\begin{observation} \label{obs:leading}
Starting round $\tau_1+2$ it holds that $\dec(\overbar{z})$ encode the number leading zeros in $h_s(\overbar{x})$.
\end{observation}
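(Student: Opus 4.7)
My plan is to trace the firing pattern of the neurons in $\overbar{h}_s$ and $\overbar{h}'_s$ across three consecutive rounds, argue that the inhibitory self-suppression mechanism isolates the neuron corresponding to the leading $1$-bit, and then invoke the fixed weights from $\overbar{h}_s$ to $\overbar{z}$ to read off the number of leading zeros in binary.

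First, Lemma~\ref{lem:hash} guarantees that by round $\tau_1$ the neurons $\overbar{h}_s$ encode the $3\log n$-bit suffix of $h(\overbar{x})$: $h_{s,i}$ fires in round $\tau_1$ iff the $i$-th bit of the hashed string equals $1$. Let $S=\{i : h_{s,i} \text{ fires in round } \tau_1\}$ and let $i^{*}=\max S$. By the indexing convention, the number of leading zeros in $\overbar{h}_s$ equals $3\log n - i^{*}$, so it suffices to show that $\dec(\overbar{z})=3\log n-i^{*}$ starting at round $\tau_1+2$. Since $\overbar{h}'_s$ is an inhibitory copy of $\overbar{h}_s$ (same incoming edges and bias), $h'_{s,i}$ also fires in round $\tau_1$ precisely for $i\in S$.

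Second, consider round $\tau_1+1$. The firing state of each $h_{s,j}$ depends on its inputs in round $\tau_1$, namely the hash-computation input (which would push it to fire iff $j\in S$) together with the negative incoming edges from $\{h'_{s,k} : k>j\}$. For every $j<i^{*}$, the inhibitor $h'_{s,i^{*}}$ is firing in round $\tau_1$ and its large negative weight into $h_{s,j}$ dominates any positive excitation, so $h_{s,j}$ is silent in round $\tau_1+1$. In contrast, $h_{s,i^{*}}$ receives no inhibition (there is no $k>i^{*}$ with $h'_{s,k}$ firing, by maximality of $i^{*}$) and continues to fire in round $\tau_1+1$. Hence in round $\tau_1+1$, $h_{s,i^{*}}$ is the unique firing neuron among $\overbar{h}_s$. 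Moreover this state is stable: once only $h_{s,i^{*}}$ fires, only $h'_{s,i^{*}}$ fires in the next round, maintaining the same suppression pattern for the remaining persistence time.

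Finally, in round $\tau_1+2$ the neurons $\overbar{z}$ evaluate their potentials from the round-$(\tau_1+1)$ state of $\overbar{h}_s$. The weights from $\overbar{h}_s$ into $\overbar{z}$ are wired so that each $h_{s,i}$ contributes to $\overbar{z}$ the binary representation of $3\log n - i$; since only $h_{s,i^{*}}$ fires, $\dec(\overbar{z})=3\log n - i^{*}$, which is exactly the number of leading zeros in $\overbar{h}_s$, and this value persists so long as the input is persistent.

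The main obstacle is purely a timing/weight-calibration check: we must verify that (i) the inhibitory copies fire synchronously with their excitatory counterparts so that the suppression kicks in exactly in round $\tau_1+1$, (ii) the inhibitory weights into $h_{s,j}$ from higher-indexed $h'_{s,k}$ are sufficiently negative (e.g., at least $-2$ times the maximum positive excitation) to override the hash-driven excitation of $h_{s,j}$ even when multiple bits are $1$, and (iii) the bias of $\overbar{z}$ is set so that its readout aligns with the standard binary encoding; each of these is routine given the construction.
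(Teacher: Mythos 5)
Your proof is correct and follows essentially the same approach as the paper: use Lemma~\ref{lem:hash} to pin down the state of $\overbar{h}_s,\overbar{h}'_s$ at round $\tau_1$, argue that the inhibitory edges from higher-indexed $\overbar{h}'_s$ neurons kill every firing neuron of $\overbar{h}_s$ except the leading one by round $\tau_1+1$, and then read off the position of that unique firing neuron into $\overbar{z}$ at round $\tau_1+2$. One small inaccuracy in your stability remark: the inhibitory edges in Step~3 only go from $\overbar{h}'_s$ to $\overbar{h}_s$, not to $\overbar{h}'_s$ itself, so $\overbar{h}'_s$ keeps firing on the whole set $S$ (driven by the persistent hash input), not just on $i^*$; this does not affect your conclusion, since $h'_{s,i^*}$ continuing to fire is exactly what keeps the suppression pattern on $\overbar{h}_s$ stable.
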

\begin{proof}
By Lemma~\ref{lem:hash}, the neurons $\overbar{h}_s$ encode $h_1(\overbar{x})$ by round $\tau_1$. Due to the inhibition of $\overbar{h}'_s$ in Step 3, starting round $\tau_1+1$ the only neuron firing in $\overbar{h}_s$ is the leading one in the binary representation of $h_1(\overbar{x})$. Hence, if $\overbar{h}'_{s,i}$ fires in round $\tau+1$, then the number of leading zeros in $h_1(\overbar{x})$ is $|\overbar{h}_s|-i = 3\log n - i$. Due to the edges from $\overbar{h}_s$ to $\overbar{z}$, starting round $\tau_1+2$ the neurons $\overbar{z}$ holds the binary encoding of the number of  leading zeros in $h_1(\overbar{x})$.
\end{proof}
Let $j= h_2(\overbar{x})= \dec(\overbar{h}_b)$ be the bucket to which the input $x$ is mapped to in round $\tau_1$. 
We first claim that for every bucket $j' \neq j$, the neurons $\overbar{b}_{j'}$ do not change their values (from round $\tau_0$ and as long as the input has not changed). 
Starting at round $\tau_1+1$, the neuron $a_j$ -- corresponding to bucket $j$ -- fires, where for every $j' \neq j$, the neuron $a_{j'}$ is idle. Thus, the comparison neuron $c_{j'}$ does not fire and therefore for every index $j' \neq j$ the value stored in bucket $\overbar{b}_j$ does not change.

As for $\overbar{b}_{j}$, let $v_j(\tau)$ be the value stored in bucket $\overbar{b}_j$ in round $\tau$. 
We claim that if the number of leading zeros in $h_2(\overbar{x})$, denoted as $v_0$, is larger than $v_j(\tau_0)$,  then for $\tau_2 = \tau_0 + O(\log \log n)$ it holds that $v_j(\tau_2)= v_s$. 
We note that if $v_0 \leq v_j(\tau_0)$ %the number of leading zeros in $h_2(x)$ is at most $v_j(\tau_0)$, 
by Observation~\ref{obs:leading} starting round $\tau_1+2$ also $\dec(\overbar{z}) \leq v_j(\tau_0)$. Setting $\tau > \tau_1+2 -\tau_0$, the comparison neuron $c_j$ will not fire and therefore the value stored in $\overbar{b}_j$ does not change.

\begin{claim} \label{clm:dist-new}
%If the number of leading zeros in $h_2(x)$ is larger than $v_j(\tau_0)$,
If $v_0 > v_j(\tau_0)$,  then for round $\tau_2 = \tau_0 + O(\log \log n)$ it holds that $v_j(\tau_2)=v_0$. %is equal to the number of leading zeros in $h_2(x)$.
Moreover, the value stored in $\overbar{b}_j$ will not change until the next update is presented.
\end{claim}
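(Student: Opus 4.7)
The plan is to trace the round-by-round timing of signals through the comparison-and-update subcircuit, leveraging Observation~\ref{obs:leading}, Lemma~\ref{lem:hash}, and the delay chain $C$ for synchronization. First I would fix round $\tau_1 + 2$ as the moment when $\overbar{z}$ encodes $v_0$, $\overbar{h}_b$ encodes $j$ (so $a_j$ is firing while $a_{j'}$ is idle for every $j' \neq j$), and the bucket $\overbar{b}_j$ still encodes $v_j(\tau_0)$ (the self-loops have preserved its value and no reset has occurred). Since the chain $C$ has length $\tau = \Theta(\log\log n)$ chosen larger than the number of rounds needed by the hashing subnetwork plus the leading-zeros extraction, $\sigma_\tau$ fires in round $\tau_0 + \tau > \tau_1 + 2$, so all three prerequisites for the comparison are simultaneously available at round $\tau_a := \tau_0 + \tau + 1$.

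Next I would analyze $\pot(c_j, \tau_a)$ under the hypothesis $v_0 > v_j(\tau_0)$. The contributions are: $10\log n$ from $a_j$, $10\log n$ from $\sigma_\tau$, $\dec(\overbar{z}) = v_0$ from the excitatory edges from $\overbar{z}$, and $-\dec(\overbar{b}_j) = -v_j(\tau_0)$ from the inhibitory copy $\overbar{b}'_j$, minus the bias $20\log n + 1$. The total equals $v_0 - v_j(\tau_0) - 1 \geq 0$, so $c_j$ fires at round $\tau_a$. (Conversely, had $v_0 \leq v_j(\tau_0)$, the sum would be negative and $c_j$ would remain idle, which is also what is needed for the preceding remark in the paragraph before the claim.)

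Then I would trace the two-step update chain. At round $\tau_a + 1$, neurons $c_j^1$ and $r_j$ fire. At round $\tau_a + 2$ two things happen in parallel: $r_j$'s large negative weight to $\overbar{b}_j$ overwhelms its self-loops so the stored value is erased, and $c_j^2$ fires. Crucially, the neurons in $\overbar{z}$ still hold $v_0$ at round $\tau_a + 2$, because the inhibitor $r_{j,2}$ (also triggered by $c_j^1$) only begins to inhibit $\overbar{z}$ and the chain $C$ from round $\tau_a + 3$ onward. Hence the $\AND$ gates defining $\overbar{b}_j$ receive both $c_j^2$ and the bits of $\overbar{z}$ in round $\tau_a + 2$, and $b_{j,k}$ fires at round $\tau_a + 3$ precisely when the $k$-th bit of $v_0$ is $1$. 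Setting $\tau_2 = \tau_a + 3 = \tau_0 + O(\log\log n)$ yields $v_j(\tau_2) = v_0$.

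Finally, for persistence I would invoke the inhibitor $r_{j,2}$: from round $\tau_a + 3$ onward it silences $C$, $\overbar{z}$, and the input neurons $\overbar{x}$, so $c_j$ cannot refire and no further $\AND$-gate for $\overbar{b}_j$ activates until a new input arrives; meanwhile, the self-loops on $\overbar{b}_j$ maintain the value $v_0$. The step I expect to require the most care is the round-by-round bookkeeping across $c_j \to c_j^1 \to c_j^2$, $r_j$, and $r_{j,2}$: a single off-by-one error in the delay chain length $\tau$ or in the relative firing times of the erase signal $r_j$, the write-enable $c_j^2$, and the block signal $r_{j,2}$ would either allow a spurious double update, or inhibit $\overbar{z}$ before the $\AND$ gates latch, losing the new value. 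Verifying that the erasure at round $\tau_a + 2$ and the latching at round $\tau_a + 3$ are properly interleaved with the onset of $r_{j,2}$'s inhibition is the crux of the argument.
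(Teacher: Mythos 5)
Your proof takes essentially the same route as the paper's: set up the round at which $\sigma_\tau$, $a_j$, $\overbar z$, and $\overbar b'_j$ are simultaneously valid, conclude $c_j$ fires, and then trace the chain $c_j \to c_j^1 \to c_j^2$ together with $r_j$ (erasure) and $r_{j,2}$ (reset) to show a single update to $\overbar b_j$ and subsequent quiescence. Your explicit potential computation $\pot(c_j) = 20\log n + v_0 - v_j(\tau_0) - (20\log n + 1)$ is a welcome addition that the paper states only implicitly, and your final caveat is well placed: the round-by-round interaction between $r_j$'s inhibition, the $\AND$-gate latching of $\overbar b_j$, and the onset of $r_{j,2}$ is indeed where a careless off-by-one (e.g., $c_j$ re-firing before $r_{j,2}$ silences $C$ and $\overbar z$) could break the single-update invariant, and the paper itself glosses over this at the same level you do. The minor one-round discrepancy between your $\tau_2 = \tau_0 + \tau + 4$ and the paper's $\tau_0 + \tau + 5$ stems only from differing conventions on when $\sigma_\tau$ first fires and is immaterial.
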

\begin{proof}
%By observation~\ref{obs:leading} the number of leading zeros in $h_2(x)$ is encoded in $\overbar{z}$ starting round $\tau_1+2$.
The index neuron $a_j$ starts firing by round $\tau_1+1$ due to the incoming edges from $\overbar{h}_b$.
We set the parameter $\tau$ such that $\tau > \tau_1 -\tau_0 + 3$. %Hence, the neuron $\sigma_{\tau}$ in the chain $C$ fires in round $\tau' =\tau_0+\tau+1$. 
Thus, in round $\tau' = \tau_0+\tau+1$ both $\sigma_{\tau}$ and $a_j$ fires. 
Since $\tau' > \tau_1+2$, by Observation~\ref{obs:leading} in round $\tau'$ it also holds that $\dec(\overbar{z})=v_0$ and $\dec(\overbar{z}) > v_j(\tau_0)$.  We conclude that the comparison neuron $c_j$ fires in round $\tau'+1$. 
Due to the inhibitor $r_i$ all neurons in $\overbar{b}_j$ are idle starting at round $\tau'+3$, and due to the edges from the neuron $c^2_{i}$, in round $\tau'+4$ the neurons $\overbar{b}_{j}$ will obtain the value $v_0$ encoded using the neurons $\overbar{z}$. 

Additionally, due to the inhibitor $r_{i,2}$, starting round $\tau'+4$, the neurons $\overbar{z}$, $C$ and $\overbar{x}$ are idle. Therefore, the comparison neuron $c_j$ will not fire until the next input is presented and no additional update will be performed. The claim follows for $\tau_2=\tau'+4$. 
\end{proof}
Combining Lemma~\ref{lem:stream-distinct}, Claim~\ref{clm:dist-new} and Steps (7,8), the upper bound of Theorem~\ref{lem:disnct-elements} is established.

\paragraph{Lower Bound.} Finally, we show that the space-bound of Theorem \ref{lem:disnct-elements} is nearly optimal by using a reduction from communication complexity. 
\begin{lemma}\label{lem:distel}[Neural Lower Bound]
Any SNN for maintaining a $(1+\epsilon)$ approximation for the number of distinct elements with constant probability requires $\Omega(1/\epsilon^2+\log n)$ neurons. 
\end{lemma}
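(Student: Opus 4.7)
The plan is to establish the two terms of the lower bound separately via reductions from two-party communication complexity, leveraging the key observation that randomized SNNs can be efficiently simulated in the shared-randomness communication model, so the random edge weights are \emph{free} and only the firing states of auxiliary neurons need to be communicated.

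First I would prove a simulation lemma: any SNN $\cN$ with $S$ non-input neurons, used to solve a data-stream problem on a stream that is split between Alice (prefix) and Bob (suffix), can be simulated in the two-party model with public randomness using $O(S)$ bits of one-way communication. The random edge weights are sampled once from the shared randomness source, so both parties can reconstruct identical copies of $\cN$ without communication. Alice runs $\cN$ on her prefix, letting each input persist for the required number of rounds, and at the end transmits the binary firing vector of the $S$ non-input neurons in the last round. Bob then initializes his copy of $\cN$ with this state and continues processing his suffix. Since the dynamics of an SNN in any round depend only on the firing states of the previous round and on the (shared) edge weights, this perfectly simulates running $\cN$ on the concatenated stream.

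Second, for the $\Omega(1/\epsilon^2)$ term I would reduce from the Gap-Hamming-Distance problem: Alice holds $x \in \{0,1\}^k$ and Bob holds $y \in \{0,1\}^k$ with $k = \Theta(1/\epsilon^2)$, and they must distinguish $d_H(x,y)\ge k/2+\sqrt{k}$ from $d_H(x,y)\le k/2-\sqrt{k}$; this requires $\Omega(k)=\Omega(1/\epsilon^2)$ bits even with shared randomness (Chakrabarti--Regev). Alice converts $x$ into a stream by inserting item $i$ whenever $x_i=1$, and Bob does the same for $y$. The number of distinct elements equals $(|x|+|y|+d_H(x,y))/2$, where $|x|$ and $|y|$ can be communicated with $O(\log k)$ extra bits. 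A sufficiently tight $(1\pm\epsilon)$ approximation to the number of distinct elements with constant probability therefore decides GHD, and combined with the simulation lemma this forces $S=\Omega(1/\epsilon^2)$.

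Third, for the $\Omega(\log n)$ term I would argue that the $\log n$ output bits must already be present as non-input neurons: the answer $f_1$ lies in $\{0,1,\ldots,n\}$, and a $(1+\epsilon)$ approximation with $\epsilon<1$ must distinguish the values $1,2,4,\ldots,n$, which requires $\Omega(\log n)$ distinct binary output configurations and hence $\Omega(\log n)$ output neurons. Combining the two bounds yields the claimed $\Omega(1/\epsilon^2+\log n)$.

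The main obstacle is the treatment of randomness. A randomized SNN might implicitly encode $\mathrm{poly}(n,S)$ random bits in its edge weights, and a naive simulation would have to transmit all of them, inflating the communication cost and destroying the reduction. The crucial idea is that in the public-randomness communication model these edge weights can be drawn from the shared randomness string at no cost, so the actual information Alice must send Bob is exactly the $S$-bit firing vector of the auxiliary and output neurons. This is precisely what allows the streaming lower bound to translate to an SNN lower bound without the additive $O(\log(nm))$ overhead incurred by the generic reduction of Theorem~\ref{lem:snn-rand-low-spaceIntro}.
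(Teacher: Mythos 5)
Your approach to the $\Omega(1/\epsilon^2)$ term is essentially the paper's: the paper likewise observes that the public-coin two-party model makes the random edge weights free, so Alice need only transmit the $S$-bit firing vector of the non-input neurons, and it then invokes Woodruff's Gap-Hamming reduction. That half of your argument is sound and matches the paper.

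The $\Omega(\log n)$ term is where you genuinely diverge, and your argument does not work. The paper obtains this term by a second communication-complexity reduction (to Disjointness, following Alon--Matias--Szegedy), run through the same public-coin simulation. You instead try a direct output-counting argument, and there is a basic counting error: distinguishing the $\Theta(\log n)$ values $1,2,4,\ldots,n$ forces the output to take $\Omega(\log n)$ \emph{distinct configurations}, but $k$ binary output neurons already give $2^k$ configurations, so this only forces $\Omega(\log\log n)$ output neurons, not $\Omega(\log n)$. The argument is further undermined by the fact that the problem does not constrain the output encoding — indeed Theorem~\ref{lem:disnct-elements} has the network output the \emph{logarithm} of the estimate — and by the fact that the bound in the theorem is stated for \emph{auxiliary} neurons, so a lower bound on output neurons alone would not suffice in any case. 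To recover the $\Omega(\log n)$ term you should replace the counting argument with a reduction to a communication problem (as in the paper: Disjointness, or an INDEX-type problem) and feed it through the same one-way simulation lemma you already established for the $\Omega(1/\epsilon^2)$ part.
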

\begin{proof}
The lower bound is based on a reduction from communication complexity in the same manner as was shown for the streaming setting. Specifically, for the streaming setting Woodruff \cite{woodruff2004optimal} showed a lower bound of $\Omega(1/\epsilon^2)$ space for the $(1+\epsilon)$ Distinct Elements problem with $\delta=O(1)$. This was shown via a reduction from the Gap-Hamming problem in the public-coin two-party model. 
In our context, we use a similar reduction in order to show a lower bound of $\Omega(1/\epsilon^2)$ on the number of neurons in an SNN network. Let $\mathcal{N}$ be a network for maintaining an $(1+\epsilon)$ estimate for the number of distinct elements with constant probability using $s$ non-input neurons. We use this network to provide a one-way communication complexity protocol between Alice and Bob. Since the random coins are public, both Alice and Bob can compute the network $\mathcal{N}$ (i.e., with the random edge weights). Alice simulates her input items over the network $\mathcal{N}$ by feeding them as input to the network (for a fixed number of rounds). She then sends to Bob the firing states of the non-input neurons in $\mathcal{N}$. This allows Bob to continue with the network simulation by feeding it its input items. The correctness follows by the correctness of the SNN network $\mathcal{N}$ and the communication complexity lower bound.

To show a lower bound of $\Omega(\log n)$ we will use the reduction to the Disjointedness problem in the communication-complexity setting by Alon at el. \cite{alon1999space}. This reduction as well works in the two-party model with public-coins, which allows Alice and Bob to compute the same network $\mathcal{N}$ and simulate its operation over their input items in the same manner as above. 
\end{proof}

%Due the space consideration, the neural implementation for the \emph{distinct-element problem} (Proof of Theorem \ref{lem:disnct-elements}) is deferred to Appendix \ref{sec:distinct}. We also show in this proof how to translate the streaming lower bound into a matching space lower bound for the neural setting.
\vspace{-10pt}
\section{Median Approximation}\vspace{-5pt}
Before presenting the neural computation of the approximate median, we describe the neural implementation of the Count-Min Sketch and prove Theorem~\ref{cor:min-sketch}. 
\vspace{-10pt}\subsection{A Neural Implementation of Count-Min Sketch} \label{sec:count-min}
We follow the streaming implementation of Count-Min by \cite{cormode2005improved} described as follows. 
The algorithm maintains a data structure that consists of $\ell= O(\log 1/\delta)$ hash tables $T_1, \ldots T_\ell$, each with $b=O(1/\epsilon)$ bins, and each table $T_i$ is indexed using a different pairwise independent hash function $h_i$ (i.e., the output domain of $h_i$ is $\{0,1\}^{\log b}$). The operation $\inc(x)$ increases the value in each bin $T_i[h_i(\overbar{x})]$ for every $i \in [\ell]$. The $\countt(x)$ operation returns the value $\min_{i \in [\ell]} T_i[h_i(\overbar{x})]$. %In \cite{cormode2005improved} it is shown that: 
\begin{fact}[\cite{cormode2005improved}] \label{fact:count-min}
$\Pr[\countt(x) \notin (f(x),f(x)+O(m/b))] \leq 1/2^{\Omega(\ell)}$ where $f(x)$ is actual frequency of $x$ in the stream of length $m$. 
\end{fact}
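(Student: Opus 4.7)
The plan is to establish the classical two-sided guarantee of Count-Min sketch by (i) showing that each individual bin provides an upper bound on $f(x)$ in expectation up to an $m/b$ additive term, (ii) applying Markov's inequality per row to obtain a constant-probability tail bound, and (iii) using the independence across the $\ell$ hash tables to amplify failure probability geometrically, since $\countt(x)$ is defined as the minimum across rows.

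First, I would observe the lower bound half for free: since the stream is insertion-only and every $\inc(x)$ increments $T_i[h_i(x)]$ for all $i$, we have $T_i[h_i(x)]\ge f(x)$ deterministically for every $i$, so $\countt(x)\ge f(x)$ always. This takes care of the left endpoint of the target interval with probability $1$. The work is therefore to bound the overestimate from above.

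Next, fix $x$ and a row $i$, and let $Y_i = T_i[h_i(x)] - f(x)\ge 0$. I would write $Y_i$ as the total frequency mass that collides with $x$ under $h_i$, i.e.\ $Y_i = \sum_{y\ne x} f(y)\cdot \mathbbm{1}[h_i(y)=h_i(x)]$. Pairwise independence of $h_i$ (and uniformity on the $b$ bins) gives $\Pr[h_i(y)=h_i(x)]=1/b$ for every $y\ne x$, so by linearity of expectation,
\[
\mathbb{E}[Y_i]\;=\;\sum_{y\ne x} f(y)/b\;\le\; m/b.
\]
Markov's inequality then yields $\Pr[Y_i \ge 2m/b] \le 1/2$ for each row. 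This is exactly the per-row constant-probability guarantee that lets us amplify.

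Finally, since the hash functions $h_1,\ldots,h_\ell$ are drawn independently (from the pairwise-independent family), the events $\{Y_i \ge 2m/b\}$ are mutually independent across $i$. Because $\countt(x) = f(x) + \min_{i\in[\ell]} Y_i$, the minimum exceeds $2m/b$ only if all $\ell$ rows do, hence
\[
\Pr\!\left[\countt(x) \ge f(x) + 2m/b\right] \;\le\; (1/2)^{\ell} \;=\; 1/2^{\Omega(\ell)}.
\]
Combined with the deterministic lower bound $\countt(x)\ge f(x)$, this gives exactly the stated bound $\Pr[\countt(x)\notin(f(x),\,f(x)+O(m/b))]\le 1/2^{\Omega(\ell)}$. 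I do not foresee a main obstacle here: the argument is essentially a textbook application of pairwise independence plus Markov plus independence amplification, and the only mild care needed is noting that insertion-only streams make $Y_i$ non-negative so that Markov applies cleanly and the minimum across rows is the right aggregator.
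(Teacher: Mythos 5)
Your argument is correct and is exactly the standard Count-Min analysis from Cormode--Muthukrishnan, which the paper simply cites for this fact rather than reproving: deterministic lower bound from insertion-only updates, expected collision mass at most $m/b$ per row via pairwise independence, Markov to get a constant per-row failure probability, and independence of the $\ell$ rows to amplify to $2^{-\Omega(\ell)}$. No gaps; the only cosmetic point is that the stated interval is written with open endpoints, and the intended (and your proved) guarantee is $f(x)\le \countt(x)\le f(x)+O(m/b)$.
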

\begin{definition}[Neural Count-Min Sketch]
Given parameters $\epsilon,\delta \in (0,1)$, a neural Count-Min sketch network $\mathcal{N}_{\epsilon,\delta}$ has an input layer of $\log n+1$ neurons denoted as $a, x_1, \ldots x_{\log n}$, an output layer of $\log m$ neurons $y_1, \ldots y_{\log m}$, and a set of $s$ auxiliary neurons. The neurons $\overbar{x}=(x_1, \ldots x_{\log n})$ encode the binary representation of an element $x \in [n]$ and the neuron $a$ indicates whether this is an $\inc$ or $\countt$ operation, where $a=1$ indicates an $\inc$ operation. 
For every fixed input value $\overbar{x}$ presented at round $t$ and $a=0$ (i.e., a $\countt$ operation), 
let $\mathcal{N}_{\epsilon,\delta}(\overbar{x})$ be the value encoded in binary by the output layer $y_1, \ldots y_{\log m}$ in round $t+\tau_{n,m}$.  It holds that $\Pr[\mathcal{N}_{\epsilon,\delta}(\overbar{x})\notin (f(x),f(x)+O(\epsilon m'))]\leq \delta$, where $m' \leq m$ is the stream length by round $t$ and $f(x)$ is the current frequency of $x$.
\end{definition}
We first describe the network construction to support the $\inc(x)$ operation. Then we explain the remaining network details for implementing a $\countt(x)$ operation.
 
\vspace{-10pt}
\paragraph{Supporting $\inc(x)$ operation.} 
The network contains $\ell=O(\log 1 / \delta)$ sub-networks
$\mathcal{H}^1_{n,b},\ldots, \mathcal{H}^\ell_{n,b}$ each implements a pairwise independent hash function $h_i: \{0,1\}^{\log n} \to \{0,1\}^{\log b}$ using Lemma~\ref{lem:hash}.
The output vector of each network $\mathcal{H}^i_{n,b}$ is denoted by $\overbar{h}_i$ for every $i \in \{1,\ldots,\ell\}$. Every $\overbar{h}_i$ has an inhibitory copy $\overbar{h}'_i$.

For each sub-networks $\mathcal{H}^i_{n,b}$, and for every value $j \in \{1,\ldots, b\}$, the network contains a counter sub-network that counts the number of data-items $x$ in the stream that satisfies $h_i(\overbar{x})=j$. 

Every counter network is implemented by a neural counter network from Fact \ref{fc:neural-counter} with time parameter $t=m$. Let $\cC_{i,1}, \ldots, \cC_{i,b}$ be the neural counter networks corresponding to the $i^{th}$ hash network $\mathcal{H}^i_{n,b}$. The counter $\cC_{i,j}$ is updated based on the values of the output neurons $\overbar{h}_i$ as follows. 
For every counter $\cC_{i,j}$ the network contains an index neuron $c_{i,j}$ with input from $\overbar{h}_i$ and $\overbar{h'}_i$ which fires only if\footnote{For implementation reasons, verifying that $\dec(\overbar{h}_i)=j$ requires input from both $\overbar{h}_i$ and $\overbar{h'}_i$.} $\dec(\overbar{h}_i)=j$. The input to the counter $\cC_{i,j}$ denoted as $e_{i,j}$ is an $\AND$ gate of the input neuron $a$ and the index neuron $c_{i,j}$, firing in $\inc(x)$ operations where $h_i(\overbar{x})=j$.
To make sure the counter is incremented once per $\inc(x)$ operation, the network contains an inhibitory neuron denoted as $e'_{i,j}$ which has the same incoming edges and weights as $e_{i,j}$, that inhibits the neurons $e_{i,j}$, $c_{i,j}$ and $a$. This guarantees that $e_{i,j}$ would be active for exactly \emph{one} round per $\inc(x)$ operation.
\paragraph{Supporting $\countt(x)$ operation.} 
To support a $\countt(x)$ operation, for each counter $\cC_{i,j}$, the network includes $\log m$ neurons $\overbar{s}_{i,j}=s^1_{i,j}, \ldots s^{\log m}_{i,j}$ which hold the value stored in the counter $\cC_{i,j}$ such that $h_i(\overbar{x})=j$. Each neuron $s_{i,j}^k$ is an $\AND$ gate of the index neuron $c_{i,j}$ and the $j^{th}$ output neuron of $\cC_{i,j}$.  In addition, for every $i \in \{1,\ldots, \ell\}$ there are $\log m$ neurons $\overbar{g}_i=g_{i,1}, \ldots g_{i,\log m}$ where the $j^{th}$ neuron $g_{i,j}$ is an $\OR$ gate of all the $j^{th}$ neurons of the vectors $\overbar{s}_{i,1}, \ldots, \overbar{s}_{i,b}$. As a result, $\overbar{g}_i$ encodes the value stored in $h_i(\overbar{x})$. Finally, the output value is set to be the \emph{minimum value} of $\dec(\overbar{g}_1), \ldots ,\dec(\overbar{g}_\ell)$ using the minimum computation network of \cite{maass2000computational}. %Fact~\ref{fc:max}.
%The correctness analysis is in Appendix \ref{app:minsketc}.

%\def\APPENDMINSKETCH{
\paragraph{Correctness.} Our goal is to show the proposed network simulates the Count-Min sketch data structure of~\cite{cormode2005improved}. 
Let $(x,a)$ be an input introduced in round $\tau_0$. We start by showing the correctness of an $\inc(x)$ operation (i.e., when $a=1$). Specifically, we show that the counters $\cC_{i,h_i(\overbar{x})}$ are incremented by one, and the remaining counters $\cC_{i,j}$ for $j \neq h_i(\overbar{x})$ are unmodified. 
\begin{claim} \label{clm:counter}
For every $i \in \{1,\ldots, \ell\}$, the value encoded in the output neurons of the counter $\cC_{i,h_i(\overbar{x})}$ is
increased by one by round $\tau_2 = \tau_0 + O(\log m +\log \log n)$, and for every $j \neq h_i(\overbar{x})$ the output of the counter $\cC_{i,j}$ is not incremented. The increment to the $\cC_{i,h_i(\overbar{x})}$ counters occur only once.
\end{claim}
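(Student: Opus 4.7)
The plan is to trace the signal through the four stages of the network (hashing, index extraction, AND-gating with the operation flag, counter update) and argue correctness at each stage, leveraging the primitives already established.

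First I would apply Lemma~\ref{lem:hash} to each sub-network $\mathcal{H}^i_{n,b}$. Since $\overbar{x}$ is presented at round $\tau_0$ and persists long enough, there is a round $\tau_1 = \tau_0 + O(\log \log n)$ after which the output vector $\overbar{h}_i$ stably encodes $h_i(\overbar{x}) \in [b]$, with its inhibitory copy $\overbar{h}'_i$ encoding the complementary bits. Then I would verify that the index neuron $c_{i,j}$ (connected to both $\overbar{h}_i$ and $\overbar{h}'_i$ so as to implement equality with $j$) fires from round $\tau_1 + 1$ onward if and only if $j = h_i(\overbar{x})$; this is a simple bias/threshold check on the $\log b$ incoming edges.

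Next I would analyze the AND gate $e_{i,j}$, whose inputs are $a$ and $c_{i,j}$. Since $a=1$ is the increment flag, $e_{i,j}$ fires at round $\tau_1 + 2$ exactly when $j = h_i(\overbar{x})$, and all other $e_{i,j'}$ remain silent for the entire operation. The key subtlety is that without additional control $e_{i,j}$ would keep firing as long as $a$ and $c_{i,j}$ persist, which would cause the counter to be incremented multiple times. This is handled by the inhibitor $e'_{i,j}$, defined with the same incoming edges as $e_{i,j}$: $e'_{i,j}$ also fires at round $\tau_1 + 2$ and, by virtue of its outgoing inhibitory edges to $e_{i,j}$, $c_{i,j}$, and $a$, forces all three to be silent from round $\tau_1 + 3$ onward. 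Hence $e_{i,j}$ fires for exactly \emph{one} round per $\inc$ operation, and the other bins are untouched.

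Finally I would invoke Fact~\ref{fc:neural-counter} on each counter $\cC_{i,j}$: a single spike of its input $e_{i,j}$ causes the counter to increment its encoded value by one within $O(\log m)$ further rounds, while the absence of any spike on $e_{i,j'}$ leaves $\cC_{i,j'}$ unchanged. Setting $\tau_2 = \tau_1 + O(\log m) = \tau_0 + O(\log m + \log \log n)$ completes the argument. The main obstacle is the exactly-once guarantee; everything else is a straightforward composition of the hashing, index, and counter modules, and this is where the carefully designed inhibitor $e'_{i,j}$ (which simultaneously suppresses $a$, $c_{i,j}$, and itself implicitly via $a$'s inactivity) does the critical work.
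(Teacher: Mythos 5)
Your proposal is correct and follows essentially the same approach as the paper's proof: apply Lemma~\ref{lem:hash} to get the hash values by round $\tau_1 = \tau_0 + O(\log\log n)$, trace through the index neurons $c_{i,j}$ and AND gates $e_{i,j}$, use the inhibitor copy $e'_{i,j}$ to enforce the exactly-once guarantee, and conclude via Fact~\ref{fc:neural-counter}. The only difference is that you spell out the inhibition timeline in slightly more detail; the logical structure and invoked primitives are identical.
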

\begin{proof}
By Lemma~\ref{lem:hash}, the neurons $\overbar{h}_1, \ldots \overbar{h}_\ell$ encodes the values $h_1(\overbar{x}), \ldots h_\ell(\overbar{x})$ by round $\tau_1 = \tau_0 + O(\log \log n)$. 
Thus, for every $i \in \{1,\ldots, \ell\}$ starting round $\tau_1+1$, the index neuron $c_{i,j}$ fires iff $j=h_i(\overbar{x})$. Since $a=1$, starting round $\tau_0$, the $\AND$ gate $e_{i,h_i(\overbar{x})}$ fires in round $\tau_1+2$ (and $e_{i,j}$ are idle for $j \neq h_i(\overbar{x})$). Due to the inhibitor copy of $e_{i,j}$, in round $\tau_1+3$ the neuron $e_{i,j}$ does not fire. Moreover, since the inhibitor also inhibits $a$ nd $c_{i,j}$, the neuron $e_{i,j}$ is idle until a new $\inc$ operation is presented. Hence, the input neuron $e_{i,j}$ of the counter network $\cC_{i,j}$ fires exactly once, and therefore the counter is incremented once, as desired. By Fact \ref{fc:neural-counter}, it holds that the output neurons of $\cC_{i,j}$ hold the correct count by round $\tau_2 = \tau_1 + O(\log m)$.
\end{proof}

We proceed with showing the correctness for the $\countt(x)$ operation (i.e., when $a=1$). 
%Let $\tau_0$ be the round of presenting the update $(x,a=0)$. 
For every $i \in \{1,\ldots, \ell\}$ and $j \in \{1,\ldots, b\}$, let $x_{i,j}(\tau)$ be the (decimal) value encoded by the output neurons of the counter $\cC_{i,j}$ in round $\tau$. We next show that by round $\tau_0+O(\log m + \log \log n)$, the output neurons $\overbar{y}$ of the Count-Min sketch encode the minimum value in $x_{1, h_1(\overbar{x})}(\tau_0), \ldots, x_{\ell, h_\ell(\overbar{x})}(\tau_0)$.
\begin{claim} \label{clm:min-output}
The output neurons $\overbar{y}$ encode the value $\min_{i \in \{1,\ldots, \ell\}} x_{i,h_i(\overbar{x})}(\tau_0)$ by round $\tau_2 = \tau_0 + O(\log m+\log \log n)$.  
\end{claim}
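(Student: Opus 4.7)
The plan is to propagate correctness through the layered subnetworks in the natural order: hash evaluation, index selection, counter value copying, OR aggregation, and finally the min computation. First, by Lemma~\ref{lem:hash}, for each $i \in \{1, \ldots, \ell\}$ the hash subnetwork's output $\overbar{h}_i$ encodes $h_i(\overbar{x})$ by some round $\tau_1 = \tau_0 + O(\log \log n)$; its inhibitory copy $\overbar{h}'_i$ stabilizes one round later. Since the index neuron $c_{i,j}$ is wired as an exact-equality detector on the pair $(\overbar{h}_i, \overbar{h}'_i)$, exactly one $c_{i, h_i(\overbar{x})}$ per $i$ fires starting from round $\tau_1 + 1$, while all other $c_{i,j}$ remain silent.

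Next, I would argue that $\overbar{s}_{i,j}$ carries the right information. Because $s^k_{i,j}$ is an AND of $c_{i,j}$ and the $k$-th output bit of counter $\cC_{i,j}$, for $j \neq h_i(\overbar{x})$ the vector $\overbar{s}_{i,j}$ stays entirely silent, while for $j = h_i(\overbar{x})$ it mirrors the current output of $\cC_{i, h_i(\overbar{x})}$. This step relies crucially on the counters being stable, which by Claim~\ref{clm:counter} holds within $O(\log m)$ rounds after the last $\inc$ operation. Consequently the OR gate $g_{i,k}$ fires iff the $k$-th bit of $\cC_{i, h_i(\overbar{x})}$ is set (only one input to the OR can possibly be active), so $\overbar{g}_i$ encodes $x_{i, h_i(\overbar{x})}(\tau_0)$ within $O(1)$ further rounds.

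Finally, applying the min-variant of Fact~\ref{fc:max} to the $\ell$ vectors $\overbar{g}_1, \ldots, \overbar{g}_\ell$ (each of length $\log m$), the output neurons $\overbar{y}$ encode $\min_i \dec(\overbar{g}_i) = \min_{i \in \{1,\ldots,\ell\}} x_{i, h_i(\overbar{x})}(\tau_0)$ within an additional $O(1)$ rounds. The main obstacle I anticipate is timing coherence: the counter outputs must be stable when the gates $s^k_{i,j}$ sample them, and the hash outputs $\overbar{h}_i$ must remain steady during the downstream propagation. This is resolved by the persistence-time assumption of the theorem, which gives each input $\Omega(\log m + \log \log n)$ rounds before the next operation, ensuring that by round $\tau_0 + O(\log m + \log \log n)$ all intermediate layers have converged and $\overbar{y}$ holds the desired minimum, completing the proof of the claim.
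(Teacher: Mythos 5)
Your proof follows the same layer-by-layer argument as the paper: hash stabilization via Lemma~\ref{lem:hash}, index selection by $c_{i,j}$, AND-gating into $\overbar{s}_{i,j}$, OR-aggregation into $\overbar{g}_i$, and then the min-variant of Fact~\ref{fc:max}. The one detail the paper states explicitly that you pass over silently is that because $a=0$, the gate $e_{i,j}$ (the AND of $a$ and $c_{i,j}$) never fires during a $\countt$ query, so its inhibitory copy $e'_{i,j}$ never suppresses $c_{i,j}$ or $a$ --- this is what guarantees $c_{i,h_i(\overbar{x})}$ stays persistently active long enough for the downstream AND and OR gates to sample it; other than that, the two proofs match.
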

\begin{proof}
By Lemma~\ref{lem:hash}, the neurons $\overbar{h}_1, \ldots \overbar{h}_\ell$ encodes the values $h_1(\overbar{x}), \ldots h_\ell(\overbar{x})$ by round $\tau_1 = \tau_0 + O(\log \log n)$. In addition, for every $i \in \{1,\ldots, \ell\}$, the neuron $c_{i,j}$ fires starting round $\tau_1+1$ iff $j = h_{i}(\overbar{x})$. Since $a=0$, starting round $\tau_0$ the counter $\cC_{i,j}$ is not incremented and $c_{i,j}$ is not inhibited by the inhibitor copy of $e_{i,j}$.
Thus combined with the persistence assumption of every input, we conclude that the vector $\overbar{s}_{i,h_i(\overbar{x})}$ encodes $x_{i, h_i(\overbar{x})}(\tau_0)$ starting round $\tau_1+2$. In addition, for every $j \neq h_i(\overbar{x})$ all the neurons of $\overbar{s}_{i,j}$ are idle. Therefore, the neurons $\overbar{g}_i$ encode $x_{i, h_i(\overbar{x})}(\tau_0)$ starting round $\tau_1+3$. The claim follows from fact~\ref{fc:max}. 
\end{proof}
The Theorem follows by combining Claim~\ref{clm:counter}, Claim~\ref{clm:min-output} and Fact~\ref{fact:count-min}.
%}%\APPENDMINSKETCH{

\subsection{Neural Computation of the Approximate Median}%\vspace{-5pt}
In this section, we present our main technically involved algorithmic result for computing an estimate for the median of the data-stream. 
\begin{definition}[Approximate Median]
Given $\epsilon, \delta \in (0,1)$ and a stream $\mathcal{S} =\{ x_1, x_2, \ldots x_m\}$ with each $x_i \in [n]$, in the \emph{approximate median} problem, it is required to output an element $x_j \in \mathcal{S}$ whose rank is $m/2 \pm \epsilon m$ with probability at least $1-\delta$.
\end{definition}
For ease of notation, assume that $n$ is power of $2$.
Our neural solution is based on the streaming algorithm of \cite{cormode2005improved}, that uses $\widetilde{O}(1/\epsilon)$ space. Up to the logarithmic terms, this space-bound is known to be optimal \cite{karnin2016optimal}.
\begin{fact}[Theorem 5 \cite{cormode2005improved}] \label{fct:median}
For every $\epsilon, \delta \in (0,1)$, there exists a randomized streaming algorithm for computing the $\epsilon$-approximate median with probability $1-\delta$ and $\widetilde{O}(1/\epsilon)$ space.
\end{fact}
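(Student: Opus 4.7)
}
The plan is to prove Fact~\ref{fct:median} by giving the Count-Min based construction of Cormode and Muthukrishnan, whose core technique is a \emph{dyadic decomposition} of the universe $[n]$ combined with a hierarchy of Count-Min sketches. Assume $n$ is a power of $2$. For each level $\ell \in \{0,1,\ldots,\log n\}$ partition $[n]$ into the $n/2^\ell$ dyadic intervals $I_{\ell,1},\ldots,I_{\ell,n/2^\ell}$ of length $2^\ell$; note that every element $x \in [n]$ lies in exactly one interval at each level, and these $\log n + 1$ intervals form a root-to-leaf path in a complete binary tree over $[n]$.

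The first step is to describe the data structure. For each level $\ell$, maintain a Count-Min sketch $\mathrm{CM}_\ell$ with accuracy parameter $\epsilon' = \epsilon/(2\log n)$ and failure probability $\delta' = \delta/(2\log n)$; treat each dyadic interval $I_{\ell,j}$ as a ``super-item'' and use $\mathrm{CM}_\ell$ to track the total frequency of all stream elements that fall into it. Processing an update $x_t \in [n]$ takes $O(\log n)$ sketch updates (one per level), each performed by invoking $\mathrm{inc}(I_{\ell, \lceil x_t/2^\ell \rceil})$ on $\mathrm{CM}_\ell$. By the Count-Min guarantee (see Fact~\ref{fact:count-min}), each query $\mathrm{count}(I_{\ell,j})$ returns a value in $[f(I_{\ell,j}),\, f(I_{\ell,j}) + \epsilon' m]$ with probability at least $1-\delta'$.

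The second step is the median extraction. Perform a binary descent of the dyadic tree starting at the root: maintain a running offset $r$ (initially $0$); at level $\ell$ let $I_{\ell,j}$ be the current candidate interval, query the count $\widehat{c}$ of its left child, and recurse into the left child if $r + \widehat{c} \ge m/2$, otherwise recurse into the right child after updating $r \gets r + \widehat{c}$. After $\log n$ steps we arrive at a singleton $\{x^\star\}$, which is returned as the approximate median. The correctness analysis proceeds by union-bounding over the at most $2\log n$ sketch queries made during the descent, so that with probability at least $1-\delta$ every queried count has additive error at most $\epsilon' m = \epsilon m/(2\log n)$. An inductive argument on the level $\ell$ then shows that the accumulated additive error on the partial rank $r$ at depth $\ell$ is at most $\ell \cdot \epsilon' m$, so at the leaf the returned element $x^\star$ has true rank within $\log n \cdot \epsilon' m \le \epsilon m /2 < \epsilon m$ of $m/2$, which is exactly the $\epsilon$-approximate median guarantee.

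Finally, the space accounting: each Count-Min sketch $\mathrm{CM}_\ell$ uses $O((1/\epsilon') \cdot \log(1/\delta') \cdot \log m)$ bits to store $O(\log(1/\delta'))$ tables of $O(1/\epsilon')$ counters, each holding a value in $[0,m]$, plus $O(\log(1/\delta') \cdot \log n)$ bits for the hash functions. Summing over the $\log n + 1$ levels and substituting $\epsilon' = \epsilon/(2\log n)$ and $\delta' = \delta/(2\log n)$ gives total space $O((\log n / \epsilon) \cdot \log(\log n/\delta) \cdot \log m \cdot \log n) = \widetilde{O}(1/\epsilon)$ as claimed. I expect the main subtlety to be the inductive error-propagation argument across levels, i.e.\ verifying that biases from Count-Min (which only ever \emph{overestimate}) combine additively along the descent path and that the ``go left if $r + \widehat{c} \ge m/2$'' rule is robust to $\pm \epsilon' m$ perturbations at each decision; the rest (sketch update, union bound, space tally) is mechanical.
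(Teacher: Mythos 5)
Your proposal is correct and takes essentially the same route as the paper: the paper does not prove Fact~\ref{fct:median} itself but cites Theorem~5 of \cite{cormode2005improved} and sketches exactly this construction --- dyadic intervals of $[n]$, one Count-Min sketch per level with accuracy $\epsilon/\log n$ and failure probability $\delta/\log n$, and a binary search over range queries, with the union bound and error accumulation you describe. The only detail worth noting is that the paper's definition asks for an element of the stream, a point it handles separately (in its SNN construction) rather than in this cited fact, so your rank-based analysis matches the intended statement.
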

We start by providing a high-level exposition of this streaming algorithm, and then explain its implementation in the neural setting. The latter turns out to be quite involved, yet demonstrating the expressive power of SNN networks. 

\paragraph{A high-level description of the streaming algorithm.} The algorithm is based on applying a binary search over \emph{range queries} which, roughly speaking, compute the frequency of the elements in a given range. 
\begin{definition}[Range Queries] 
Given a data-stream of numbers $\mathcal{S}= \{x_1, \ldots , x_m\}$ with each $x_i \in [n]$, a range query receives a range of number $[a,b] \subseteq [1,n]$ and returns the frequency of the items $\{a,a+1, \ldots , b\}$ in the stream $\mathcal{S}$.
\end{definition}
To support range queries with small space, the algorithm maintains $\log n$ data structures of Count-Min sketch, for each of the $\log n$ \emph{dyadic intervals} of $[n]$. 
\begin{definition}[Dyadic Intervals]
The \emph{dyadic intervals} of the set $[n]$ are a collection of $\log n$ partitions of $n$,  $\mathcal{I}_1 \ldots ,\mathcal{I}_{\log n}$ such that
\begin{align*}
\mathcal{I}_0 &= \{\{1\}, \{2\} , \{3\}, \ldots , \{n\}\} \\
\mathcal{I}_1 &= \{\{1,2\}, \{3,4\} , \{5,6\}, \ldots , \{n-1, n\}\} \\
\mathcal{I}_2 &= \{\{1,2,3,4\}, \{5,6,7,8\} , \ldots , \{n-3,n-2, n-1, n\}\} \\
\dots \\
\mathcal{I}_{\log n} &= \{\{1,2 \ldots  n\}\}
\end{align*} 
\end{definition} 
Note that every range $[i,j] \subseteq [n]$ can be written as a union of at most $\log n$ sets from the dyadic intervals. Hence, by introducing $\log n$ Count-Min data structures with parameters $\delta' = \log (\log n/\delta)$ and $\epsilon' = \epsilon / \log n$ for dyadic-intervals of $[n]$, we can answer range queries within an additive error of $m \cdot \epsilon$ with probability $1- \delta$. 
The approximated median is obtained by employing a Binary search over the range queries \footnote{The same algorithm can be applied for any quantile estimation.}. 
\begin{definition}[SNN for the Approximate Median Problem]
Given two integers $n,m$ and additional parameters $\epsilon, \delta \in (0,1)$, an approximate-median network $\mathcal{N}_{n,m}$ has an input layer of $n + 1$ neurons, an output layer of $\log n$ neurons and a set of $s$ auxiliary neurons. The input neurons are denoted as $(a, x_1, \ldots, x_{n})$ where the neuron $a$ indicates whether this is a median query or an insertion operation.
When the input layer represents a median query, the neuron $a$ fires and the neurons $x_1, \ldots x_{n}$  are idle. 
For every round $t$, let $\mathcal{S}_t =  \{a_1, a_2, \ldots a_{t}\}$ be the data-stream presented as input to the network by round $t$. For any median-query presented in round $t$, by round $t+\tau_{n,m}$ the output layer encodes an element $y \in \mathcal{S}_t$ whose rank in $\mathcal{S}_t$ is $t/2\pm \epsilon t$ with probability at least $1-\delta$. %\yael{go back to the parameters after I read the proof. I think it should be $\epsilon t$.}
\end{definition}

\paragraph{The challenge:} The crux of the streaming algorithm is based on a binary search over range queries. A-priori, it is unclear how to implement such a search using a poly-logarithmic number of neurons. Specifically, the (implicit) decision tree that governs the binary search has a linear size. Since the neural network (unlike the streaming algorithm) has to hard-wire the algorithm description, the explicit encoding of the search tree leads to a linear space solution. Our key contribution is in showing a succinct network construction that simulates the binary search of the streaming algorithm using a nearly matching space bound.

\begin{figure}[ht!]
\begin{center}
\includegraphics[scale=0.4]{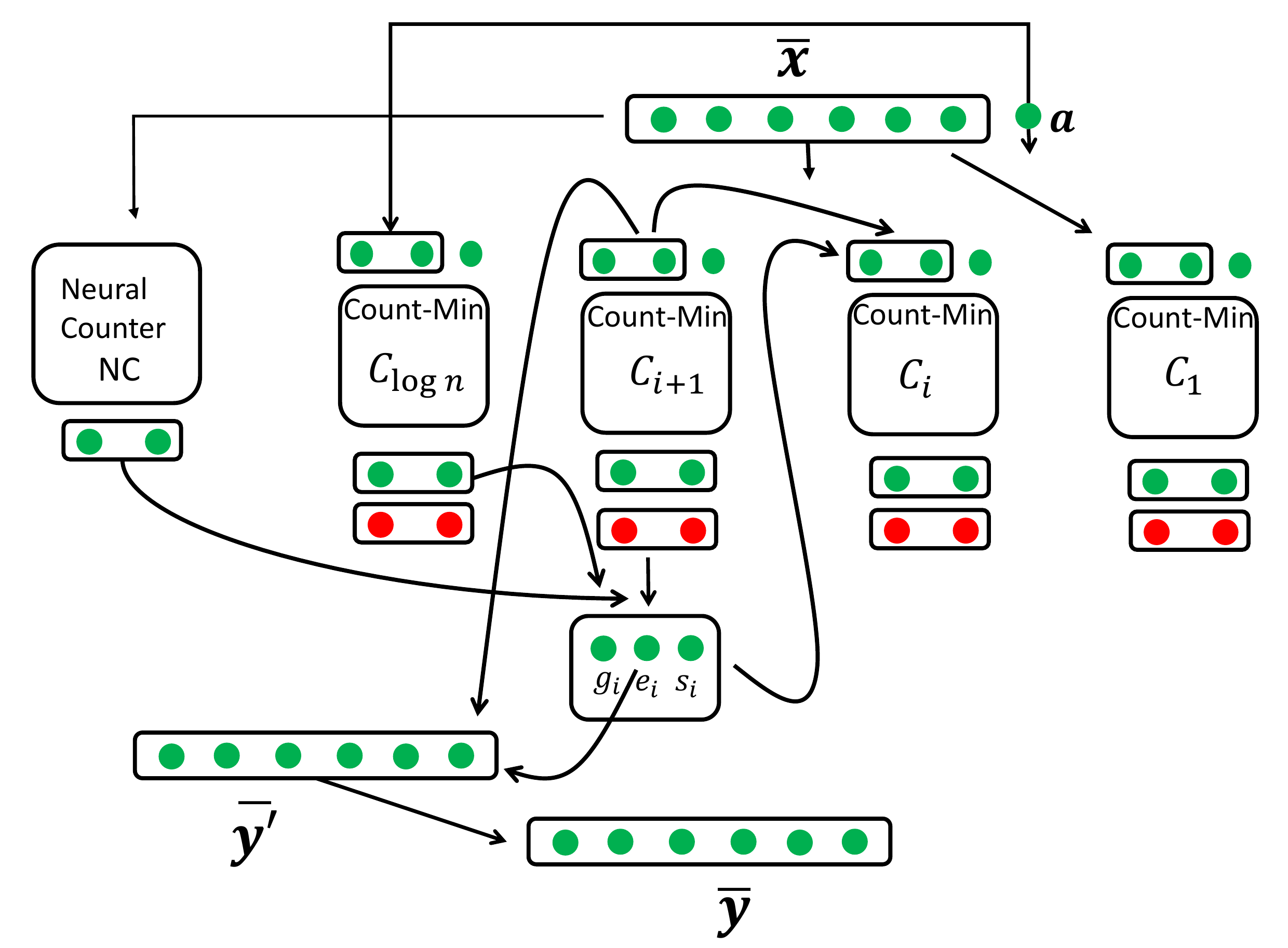}
\caption{ \label{fig:approx-median} \small{A high-level illustration of the approximate median network. The green circles represent excitatory neurons and the red circles represent inhibitory neurons. The input is connected to $\log n$ Count-Min networks, that count the frequencies of the dyadic intervals of $[n]$. On the left, the neural counter module $NC$ counts the total number stream elements. The neurons $g_i,s_i,e_i$ guide the binary search implemented by the network. Once a median is detected, its value is copied to the output neurons using the additional neurons $\overbar{y}'_i$. 
%The chain $T$ on the left is used to schedule the search for the median element.
}}
\end{center}
\end{figure} 
%}%\APPMED
%\APPMED
%\section{The Neural Median Network}\label{sec:med-approx}
%In this section, we describe the approximate-median network in detail. 
\paragraph{Network description.} We next provide a description of the network. Recall that the type of the operation is represented by the input neuron $a$, where $a=1$ represents a median query.
To avoid cumbersome notation, we assume that $n$ is a power of $2$. 
\\ \\
\noindent\textbf{Supporting an insertion operation.} 
In the high level, the network contains 3 parts (1) a set of $\log n$ neurons that encode the inserted element in its binary form, (2) a neural counter that counts the length of the current stream, and (3) $\log n$ Count-Min sketch sub-networks that maintain the frequencies of the $\log n$ dyadic intervals of $[n]$.
\begin{enumerate}
\item The $n$-length input vector $\overbar{x}$ is connected to $\log n$ neurons $\overbar{x}' = (x'_1, \ldots , x'_{\log n})$ such that $\overbar{x}'$ encodes the binary representation\footnote{As discussed in the introduction our solution supports both types of input formats: $\log n$-bits of the binary representation or an $n$-length vector with one active entry.} of the element presented in the input neurons $\overbar{x}$.
For every $i \in [n]$, $j \in [\log n]$ if the $j^{th}$ bit in the binary representation of $i$ is equal to $1$ then $w(x_i,x'_j)=1$, and $w(x_i,x'_j)=0$ otherwise. The bias of every $x'_i$ is set to $b(x'_j)=1$.
\item  The network contains a counter sub-network  $\mathcal{NC}_m$ that counts the number of data-items inserted so far. 
The counter network is implemented by a neural counter network from Fact \ref{fc:neural-counter} with time parameter $t=m$. The input neuron to the  $\mathcal{NC}_m$ sub-network denoted as $a'$ is an $\OR$ gate of the input neurons $\overbar{x}$. The output neurons of $\mathcal{NC}_m$ are denoted as $\overbar{o}=(o_1 , \ldots, o_{\log m})$. The network also contain inhibitory copy of the vector $\overbar{o}$ denoted by $\overbar{o}'$.

To make sure the counter is incremented once per insertion operation, the network contains an inhibitory copy of $a'$ denoted as $r'$,
which inhibits $a'$ and the neurons $\overbar{x}$ using large negative weights. As a result, the input neuron $a'$ will be active for exactly \emph{one} round per insertion operation.

\item  The network contains $\log n$ sub-networks $\cC_1, \ldots \cC_{\log n}$ each implements a Count-Min sketch with parameters $n,m$ and $\epsilon'=O(\epsilon/\log n)$, $\delta'=O(\delta/\log n)$ using Theorem~\ref{cor:min-sketch}. 
For each Count-Min sketch sub-network $\cC_i$, let $\overbar{z}_i = (z_{i,1}, \ldots z_{i, \log n})$ and $b_i$ be its input layer, where the neuron $b_i$ indicates whether the operation is $\inc(x)$ or $\countt(x)$.
The neuron $b_i$ is an $\OR$ gate of the neurons in $\overbar{x}$. 

For $i \in \{1,\ldots, \log n\}$, the input neurons $\overbar{z}_i$ are connected to the binary representation of the input $\overbar{x}'$ in the following manner. For every $j \geq i$, the neuron $x'_j$ is connected to the neuron $z_{i,j}$ with large positive weight. For every $j<i$ the neuron $z_{i,j}$ serves as an OR gate of the neurons of $\overbar{x}'$.
In addition, the neurons $b_i$, $\overbar{z}_i$ are equipped with self-loops. The Count-Min sketch sub-networks are then modified such that these neurons will be inhibited once the computation is complete (by the inhibitory neurons $e'_{i,j}$ of each sub-network respectively). 
\end{enumerate}

\noindent\textbf{Supporting a Median Query.}
Given a median query, the network computes the approximate median by employing at most $\log n$ steps of binary search. In every step\footnote{It is convenient to count the steps in a backward manner, as in the $i^{th}$ step the network will access the $i^{th}$ Min-Sketch module $\cC_i$.}  $i \in \{\log n, \ldots, 1\}$, the network obtains a current candidate for the median denoted by $\chi_i$. Initially, $\chi_{\log n}=n/2$. Each $\chi_i$ would be provided as input for the $i^{th}$ Count-Min sketch $\cC_i$. The output neurons of $\cC_i$ would then define the next candidate $\chi_{i-1}$. Specifically, depending on the rank estimation of $\chi_i$, the network defines the new search range. The width of the search range would be cut by a factor $2$ in every step $i$. Consequently, the algorithm will be using the Count-Min sketch $\cC_{i-1}$ which is defined over a partitioning $\mathcal{I}_{i-1}$ in which each set is smaller by factor $2$ compared to $\mathcal{I}_{i}$.
%To support median queries, the network is augmented with the following modules.                                                                                            
\begin{itemize}
\item [(1)] For every $i \in \{1,\ldots,\log n\}$ the network contains an additional Count-Min sub-network $\cC'_{i}$ which counts the frequencies of the data-elements (similar to $\cC_1$). These additional Count-Min sub-networks will be useful in a scenario where for the median item $j^*$, the frequency of the range  $[1,j^*]$ is larger than half, and the frequency of $[1, j^* - 1]$ is too small. This special case would be handled using the $\cC'_{i}$ sub-networks. 

For every sub-network $\cC'_{i}$ with input $\overbar{z}'_{i}, b'_{i}$, the neuron $b'_{i}$ serves as an $\OR$ gate of the neurons in $\overbar{x}$. As for $\overbar{z}'_{i}$, each neuron $z'_{i,j}$ serves as an OR gate of the $j^{th}$ neuron of $\overbar{x}'$ and the $j^{th}$ neuron of $\overbar{z}_i$. Hence, for insertion operations, the sub-network $\cC'_{i}$ is equivalent to $\cC_1$. Additionally, the neurons $\overbar{z}'_{i}, b'_{i}$ are equipped with self-loops. The Count-Min sketch sub-networks are then modified such that these neurons will be inhibited once the computation is complete.

\item [(2)] For every $i \in \{1,\ldots,\log n\}$ the network contains three \emph{comparison} neurons $s_i,g_i,e_i$ (corresponding to \emph{smaller, greater} or \emph{equal}). These neurons receive their input from the output neurons of the counters $\cC_{\log n }, \ldots, \cC_{i}$, $\cC'_{\log n }, \ldots, \cC'_{i}$, and the output of the neural counter $\overbar{o}, \overbar{o}'$. 
Let $\chi_i=\dec(\overbar{z}_i)$ be the median candidate at phase $i$ of the binary-search. The firing states of the comparison neurons are determined as follows. The neuron $g_i$ would fire if frequency estimation of $[1,\chi_i]$ is greater than $m'/2+ \epsilon/2 m'$. The neuron $s_i$ would fire if frequency estimation of $[1,\chi_i]$ is smaller than $m'/2-\epsilon/2 m'$. 
Finally, $e_i$ would fire if the frequency estimation of $[1,\chi_i]$ is in the range $(m'/2 - \epsilon/2 m', m'/2 + \epsilon/2)$.

Denote the output neurons of $\cC_i$ by $\overbar{f}_i$ and the output neurons of $\cC'_i$ by $\overbar{f}'_i$. 
As we will see, the frequency of the range $[1,\chi_i]$ will be decoded by the output neurons of $\cC_{\log n}, \ldots, \cC_{i}$ as $\sum_{j=(i-1)}^{\log n} \dec(\overbar{f}_{j})$.
Since the output neurons $\overbar{f_i}$ and $\overbar{f}'_i$ are \emph{excitatory} (i.e. may only have non-negative outgoing edges), in order for the comparison neurons to fire as desired, for every $i \in [\log n]$ the network introduces inhibitory copies of $\overbar{f}_i$ and $\overbar{f}'_i$ denoted as $\overbar{\phi}_i$ and $\overbar{\phi}'_i$ respectively with outgoing edges to the comparison neurons $g_i,s_i$. 

We set the incoming edges of the neuron $g_i$ such that
 %the comparison neurons $g_i, s_i$ have incoming edges from the outputs $\overbar{f}_{\log n}, \ldots \overbar{f}_{i}, \overbar{f}'_{\log n}, \ldots \overbar{f}'_{i}$, $\overbar{\phi}_{\log n}, \ldots \overbar{\phi}_{i}, \overbar{\phi}'_{\log n}, \ldots \overbar{\phi}'_{i}$ and $\overbar{o}, \overbar{o}'$  such that 
%The neuron 
$g_{i}$ fires if $\sum_{j=i}^{\log n} \dec(\overbar{f}_{j})- \dec(\overbar{f}'_{i}) > \lfloor \dec(\overbar{o})/2 \rfloor + \epsilon/2\cdot  \dec(\overbar{o})$. 
Similarly, the neuron $s_{i}$ fires if $\sum_{j=i}^{\log n} \dec(\overbar{f}_{j}) -1 <  \lfloor \dec(\overbar{o})/2 \rfloor - \epsilon/2 \cdot \dec(\overbar{o})$. 
%We will show that when introducing the $i^{th}$ candidate $\chi_i=\overvar{z}_i$, the frequency estimation of $[1,\chi_i]$ is given by $\sum_{j=(i-1)}^{\log n} \dec(\overbar{f}_{j})$.
Regarding the equality neuron $e_{i}$, it serves as an $\AND$ gate of two intermediate neurons $e_{i,1}$, $e_{i,2}$ such that $e_{i,1}$ fires if
$\sum_{j=i}^{\log n} \dec(\overbar{f}_{j}) \geq  \lfloor \dec(\overbar{o})/2 \rfloor  - \epsilon/2 \cdot \dec(\overbar{o}) $ and $e_{i,2}$ fires if $\sum_{j=i}^{\log n} \dec(\overbar{f}_{j})-\dec(\overbar{f}'_{i})+1 \leq \lfloor \dec(\overbar{o})/2 \rfloor + \epsilon/2 \cdot \dec(\overbar{o})$.

The neuron $g_{i}$ is also connected to an inhibitor $g'_{i}$ which inhibits $\overbar{f}_{i}$ and $\overbar{f}'_{i}$ with large negative weight.
%In case the neuron $g_{i}$ fires, it indicates that the next candidate $\chi_{i-1}$ is smaller than $\chi_{i}=\dec(\overbar{z}_i)$. 
The inhibition of $\overbar{f}_{i}$, $\overbar{f}'_{i}$ allows us to maintain the invariant that $\sum_{j=(i-1)}^{\log n} \dec(\overbar{f}_{j})$ will hold the frequency estimation of $[1,\chi_{i-1}]$ in the next phase when considering $\chi_{i-1}$.

\item [(3)] The network is augmented with a \emph{timing chain} $T$ which schedules the updates of the neurons $\overbar{z}_i$ with the median candidate $\chi_i$. This update should be carefully coordinated to occur only after the neurons $g_{i+1}, s_{i+1}$ and $e_{i+1}$ obtain their values. 
%The first neurons in $T$ has an incoming edge from the input neuron $a$, and the last neuron in $T$ is an inhibitory neuron that resets all the neurons that are equipped with self-loops before the next input item arrives.  

The timing chain $T$ consists of $\tau = \log n \cdot \tau'$ neurons $\sigma_1, \ldots \sigma_{\tau}$, where $\tau' = \Theta(\log m + \log \log n)$ is an upper bound on the computation time of the Count-Min sub-networks. %(as specified in Theorem~\ref{cor:min-sketch}).
The first neuron $\sigma_1$ has an incoming edge from the input neuron $a$ with weight $1$ and bias $1$. For $i=2, \ldots ,\tau$, the neuron $\sigma_i$ has an incoming edge from $\sigma_{i-1}$ with weight $1$ and bias $1$. The last neuron $\sigma_{\tau}$ is an inhibitory neuron, with outgoing edges to the neurons $\overbar{z}_1, \ldots , \overbar{z}_{\log n}$ and $\overbar{z}'_{1}, \ldots , \overbar{z}'_{\log n}$ with large negative weights. The inhibition of these neurons inhibits their self loops in preparation for the next input.

\item [(4)]
In the high level, for $i \in \{\log n, \ldots, 1\}$, every two consecutive sub-networks $\cC_{i+1}$ and $\cC_{i}$ are connected in a way that guarantees the following. Let  $\chi_{i+1}$ be median candidate at phase $i+1$ of the binary search (i.e., that was fed as input to $\cC_{i+1}$). Let $\freq([x,y])$ be the estimated frequency of the range $[x,y]$ obtained by the Count-Min sketch networks $\cC_{\log n}, \ldots, \cC_{i+1}$. Then candidate $\chi_{i}$ is defined as:
\[
  \chi_{i}= 
  \begin{cases}
    \chi_{i+1} - 2^{i-1}, & \text{if~} \freq([1,\chi_{i+1}]) >  m'/2+ \epsilon/2 m'\\
    \chi_{i+1} + 2^{i-1}, & \text{if~} \freq([1,\chi_{i+1}]) <  m'/2- \epsilon/2 m'~.
  \end{cases}  
\]
In the remaining case where $\freq([1,\chi_{i+1}])\in [m'/2 \pm\epsilon/2 m']$, the candidate $\chi_{i+1}$ will be returned as the output result.

In every step $i$, the candidate $\chi_{i}$ will be encoded using the input neurons of $\cC_{i}$, $\cC'_{i}$ denoted as $\overbar{z}_i, \overbar{z}'_{i}$.
% We next consider the input neurons of $\cC_{i}$, $\cC'_{i}$ denoted as $\overbar{z}_i, \overbar{z}'_{i}$ that will encode the $i^{th}$ median candidate $\chi_i$.
For $i= \log n$, the neurons $\overbar{z}_{\log n}$, $\overbar{z}'_{\log n}$ have incoming edges from the neuron $a$ such that $\dec(\overbar{z}_{\log n})= 2^{\log n-1}-1$\footnote{or $2^{\lfloor  \log n \rfloor-1}-1 $ if $n$ is not a power of $2$}. Hence, the first median candidate $\chi_{\log n}$ will be represented in $\overbar{z}_{\log n}$ as the binary vector $(0,0,1, \ldots, 1)$. 
For index $i \in [1, \log n -1]$, toward updating the input neurons $\overbar{z}_i$ with the candidate $\chi_i$, the network contains $\log n$ intermediate neurons $\overbar{t}_{i}= t_{i,1}, \ldots, t_{i, \log n}$ with the following connectivity:
(i) For $j =1, \ldots, i-1$, the neuron $t_{i,j}$ has incoming edges from  $g_{i+1}$ and  $s_{i+1}$ and serves as an $\OR$ gate.
(ii) For $j = i+2 , \ldots \log n$, the neuron $t_{i,j}$ has incoming edges from  $g_{i+1}$, $s_{i+1}$ with weight $1$, an incoming edge from $\overbar{z}_{i+1,j}$ with weight $2$ and bias $3$. Hence, if either $g_{i+1}$ or $s_{i+1}$ fired in round $\tau$, the firing state of the $j^{th}$ neuron $t_{i,j}$ in round $\tau+1$, is equal to the firing state of $\overbar{z}_{i+1,j}$ in round $\tau$.
(iii) The neuron $s_{i+1}$ is connected to the neuron $t_{i,i+1}$ with large positive weight.
%
%
%(i) For every $j\in [1, \log n]$ such that $j \neq i$ and $j \neq i+1$, the neuron $t_{i,j}$ has incoming edges from  $g_{i+1}$,  $s_{i+1}$ with weight $1$, an incoming edge from $\overbar{z}_{i+1,j}$ with weight $2$ and bias $3$. Hence, if either $g_{i+1}$ or $s_{i+1}$ fire, the state of the $j^{th}$ neuron $t_{i,j}$ is equal to the state of $\overbar{z}_{i+1 , j}$.
%(ii) The neuron $s_{i+1}$ is connected to the $(i+1)^{th}$ neuron $t_{i,i+1}$ with large positive weight. (iii) The neuron $t_{i,i}$ has no incoming edges and bias $1$ (i.e. it is kept idle).

Next, the incoming edges of the neurons $\overbar{z}_i$ are set as follows. 
Every neuron  $z_{i,j}$ has incoming edges from $\overbar{t}_{i}$, and $\sigma_{(\log n - i) \cdot  \tau'}$, where we set the weights and bias such that $z_{i,j}$ fires either due to the signal from $\overbar{x}'$ (in case of insertion) or both neurons $t_{i,j}$ and $\sigma_{(\log n - i) \tau'}$ fired.
The incoming edge from the timing chain $T$ guarantees that when we update $\overbar{z}_i$, the computation of the previous candidates $\chi_{i+1}$ has been completed. 

%This connectivity pattern between consecutive sub-networks $\cC_{i+1}$ and $\cC_{i}$ is designed in a way that guarantees the following. Let  $\chi_{i+1}$ be median candidate at phase $\log n-i-1$ of the binary search (i.e., that was fed as input to $\cC_{i+1}$). Let $\freq([x,y])$ be the estimated frequency of the range $[x,y]$ obtained by the Count-Min Sketch $\cC_{i+1}$. Then the next median candidate $\chi_{i}$, that will be fed as input for $\cC_{i}$, is defined as:
%\[
%\chi_{i}= \left.
%\begin{cases}
%\chi_{i+1}- n/2^{\log n -i-2}, & \text{if~} \freq([1,\chi_{i+1}])\geq  m'/2+ \epsilon/2 m'\\
%\chi_{i+1}+ n/2^{\log n -i-2}, & \text{if~} \freq([1,\chi_{i+1}])\leq  m'/2- \epsilon/2 m'
%\end{cases}
%\right\} 
%\]

\item [(5)] Once a median estimation is found, the output neurons $\overbar{y}$ are updated in the following manner. 
For every $i \in \{1,\ldots,\log n\}$ the network contains a vector of $\log n$ intermediate neurons $\overbar{y}'_i$. The neurons $\overbar{y}'_i$ are responsible for updating the output neurons $\overbar{y}$ when the candidate $\chi_i$ is returned as the median estimation. %The neurons in $\overbar{y}'_i$ have incoming edges from $\overbar{z}_i$ and $e_i$ and are connected to the output neurons $\overbar{y}$. 
Every neuron $y'_{i,j}$ serves as an $\AND$ gate of the equality neuron $e_i$ and $z_{i,j}$. 
We then connect the neurons $\overbar{y}'_1, \ldots \overbar{y}'_{\log n-1}$ to the output neurons $\overbar{y}$, where the $j^{th}$ output neuron $y_{i,j}$ serves as an $\OR$ gate of the $j^{th}$ neurons $y'_{1,j}, \ldots \overbar{y}'_{\log n,j}$.
%In order to finish the computation once the output neurons are updated, each equality neuron $e_i$ is connected to an inhibitory neuron $r_i$ that inhibits the neurons $\overbar{z}_1, \ldots \overbar{z}_i, \overbar{z}'_1, \ldots \overbar{z}'_i$ with large negative weight.
\end{itemize}

\noindent \textbf{Ensuring the Output is a Stream Element.}
We modify the neuron $s_{i}$ to fire also if $\sum_{j=i}^{\log n} \dec(\overbar{f}_{j}) \leq  \dec(\overbar{o})/2- \epsilon/2 \dec(\overbar{o})$ \emph{and} the inhibitory output neurons of $\cC'_{i}$ are idle. This is done using two intermediate neurons (one for each case).
In addition, we set the equality neuron $e_{i}$ to fire only if both $e_{i,1}$, $e_{i,2}$ fire and at least one of the excitatory output neurons of the sub-network $\cC'_{i}$ fires.

\paragraph{Space and Time Complexity.}
The update of the Neural Counter  $\mathcal{NC}_m$ requires $O(\log m)$ rounds.  
Each one of the $2 \log n$ Count-Min sketch network requires $O(\log \log n + \log m)$ rounds. Hence a median query is computed within $O(\log n(\log \log n + \log m))$ rounds, and an element insertion is complete within $O(\log \log n + \log m)$ rounds. 

Regarding the networks size, the network contains $O(\log n)$ Count-Min sketch networks, with parameters $n,m$ and $\epsilon'=O(\epsilon/\log n)$ , $\delta' = O(\delta/\log n)$. Hence, each of the Count-Min sketch networks requires $O(\log n/\epsilon \cdot \log \log n \cdot \log m \cdot \log (\log n/\delta)+\log ^2 (\log n/\delta))$ neurons. 
The neural counter requires $O(\log m)$ neurons, and the timing chain consists of $O(\log (\log n(\log \log n + \log m)))$ neurons. 
Additionally, we introduce $O(\log ^2 n)$ intermediate auxiliary neurons. 
Thus, the approximate median network contains  $\widetilde{O}(1/\epsilon)$
auxiliary neurons.

\paragraph{Correctness.}
We show the proposed network implements the algorithm of~\cite{cormode2005improved}.
Let $x, a$ be an input presented at round $\tau_0$. 
We begin with considering insertion operations, where $x$ represents a stream element $i \in [1,n]$ and $a=0$. In round $\tau_0+1$, the input to the neural counter $a'$ fires, as well as the inhibitor $r'$. Since $r'$ inhibits $\overbar{x}$, and $a'$, the counter  $\mathcal{NC}_m$ is incremented exactly once.

Due to the incoming edges from the input neurons $\overbar{x}$, at round $\tau_0+1$ the neurons $\overbar{x}'$ fire, representing the binary encoding of $i$. Additionally the neuron $b_1, \ldots, b_{\log n}$, $b'_{1}, \ldots, b'_{\log n}$ which are the input neurons to the networks $\cC_1, \ldots, \cC_{\log n}, \cC'_1, \ldots, \cC'_{\log n}$, representing an $\inc(x)$ operation, fire starting round $\tau_0+1$. 
In round $\tau_0+2$ the input neurons $\overbar{z}_1 ,\ldots \overbar{z}_{\log n}, \overbar{z}'_{1} ,\ldots \overbar{z}_{\log n}$ receives the signals from $\overbar{x}'$ and begin to fire. Due to the self-loops which enable persistence, and the modification to the Count Min networks which inhibits these neurons once the computation is complete, by Theorem~\ref{cor:min-sketch} each sub-network $\cC_{i}$ performs the operation $\countt(\dec(\overbar{z}_i))$ by round $\tau_1 =\tau_0+O(\log m + \log \log n)$.

We now show that every sub-network $\cC_{i}$ maintains an estimation of the frequencies of the dyadic intervals $\mathcal{I}_i$. 
For a stream element $k \in [n]$, due to the incoming edges from $\overbar{x}'$, the neurons $\overbar{z}_j$ encodes the value $\lfloor k/2^j \rfloor$ in round $\tau_0+2$. Thus, for every two stream elements $k_1,k_2$ in the interval $[c \cdot 2^i, c \cdot 2^{i+1}-1]$ for $c\in \mathbb{N}$, the input to the network $\cC_{i}$ is identical and equals $c \cdot 2^i$. 
On the other hand, if $k_1/2^i \neq k_2/2^i$, the input presented to the sub-network $\cC_{i}$ when inserting $k_1$ is different than the input presented when inserting $k_2$.
We conclude that the sub-networks $\cC_1, \ldots, \cC_{\log n}$ implement the neural Count-Min data structure for the dyadic intervals of $[n]$.

We next turn to consider a median query presented at round $\tau_0$. Hence, in round $\tau_0$ the input neurons $\overbar{x}$ are idle and the neuron $a$ fires.
We first note that because  $\mathcal{NC}_m$ is incremented once per stream element, and the persistence of each element is $\Omega(\log m)$, in round $\tau_0$ the outputs of the neural counter $\overbar{o}$ and $\overbar{o}'$ encodes the size of the stream in round $\tau_0$ denoted as $m'$.

For every $i= \log n, \ldots, 1$, let $\chi_i$ be the value encoded in the neurons $\overbar{z}_{i}$ in round $\tau_0+i \cdot \tau'+1$, where $\tau'$ is a parameter which upper bounds the computation time of the Count-Min sub-networks. 
Note that the assignment of the candidates $\chi_{\log n},\chi_{\log n -1} \ldots $ is performed in a sequential manner with time intervals of $\tau'$ rounds due to the incoming edges from the timing chain $T$. We consider the candidate $\chi_i$ encoded in the firing state of $\overbar{z}_{i}$ at round $\tau_0 + (\log n- i + 1) \cdot \tau'+1$  as an iteration of a binary search. 

Due to the incoming edges from $\overbar{z}_{i+1}$ described in Step (4), candidate $\chi_i$ (i.e $\dec(\overbar{z}_{\log n - i})$) defers from $\chi_{i+1}$ in the following manner.

\begin{observation} \label{obs:binary-search}
If $\chi_i \neq 0$, $\chi_{i+1} \neq 0$, in case $g_i$ fires in round $\tau_0+(\log n -i + 1) \cdot \tau'+1$ then $\chi_i=\chi_{i+1}-2^{i-1}$ and in case $s_i$ fires $\chi_i=\chi_{i+1}+2^{i-1}$.
\end{observation}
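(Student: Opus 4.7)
The plan is to prove the observation by case analysis on which of the comparison neurons $g_{i+1}$ or $s_{i+1}$ fires at the moment when $\overbar{z}_i$ is being formed from $\overbar{z}_{i+1}$ (Step~(4) of the construction), combined with an inductive invariant on the binary shape of the successive candidates. Note that although the observation is stated with $g_i, s_i$, the actual bit-level decision that turns $\chi_{i+1}$ into $\chi_i$ is driven by the comparison neurons indexed $i+1$ (which query $\cC_{i+1}$ on the frequency of $[1,\chi_{i+1}]$), and it is those that the intermediate $\overbar{t}_i$ neurons read.

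I would first establish the inductive invariant, proceeding from $i=\log n$ downward: the binary representation of $\chi_{i+1}$ stored in $\overbar{z}_{i+1}$ has bits $1,\ldots,i$ all equal to $1$, bit $i+1$ equal to $0$, and bits $i+2,\ldots,\log n$ determined by the history of earlier comparisons. The base case follows from $\chi_{\log n}=2^{\log n-1}-1$, whose binary form has bit $\log n$ equal to $0$ and all lower bits equal to $1$.

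Next, I would read off the firing of the auxiliary neurons $\overbar{t}_i$ from the three clauses of Step~(4). Clause~(i) makes the OR-gates $t_{i,1},\ldots,t_{i,i-1}$ fire whenever either $g_{i+1}$ or $s_{i+1}$ fires, so the lowest $i-1$ bits of $\chi_i$ are set to $1$. Clause~(iii) makes $t_{i,i+1}$ fire iff $s_{i+1}$ fires. Clause~(ii) copies bits $i+2,\ldots,\log n$ of $\chi_{i+1}$ into the corresponding positions of $\overbar{t}_i$. The position $t_{i,i}$ is covered by none of the three clauses and thus remains silent, ensuring that bit $i$ of $\chi_i$ is $0$. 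Once the timing-chain neuron at position $\sigma_{(\log n-i)\cdot\tau'}$ fires (by which point $\cC_{i+1}$ has produced stable outputs and $g_{i+1},s_{i+1}$ are well defined), the AND-gated inputs to $\overbar{z}_i$ unlock and latch exactly the bit pattern of $\overbar{t}_i$ in round $\tau_0+(\log n-i+1)\cdot\tau'+1$. Comparing the resulting bits of $\chi_i$ against the invariant description of $\chi_{i+1}$: if $g_{i+1}$ fires then only bit $i$ changes (from $1$ in $\chi_{i+1}$ to $0$ in $\chi_i$), giving $\chi_i=\chi_{i+1}-2^{i-1}$; if instead $s_{i+1}$ fires, bit $i$ flips $1\to 0$ and bit $i+1$ flips $0\to 1$, for a net change of $-2^{i-1}+2^{i}=+2^{i-1}$, i.e.\ $\chi_i=\chi_{i+1}+2^{i-1}$. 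In both cases the new $\chi_i$ also satisfies the invariant, closing the induction.

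The main obstacle is the timing bookkeeping: one must verify that in the reference round $\tau_0+(\log n-i+1)\cdot\tau'+1$ the neurons $g_{i+1}, s_{i+1}$ have indeed settled, which requires the persistence of $\overbar{z}_{i+1}$, the output-time guarantee of Theorem~\ref{cor:min-sketch} applied to $\cC_{i+1}$, and the exact spacing of the timing chain $T$. A minor subtlety is the implicit silence of $t_{i,i}$, since the three clauses of Step~(4) leave the case $j=i$ unmentioned; it is precisely this silence that clears bit $i$ and preserves the invariant at the next level. Finally, the hypothesis $\chi_i,\chi_{i+1}\neq 0$ is used to exclude the degenerate boundary where the candidate would underflow below $1$, a regime that does not correspond to any valid stream element and in which the bit-level arithmetic above would require separate treatment.
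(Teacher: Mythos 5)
Your proof is correct and follows essentially the same path as the paper's own argument: both rest on the inductive invariant that $\chi_{i+1}$ has bits $1,\ldots,i$ equal to $1$ and bit $i+1$ equal to $0$, both read off the three clauses of Step~(4) and observe that position $i$ is the uncovered case forcing bit $i$ of $\chi_i$ to zero, and both then compute $\chi_i - \chi_{i+1}$ by comparing bit patterns ($-2^{i-1}$ when $g_{i+1}$ fires, $-2^{i-1}+2^{i}=+2^{i-1}$ when $s_{i+1}$ fires). You also correctly flag, as the paper's proof implicitly does by working with $g_{i+1},s_{i+1}$, that the observation's statement indexes the comparison neurons as $g_i,s_i$ by a typo, and that $\chi_i,\chi_{i+1}\neq 0$ is what guarantees the OR-gates in clause~(i) are active.
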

\begin{proof}
Let $b_{\log n}, \ldots ,b_1$ be the binary representation of the candidate $\chi_{i+1}$, represented in the firing state of $\overbar{z}_{i+1}$.
By the definition of Step (4), the neuron $t_{i+1,i+1}$ is idle in every round, and therefore $b_{i+1}=0$.
Additionally, since $\chi_{i+1} \neq 0$, for every $j < i+1$ the neuron $t_{j,i+1}$ fires starting round $\tau_0+(\log n -i) \cdot \tau'$, and therefore $b_{i+1}=1$. 
% and therefore the $j^{th}$ bit in the binary representation of $\chi_{j}$ is equal to zero.
%Additionally, 

% $\chi_{i+1}$ is equal to zero. 
If $g_{i+1}$ fired, according to step (4) the firing state of the neurons $\overbar{t}_{i}$ in round $\tau_0+i \cdot \tau'+1$ encode the binary representation $(b_{\log n}, \ldots, b_{i+2}, 0, 0 ,1, \ldots, 1) = (b_{\log n}, \ldots, b_{i+2}, b_{i+1}, 0 , b_{i-1}, \ldots, b_1)$. Since $b_i=1$, it follows that $\chi_{i}= \chi_{i+1} - 2^{i-1}$. 

If $s_{i+1}$ fired, the firing state of the neurons $\overbar{t}_{i}$ in round $\tau_0+i \cdot \tau'+1$ encodes the binary representation $(b_{\log n}, \ldots, b_{i+2}, 1, 0 ,1, \ldots, 1) = (b_{\log n}, \ldots, b_{i+2}, 1, 0 , b_{i-1}, \ldots, b_1)$. Since $b_{i}=1$ and $b_{i+1}=0$ we can conclude that $\chi_{i}= \chi_{i+1}+2^{i}-2^{i-1}=\chi_{i+1}+2^{i-1}$. 
%
%
%\begin{itemize}
%	\item [(i1)] The $(\log n -i)^{th}$ neuron in $\overbar{z}_{\log n -i}$ is idle (equals zero).
%	\item [(i2)] Every coordinate $j < (\log n -i)$ in $\overbar{z}_{\log n -i}$ fires (equals one).
%	\item [(i3)] If $\chi_i \neq 0$, in case $g_i$ fires in round $\tau_0+i \cdot \tau'+1$ then $\chi_i=\chi_{i-1}-n/2^{i}$ and if $s_i$ fires $\chi_i=\chi_{i-1}+n/2^{i}$.
%\end{itemize}
\end{proof}

We next note that due to the definition of the insertion operation to the networks $\cC_1, \ldots, \cC_{\log n}$ when considering the candidate $\chi_i$ the output neurons of $\cC_i$ encode an estimation of the frequency of the interval $[\chi_{i}-2^{i-1}+1,\chi_{i}]$.
\begin{observation} \label{obs:freq-interval}
	For every $\cC_i$, if $\chi_i \neq 0$ is presented at round $t$, by round $t+\tau'$ the output neuron $\overbar{f}_i$ encode an $(1+\epsilon')$-approximation of the frequencies of the interval $[\chi_{i}-2^{i-1}+1,\chi_{i}]$ with probability $1- \delta'$.
\end{observation}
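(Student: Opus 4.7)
The plan is to combine the encoding analysis of Step 3 (insertion) with the Count-Min guarantee of Theorem~\ref{cor:min-sketch}, applied to the query input $\chi_i$. Concretely, the proof has three logical pieces: (i) identify what value the network encodes into $\overbar{z}_i$ on each insertion, (ii) verify that this encoding is consistent with the query key $\chi_i$ loaded during the binary search, and (iii) apply Count-Min correctness.

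First, I would analyze the insertion encoding. For an inserted element $k \in [n]$ with binary representation $b_{\log n}\cdots b_1$, the edges specified in Step 3 set $z_{i,j} = b_j$ for $j \ge i$, while for $j < i$ the OR-gate structure over $\overbar{x}'$ forces $z_{i,j}=1$ (since $k\ge 1$ implies at least one bit fires). Hence $\dec(\overbar{z}_i) = (2^{i-1}-1) + \lfloor k/2^{i-1}\rfloor\cdot 2^{i-1} = (\lfloor k/2^{i-1}\rfloor+1)\cdot 2^{i-1} - 1$, which is exactly the right endpoint of the length-$2^{i-1}$ dyadic interval containing $k$. Thus two elements $k_1,k_2$ produce the same $\dec(\overbar{z}_i)$ precisely when they lie in the same dyadic interval of size $2^{i-1}$; equivalently, the inserted key into $\cC_i$ equals $\chi$ iff $k \in [\chi - 2^{i-1}+1,\chi]$.

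Second, I would verify that the median-candidate $\chi_i$ loaded into $\overbar{z}_i$ at query time is consistent with this encoding. The initial candidate $\chi_{\log n} = 2^{\log n - 1}-1$ has the required form $c\cdot 2^{\log n -1}-1$, and Observation~\ref{obs:binary-search} gives $\chi_{i-1} = \chi_i \pm 2^{i-2}$, so an easy induction yields $\chi_i = c\cdot 2^{i-1}-1$ for some integer $c$ throughout the binary search. Therefore the key presented to $\cC_i$ during the query matches the insertion encoding scheme: a previously inserted $k$ contributed to the counter queried at $\chi_i$ iff $k \in [\chi_i - 2^{i-1}+1,\chi_i]$.

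Third, I would invoke Theorem~\ref{cor:min-sketch}. The sub-network $\cC_i$ was instantiated with parameters $(n,m,\epsilon',\delta')$, and once the input $\overbar{z}_i$ stabilizes at $\chi_i$ (with $b_i=0$ signaling a $\countt$ operation), within $\widetilde{O}(1)= \tau'$ rounds the output neurons $\overbar{f}_i$ encode a $(1+\epsilon')$-approximation of the frequency of the queried key with probability at least $1-\delta'$. Combining with the dyadic-interval identification of the previous step, this is precisely a $(1+\epsilon')$-approximation of the frequency of $[\chi_i - 2^{i-1}+1, \chi_i]$, as required.

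The main obstacle is really the bookkeeping in the first two steps, namely checking that the OR-gate encoding for the low bits plus the direct wiring for the high bits yields a dyadic-bucket hash whose key-set is exactly $[\chi_i-2^{i-1}+1,\chi_i]$, and that the binary-search dynamics preserve the arithmetic form $\chi_i = c\cdot 2^{i-1}-1$. Once these invariants are in place, the time bound and approximation guarantee follow directly from Theorem~\ref{cor:min-sketch}.
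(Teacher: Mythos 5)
Your proposal is correct and follows essentially the same route as the paper: analyze the insertion-time encoding of $\overbar{z}_i$ (high bits copy $\overbar{x}'$, low bits forced to $1$ by the OR gates), observe that the query candidate $\chi_i$ has all of its low $i-1$ bits equal to $1$, and then invoke Theorem~\ref{cor:min-sketch} for $\cC_i$ with parameters $(\epsilon',\delta')$. The only cosmetic difference is in how you justify the form of $\chi_i$: the paper reads it directly off the Step (4) wiring (the $t_{i,j}$ neurons for $j<i$ are OR gates of $g_{i+1},s_{i+1}$, so those coordinates of $\overbar{z}_i$ fire whenever a candidate is loaded), while you instead run an arithmetic induction $\chi_i = c\cdot 2^{i-1}-1$ using Observation~\ref{obs:binary-search}; both establish the same invariant and lead to the identical interval $[\chi_i-2^{i-1}+1,\chi_i]$.
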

\begin{proof}
By the definition of the step (4) for every coordinate $j < i$ when the $i^{th}$ candidate $\chi_i$ is considered, the neuron $z_{i,j}$ fires (equals one). 
Hence the $i-1$ least significant bits in the binary representation of $\chi_i$ are equal to one. 

In addition, when considering insertion operations, by the definition of the incoming edges of $\overbar{z}_i$, for all elements between $\lfloor \chi_i/2^{i-1} \rfloor $ and $\chi_i$ the input to the network $\cC_i$ is identical and equals to $\chi_i$. Hence, by Theorem~\ref{cor:min-sketch} and the choice of $\tau'$, if $\chi_i$ is presented at round $t$, by round $t+\tau'$ the output neuron $\overbar{f}_i$ encode an $(1+\epsilon')$-approximation of the frequencies of the interval $[\lfloor \chi_i/2^{i-1} \rfloor, \chi_i]=[\chi_{i}-2^{i-1}+1,\chi_{i}]$ with probability $1- \delta'$
\end{proof}

Toward proving our search method implements the algorithm of~\cite{cormode2005improved}, we show that as long as a median estimation has not been found,  
for every candidate $\chi_i$ the output neurons $\overbar{f}_{\log n}, \ldots, \overbar{f}_{i}$ encode the frequency of the range $[1, \chi_i]$.
\begin{claim} \label{clm:med-iteration}
If the output neurons of the network $\overbar{y}$ did not fire by round $\tau_0+(\log n - i + 1) \cdot \tau'+4$, in round $\tau_0+(\log n - i + 1) \cdot \tau'$ it holds that
$\dec(\overbar{f}_{\log n})+\dots + \dec(\overbar{f}_{i})$ encodes a $(1+ (\log n - i + 1) \cdot \epsilon')$ approximation for the frequency of $[1,\chi_i]$ with probability $1-(\log n - i + 1) \cdot \delta'$.        
\end{claim}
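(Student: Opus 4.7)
The plan is to prove the claim by downward induction on $i$, from $i=\log n$ down to $i=1$. Writing $k=\log n-i+1$ for the number of binary-search iterations already completed, my induction invariant will say that by round $\tau_0+k\tau'$, the value $\sum_{j=i}^{\log n}\dec(\overbar{f}_j)$ (accounting for any inhibitions triggered so far) is a $(1+k\epsilon')$-approximation of $\freq[1,\chi_i]$ with failure probability at most $k\delta'$. The bookkeeping of which $\overbar{f}_j$ has been zeroed out by a $g'_j$ inhibition is explicitly part of the state I carry through the induction.

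For the base case $i=\log n$, only one Count-Min call has occurred. Since $\chi_{\log n}=2^{\log n-1}-1$, Observation~\ref{obs:freq-interval} guarantees that by round $\tau_0+\tau'$ the vector $\overbar{f}_{\log n}$ encodes a $(1+\epsilon')$-approximation of $\freq[\chi_{\log n}-2^{\log n-1}+1,\chi_{\log n}]=\freq[1,\chi_{\log n}]$ with probability $1-\delta'$.

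For the inductive step, the assumption that $\overbar{y}$ has not fired by round $\tau_0+(\log n-i)\tau'+4$ implies $e_{i+1}$ did not fire, so exactly one of $g_{i+1},s_{i+1}$ fired, and Observation~\ref{obs:binary-search} gives $\chi_i=\chi_{i+1}\mp 2^{i-1}$ in the two cases. If $s_{i+1}$ fired, $g'_{i+1}$ stays idle so $\overbar{f}_{i+1}$ is preserved, and the new term $\dec(\overbar{f}_i)\approx\freq[\chi_i-2^{i-1}+1,\chi_i]=\freq[\chi_{i+1}+1,\chi_i]$ simply extends the running interval $[1,\chi_{i+1}]$ to $[1,\chi_i]$. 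If $g_{i+1}$ fired, $g'_{i+1}$ zeros $\overbar{f}_{i+1}$, which had approximated $\freq[\chi_{i+1}-2^i+1,\chi_{i+1}]$; combining this subtraction with the fresh term $\dec(\overbar{f}_i)\approx\freq[\chi_{i+1}-2^i+1,\chi_i]$ yields the telescoping identity
\[
\freq[1,\chi_{i+1}]-\freq[\chi_{i+1}-2^i+1,\chi_{i+1}]+\freq[\chi_{i+1}-2^i+1,\chi_i]=\freq[1,\chi_i].
\]
A union bound contributes one $\epsilon'$ and one $\delta'$ per iteration, yielding the claimed $(1+k\epsilon')$ and $k\delta'$ bounds at step $i$.

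The main obstacle is exactly the interval accounting in the $g_{i+1}$ branch: the inhibited $\dec(\overbar{f}_{i+1})$ corresponds to a dyadic range of length $2^i$, while the piece one naively wants to subtract from $[1,\chi_{i+1}]$ has length only $2^{i-1}$; the resolution is that the newly inserted $\dec(\overbar{f}_i)$ recovers the extra $2^{i-1}$ at the right end, so the net effect of one inhibition plus one fresh query is the desired single-step update of the interval. A secondary check is timing: I must confirm that the chain $T$ loads $\overbar{z}_i$ and the query of $\cC_i$ both complete inside the $\tau'=\Theta(\log m+\log\log n)$-round slot, which is exactly what Theorem~\ref{cor:min-sketch} and the construction of $T$ in Step (3) guarantee.
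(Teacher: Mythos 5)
Your proof is correct and follows essentially the same route as the paper's: downward induction on $i$, the same base case, a case split on whether $g_{i+1}$ or $s_{i+1}$ fired, and the same telescoping identity using Observations~\ref{obs:binary-search} and~\ref{obs:freq-interval} (with $g'_{i+1}$ inhibiting $\overbar{f}_{i+1}$ in the $g$-branch). The one minor refinement you add — explicitly noting that the inhibited term corresponds to a dyadic interval of length $2^i$ while the fresh query recovers the extra $2^{i-1}$ on the right — is a cleaner account of the same telescoping step that the paper states more tersely.
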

\begin{proof}
By induction on $i$ starting $i=\log n$ towards $i=1$. For $i= \log n$ in round $\tau_0+1$ the neurons $\overbar{z}_{\log n}$ encode the element $\chi_{\log n} = 2^{\log n -1} -1$, and by Observation~\ref{obs:freq-interval}, the output neurons $\overbar{f}_{\log n}$ encodes an $1+\epsilon'$ approximation of the frequency of the dyadic interval $[1, 2^{\log n -1} -1]$ by round $\tau_0+\tau'$. Assume the claim holds for the $(i+1)^{th}$ candidate, and consider phase $i$. 
	
For every $j$ let $\tau(j)= \tau_0 + (\log n - j) \cdot \tau'$.
Since the output neurons $\overbar{y}$ did not fire by round $\tau(i-1)+4$, no equality neuron $e_{j}$ has fired previously. Thus, $\chi_{i} \neq 0$ in round $\tau(i)$. Moreover, due to the timing chain $T$ starting round $\tau(i+1)+1$ it holds that $\chi_{i+1} = \dec(\overbar{z}_{i + 1}) \neq 0$. 
	
Hence, by the induction assumption it holds that $f_1 =\dec(\overbar{f}_{\log n-1})+\dots + \dec(\overbar{f}_{i+1})$ encodes an $(1+(\log n - i) \cdot \epsilon')$ approximation of the frequency of $[1,\chi_{i+1}]$ by round $\tau(i)$ with probability $1-(\log n - i) \cdot \delta'$.
Since $\chi_i \neq 0$ it holds that in round $\tau(i)+1$ either $g_{i+1}$ fired, or $s_{i+1}$ fired.
\begin{itemize}
\item If $s_{i+1}$ fired, then by Observation~\ref{obs:binary-search} it holds that $\chi_i=\chi_{i+1}+2^{i-1}$. 
When we query $\cC_i$ by Claim~\ref{obs:freq-interval} in round $\tau_0+(\log n - i+1)\tau'$ it holds that $\dec(\overbar{f}_{i})$ holds an $(1+\epsilon')$ approximation for the frequency of $[\chi_i-2^{i-1}+1,\chi_i] = [\chi_{i-1}+1,\chi_i]$ with probability $1- \delta'$. Thus, we conclude that $\dec(\overbar{f}_{\log n})+\dots + \dec(\overbar{f}_{\log n - i}) = f_1 + \dec(\overbar{f}_{\log n -i})$ is a $(1+(\log n-i+1) \epsilon')$-approximation of the frequency $[1,\chi_i]$ with probability $1- (\log n-i+1) \cdot \delta'$.

\item If $g_{\log n -i + 1}$ fired, by Observation~\ref{obs:binary-search} it holds that $\chi_i = \chi_{i+1} - 2^{i-1}$. 
Recall that by step (2), $g'_{i+1}$ inhibits the neurons $\overbar{f}_{i+1}$ starting round $\tau_0+(\log n-i+1)\tau'-1$ \footnote{The parameter $\tau'$ is chosen to be large enough for that purpose}. 
By Observation~\ref{obs:freq-interval} the neurons $\overbar{f}_{i+1}$ holds an approximation of the frequency of of the interval $[\chi_{i+1}-2^{i}+1, \chi_{i+1}$.
Thus, in round $\tau_0+(\log n - i+1)\tau'$ it holds that $\dec(\overbar{f}_{\log n-1})+\dots + \dec(\overbar{f}_{i})= f_1 - \dec(f_{i+1})+ \dec(\overbar{f}_{i})$ is an $(1+i\cdot \epsilon')$-approximation of the frequency of

$[1,\chi_{i+1}-2^{i}]+[\chi_i-2^{i-1}+1,\chi_i] = [1,(\chi_{i}+2^{i-1})-2^{i}]+[\chi_i-2^{i-1}+1,\chi_i]= [1,\chi_{i}]$ with probability $1- (\log n -i+1) \delta$.
\end{itemize}
\end{proof}
Combining Claim~\ref{clm:med-iteration} with Steps (2) and (4) of the network description we conclude that in every iteration $i$, either we find a median estimation due to the equality neuron $e_i$, or our candidate $\chi_i$ is too small and we increase the next candidate by $n/2^i$, or our candidate is too large and we decrease it by $n/2^i$. The proof of Theorem~\ref{lem:approx-median} then follows from the choice of $\epsilon', \delta'$ and Fact~\ref{fct:median}.

% !TEX root = main-full.tex
\vspace{-10pt}\section{Streaming Lower Bounds Yield SNN Lower Bounds}

We conclude by addressing Question \ref{ques:lower}, giving a generic reduction that lets us simulate a space-efficient SNN with a space-efficient neural network. This establishes a tight connection between the two models -- any streaming space lower bound yields a near-matching neural-space lower bound. 
%Missing proofs of this section appear in Appendix \ref{sec:app-lb}.

\paragraph{Complexity classes in the SNN model.}
For integer parameters $n,m,S$, let $\SNNdet(n,m,S)$ be the set of all data-stream problems $P_{n,m}$ defined over universe $[n]$ and stream length at most $m$ that are solvable by a deterministic SNN with (i) at most $O(S)$ non-input neurons (i.e., auxiliary and output neurons) and (ii) polynomially bounded edge weights (by $n$ and $m$).
Let $\PSNNdet(n,m,S)$ be the class of all data-stream problems $P_{n,m}$ in $\SNNdet(n,m,S)$ whose network solution also have in addition a polynomial persistence time (in $n$ and $m$).  That is, the problems in $\PSNNdet(n,m,S)$ are solvable in polynomial-time by a deterministic SNN that has properties (i,ii). 

We also consider the class of data-stream problems that are solvable by a randomized SNN. 
Let $\SNNrand(n,m,S,\delta)$ be the set of all data-stream problems $P_{n,m}$ that are solvable by a randomized SNN with: (i) at most $O(S)$ non-input neurons, (ii) polynomially bounded edge weights, and (iii) $\le \delta$ failure probability on any input.
The class $\PSNNrand(n,m,S,\delta)$ is a sub-class of $\SNNrand(n,m,S,\delta)$ that requires also a polynomial persistence time.

\paragraph{Complexity classes in the streaming model.}
Let $\Streamdet(n,m, S)$ be the class of all data-stream problems for which there exists a single-pass deterministic streaming algorithm for the problem using space $O(S)$ (potentially with exponentially large update time). Also, let $\Streamrand(n,m, S,\delta)$ be the class of all data-stream problems for which there exists a single-pass randomized streaming algorithm that solves the problem with failure probability $\le \delta$ using space $O(S)$. One can also define the classes $\PStreamdet(n,m,S)$ and $\PStreamrand(n,m,S,\delta)$ which require polynomial update time.

We start by showing that any deterministic SNN with space $S$ for a given data-stream problem $P_{n,m}$ yields an $S$-space deterministic streaming algorithm for the problem.
\begin{lemma}
For every $n,m,S$, we have:\\
$\SNNdet(n,m,S) \subseteq \Streamdet(n,m,S)$ and $\PSNNdet(n,m,S) \subseteq \PStreamdet(n,m,S)$.
\end{lemma}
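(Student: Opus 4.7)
The plan is to directly convert a deterministic SNN into a streaming algorithm whose working memory stores only the firing pattern of the non-input neurons. Given a data-stream problem $P_{n,m} \in \SNNdet(n,m,S)$, let $\mathcal{N}$ be the witnessing SNN with $n$ input neurons and at most $cS$ auxiliary/output neurons, with deterministic, polynomially bounded edge weights and biases. Observe that the entire wiring of $\mathcal{N}$ (weight function $w$ and bias function $b$) is fixed and independent of the input stream, so it can be hard-coded into the description of the streaming algorithm $\mathcal{A}$ and does not contribute to its space complexity.

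The streaming algorithm $\mathcal{A}$ will maintain, as its sole working memory, the length-$cS$ binary vector $\sigma_\tau \in \{0,1\}^{cS}$ recording the current firing status of the non-input neurons. This uses $O(S)$ bits. For each incoming stream update $u_t$, $\mathcal{A}$ sets the firing status of the input neurons according to the natural encoding of $u_t$ (as in the problem's SNN input convention) and then simulates the SNN dynamics for the persistence time associated with $\mathcal{N}$: in each simulated round, for every non-input neuron $v$, $\mathcal{A}$ computes $\pot(v,\tau)=\sum_u w(u,v)\sigma_{\tau-1}(u)-b(v)$ using the hard-coded weights and the stored firing vector, and records $\sigma_\tau(v)=1$ iff $\pot(v,\tau)\ge 0$. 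Only the latest firing vector need be kept, so the space stays $O(S)$ throughout. After the last update, $\mathcal{A}$ reads the output from the coordinates of $\sigma$ corresponding to the output neurons. Correctness is immediate: by construction, the sequence of firing vectors produced by $\mathcal{A}$ coincides exactly with the firing vectors of $\mathcal{N}$ running on the stream, so $\mathcal{A}$ solves $P_{n,m}$.

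For the polynomial-time containment $\PSNNdet(n,m,S) \subseteq \PStreamdet(n,m,S)$, note that if $\mathcal{N}$ has persistence time $T=\poly(n,m)$, then simulating one stream update takes $T$ rounds, and each round requires computing a weighted sum over at most $cS+n$ predecessors for each of the $cS$ non-input neurons. Since edge weights are polynomially bounded, each potential computation can be performed in $\poly(n,m,S)$ time, so the per-update time of $\mathcal{A}$ is $\poly(n,m)$, as required.

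The main step to be careful with, which I do not expect to be a real obstacle, is the separation between the \emph{algorithm's description} (the weights $w$ and biases $b$, which are fixed and thus ``free'') and its \emph{working memory} (the firing vector $\sigma$, which is charged). This accounting is precisely what makes the containment tight: the streaming model, as defined here, does not charge for the length of the algorithm's description, mirroring the way the SNN model does not charge for encoding edge weights. The randomized variant of this statement is genuinely harder and requires Nisan-style derandomization to argue about the $\Omega(nS+S^2)$ random edge weights within $O(\log(nm))$ additive space; that is handled separately in Theorem~\ref{lem:snn-rand-low-spaceIntro}.
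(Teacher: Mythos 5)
Your proof is correct and takes essentially the same approach as the paper: simulate the SNN one stream update at a time, keeping only the firing vector of the $S$ non-input neurons as working memory (with the fixed weights and biases hard-coded into the algorithm description), and observe that the polynomial persistence time yields polynomial update time for the $\PSNNdet$ containment.
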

\begin{proof}
Fix the parameters $n,m,S$, and consider a problem $\Pi \in \SNNdet(n,m,S)$. Let $\mathcal{N}$ be the SNN for the problem $\Pi$. Thus $\mathcal{N}$ has $S$ auxiliary and output neurons. We now describe a streaming algorithm for $\Pi$ that uses space $S$. The algorithm traverses the stream and feeds each item as an input to the network $\mathcal{N}$ (with sufficient large persistence time). Importantly, when considering the subsequent input item, the streaming algorithm only keeps the current firing states of the $S$ auxiliary and output neurons. The correctness follows immediately by the correctness of the network $\mathcal{N}$. The space complexity is $S$ bits corresponding to the firing states of the (non-input) neurons in $\mathcal{N}$.
The proof that $\PSNNdet(n,m, S) \subseteq \PStreamdet(n,m, S)$ is analogous since the update time of the streaming algorithm is polynomial in the network size and the persistence time of the network.
\end{proof}

%\begin{quote}
%\emph{Q: Do we have a data-stream problem which \emph{requires} dense sketch (random) matrix?}
%\end{quote}

%\begin{lemma}\label{lem:SNN-oblivious}
%Any feedforward SNN is order-oblivious.
%\end{lemma}
\vspace{-10pt} 
\paragraph{Pseudorandomness for neural networks.}
Our next goal is to simulate space-efficient randomized SNNs for data-stream problems with small-efficient streaming algorithms. The main barrier arises in the case where the edge weights of the network $\cN$ are chosen randomly according to some distribution. Since an $S$-space network with $n$ input neurons might have $\Omega (S n + S^2)$ edges, the explicit specification of the edge weights is too costly for our purposes.

To overcome this barrier, we will use pseudorandom generators~\cite{vadhan2012pseudorandomness}.
%\Cam{Where does $\epsilon$ come in this definition?}
\begin{definition}[PRG]\label{def:PRG}
A deterministic function $G:\{0,1\}^d \to \{0,1\}^n$ for $d< n$ is a $(t,\epsilon)$ pseudorandom generator (PRG) if
any circuit $C$ of size at most $t$ distinguishes a uniform random string $U \gets \B^n$ from $G(R)$, where $R \gets \B^{d}$, with advantage at most $\epsilon$. The parameter $d$ is called the \emph{seed length}.
\end{definition}

\begin{proposition}\label{prop:PRG}[Prop. 7.8 in \cite{vadhan2012pseudorandomness}]
For all $n \in \mathbb{N}$ and $\epsilon >0$, there exists a (non-explicit) $(n,\epsilon)$ pseudorandom generator (PRG) $G:\{0,1\}^d \to \{0,1\}^n$ with seed length $d=O(\log  n+\log 1/\epsilon)$. 
\end{proposition}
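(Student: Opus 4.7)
}
The plan is to apply the standard probabilistic-method argument (a union bound of Chernoff-Hoeffding tail bounds against a counting bound on circuits). I would sample a candidate $G:\{0,1\}^d \to \{0,1\}^n$ uniformly at random -- for each of the $2^d$ possible seeds assign an independent uniform output in $\{0,1\}^n$ -- and then argue that with positive probability this $G$ simultaneously fools every circuit of size at most $n$.

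The first step is a counting bound on the distinguisher class. The number of Boolean circuits of size at most $t = n$ on $n$ input bits, over any fixed fan-in-$2$ basis, is at most $2^{c\, n \log n}$ for some absolute constant $c$, since each of the at most $n$ gates is specified by its type and by a pair of indices pointing to previous gates or input bits.

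The second step is a concentration bound for each fixed circuit $C$. Let $p_C = \Pr_{y \gets U_n}[C(y)=1]$ and $\widehat{p}_C(G) = \Pr_{s \gets U_d}[C(G(s))=1]$. Over the random choice of $G$, the value $\widehat{p}_C(G)$ is an empirical average of $2^d$ i.i.d.\ Bernoulli$(p_C)$ variables, so Chernoff-Hoeffding yields $\Pr_G[\,|\widehat{p}_C(G) - p_C| > \epsilon\,] \le 2 \exp(-2 \epsilon^2 \cdot 2^d)$. Taking a union bound over all $2^{c\, n \log n}$ circuits bounds the total failure probability by $2^{c\, n \log n} \cdot 2 \exp(-2 \epsilon^2 \cdot 2^d)$, which is strictly below $1$ provided $2^d = \Omega\bigl((n \log n + \log(1/\epsilon))/\epsilon^2\bigr)$. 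Solving for $d$ gives $d = O(\log n + \log \log n + \log(1/\epsilon)) = O(\log n + \log(1/\epsilon))$, as claimed; by the probabilistic method a suitable (non-explicit) $G$ exists.

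The main (and essentially only) obstacle is being careful with the circuit-counting estimate: one must use the tight $2^{O(n\log n)}$ bound rather than a loose $2^{O(n^2)}$ bound, as otherwise the required seed length would grow past $O(\log n + \log(1/\epsilon))$ and the claim would break. Everything else is a textbook Chernoff-plus-union-bound application, and since the statement only asserts existence, no explicit construction is needed.
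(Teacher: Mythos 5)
Your proof is correct, and it is the same standard argument the paper relies on: the paper cites Vadhan's Proposition 7.8 without reproducing a proof, but explicitly remarks that "the existence of the PRG from Prop.~\ref{prop:PRG} is shown via the probabilistic method," which is precisely the Chernoff-plus-union-bound-over-circuits argument you give. Your counting bound $2^{O(n\log n)}$ on circuits of size $n$, the Hoeffding tail $2\exp(-2\epsilon^2 2^d)$ for each fixed distinguisher, and the resulting $2^d = \Omega\bigl((n\log n + \log(1/\epsilon))/\epsilon^2\bigr)$ (hence $d = O(\log n + \log(1/\epsilon))$) all check out.
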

The existence of the PRG from Prop. \ref{prop:PRG} is shown via the probabilistic method. Such a PRG can be found in a brute-force manner, by iterating over all $n$-size circuits and all functions $G:\{0,1\}^d \to \{0,1\}^n$ in some fixed order. The desired function $G^*$ is the \emph{first} function that fools the family of all $n$-size circuits.

%\begin{definition}[NW-type $PRG$s.]\label{def:NWPRG}
%A function $\NWPRG \colon \B^{d(n)} \to \B^n$ is an \emph{NW-type $PRG$  against circuits of size} $t(n)$ if it is (i) computable in time $2^{O(d(n))}$ and (ii) any circuit $C$ of size at most $t(n)$ distinguishes $U \gets \B^n$ from $\NWPRG(s)$, where $s \gets \B^{d(n)}$, with advantage at most $1/t(n)$.
%\end{definition}
%
%\begin{theorem}\cite{ImpagliazzoW97}\label{thm:NWPRG}
%Assume there exists a function in $E = \DTIME(2^{O(n)})$ with circuit complexity $2^{\Omega(n)}$.
%Then, for any polynomial $t(\cdot)$, there exists a NW-type generator $\NWPRG \colon \B^{d(n)} \to \B^n$ against circuits of size $t(n)$, where $d(n) = O(\log{n})$. 
%\end{theorem}
%\vspace{-5pt}
%\begin{lemma}\label{lem:prg-exp}
%If one does not restrict the running time of the PRG (and allows it to be computable by a non-uniform circuit), then a version of Theorem \ref{thm:NWPRG} holds \emph{unconditionally}. 
%\end{lemma}
%%\def\APPENDPR{
%\begin{proof}[Sketch of Lemma \ref{lem:prg-exp}]
%This can be done by applying the Nisan-Widgderson generator to a function $f$ that has a circuit complexity of $2^{n}$ but not $2^{n/2}$. Such a function exists unconditionally by a counting argument.
%\end{proof}
%}%\APPENDPR

Since an SNN with $n$ input neurons, $S$ non-input neurons for $S=\poly(n)$, and polynomial persistence time can be computed in polynomial time (and thus also by a circuit of polynomial size), we have the following:
\begin{lemma}\label{lem:snn-rand-low-space}
Any SNN $\mathcal{N}$ with $n$ input neurons, $S$ non-input neurons for $S=\poly(n)$, and persistence time $\poly(n)$ in an $m$-length stream can be simulated using a total space of $O(S+\log(nm))$. The success guarantee of the simulation is $1-1/\poly(n,m)$. 
\end{lemma}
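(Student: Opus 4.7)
}
The plan is to adapt the straightforward deterministic simulation (which stores only the $S$ firing states of the non-input neurons) to the randomized setting, where the main obstacle is that the network may have up to $\Omega(nS + S^2)$ randomly-weighted edges. Since each weight has magnitude polynomial in $n,m$, specifying all of them explicitly would require $\poly(n,m) \gg S + \log(nm)$ bits, and so explicit storage of the randomness is ruled out. My approach is to generate all randomness on the fly from a short seed using the pseudorandom generator of Proposition~\ref{prop:PRG}, paying the additive $\log(nm)$ term for storing the seed.

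\paragraph{Setup.} Let $N=\poly(n,m)$ be an upper bound on the total number of random bits used by $\mathcal{N}$ to sample its edge weights (each of polynomial magnitude, hence describable in $O(\log(nm))$ bits, times a $\poly(n,m)$ bound on the number of edges). Fix a target error $\epsilon = 1/\poly(n,m)$ and a circuit-size parameter $t=\poly(n,m)$ to be chosen below. Invoking Proposition~\ref{prop:PRG} with parameters $\max(N,t)$ and $\epsilon$, obtain a $(t,\epsilon)$-PRG $G:\B^d \to \B^{N}$ with seed length $d=O(\log(nm))$.

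\paragraph{The streaming simulation.} At the start of the stream, the algorithm samples a seed $R \in \B^d$ and stores it permanently; it also stores a register of $S$ bits holding the current firing states of the non-input neurons (initially idle). For every incoming stream update $u_i$, the algorithm simulates the $\poly(n)$ persistence-time rounds of $\mathcal{N}$ on input $u_i$. To execute one round, for each non-input neuron $v$ it recomputes the weights of the at most $n+S$ incoming edges of $v$ by reading the appropriate positions of $G(R)$, evaluates the threshold computation against the currently stored firing states, and writes the new firing state. The simulation is done entirely in space $O(S + d) = O(S + \log(nm))$; time is unbounded, which is acceptable since $G$ from Proposition~\ref{prop:PRG} is not explicit (we brute-force search for it).

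\paragraph{Correctness via distinguisher.} The main technical step is to argue that replacing the true randomness by $G(R)$ preserves the success probability up to $\epsilon$. For every fixed data-stream $\mathcal{S}$ of length $\le m$, consider the circuit $D_{\mathcal{S}}:\B^{N} \to \{0,1\}$ that hardcodes $\mathcal{S}$, interprets its input $r$ as the random edge weights of $\mathcal{N}$, deterministically simulates $\mathcal{N}$ on $\mathcal{S}$ for the entire polynomial persistence time at each of the $m$ updates, and outputs $1$ iff the output neurons of $\mathcal{N}$ encode the correct answer. Since $S=\poly(n)$, the persistence time is $\poly(n)$, and $m$ updates are processed, $D_{\mathcal{S}}$ has size $t=\poly(n,m)$. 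By the PRG guarantee,
\[
\bigl|\Pr_{r\gets \B^{N}}[D_{\mathcal{S}}(r)=1]-\Pr_{R\gets \B^{d}}[D_{\mathcal{S}}(G(R))=1]\bigr|\le \epsilon.
\]
The left-hand probability is the true success probability of $\mathcal{N}$ on $\mathcal{S}$; the right-hand probability is the success probability of our streaming simulation. Choosing $\epsilon = 1/\poly(n,m)$ yields the claimed $1-1/\poly(n,m)$ guarantee. The hardest conceptual step to verify carefully is the circuit bound on $D_{\mathcal{S}}$: one must ensure that the threshold computations, the indexing into $r$ to extract each edge weight, and the round-by-round bookkeeping all compose into a uniform $\poly(n,m)$-size Boolean circuit; this is routine given the polynomial bounds on $S$, edge weights, and persistence time, but is the step that ties the whole argument together.
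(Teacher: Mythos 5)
Your proposal follows essentially the same approach as the paper's proof: replace the true randomness with the output of the non-explicit PRG of Proposition~\ref{prop:PRG}, store only the $O(\log(nm))$-bit seed together with the $S$ firing bits, simulate the network round by round by reconstructing weights on the fly from the PRG output, and justify correctness by viewing the offline simulation of $\mathcal{N}$ on a fixed stream as a $\poly(n,m)$-size distinguisher circuit that the PRG fools. The only steps the paper spells out that you leave implicit are that (a) edge weights must first be rounded to $1/\poly(n,m)$ precision (your claim that a polynomial-magnitude weight is "describable in $O(\log(nm))$ bits" requires this rounding, which the paper argues costs only an additive $1/\poly(n,m)$ in success probability), and (b) since the lemma covers arbitrary SNNs, which may use probabilistic threshold gates, the per-round firing decisions must also draw from the pseudorandom string, not just the edge weights.
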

\begin{proof}%[Proof of Lemma \ref{lem:snn-rand-low-space}]
Consider a (centralized, offline) algorithm that given an ordered stream of length $m' \leq m$ of elements in $[1,n]$ evaluates the output of the network $\mathcal{N}$ on that stream. This algorithm can be implemented in time $\poly(n,m)$ and thus there exists a circuit of size $M=\poly(n,m)$ that implements this algorithm. Our goal is to simulate this circuit using a random seed of length $d=O(\log (nm))$ while reducing the success guarantee by an additive term of $1/\poly(n,m)$. To do that, we will use the PRG construction of Prop. \ref{prop:PRG} that given a random seed of size $d$ fools the family of all circuits of size at most $M$ with probability $1-1/\poly(M)$. 

%\textbf{MP: here we should add some description about the way that we compute and use the PRG.}
We assume that the PRG function $G$ is hard-coded in the streaming algorithm in the following sense. There is a PRG oracle that given a $d$ length seed $R$ and an index $i$ outputs the $i$'th bit of $G(R)$. We can think of the code of $G$ as simply comprising a look up table, but note that this code is not part of the space complexity of the streaming algorithm, which only includes data written by the algorithm while processing the stream. The seed $R$ must be chosen randomly at the beginning of the stream and then stored, and thus is included in the space complexity. We also note that the evaluation time $G$ (i.e., outputting each bit) might be exponential.

We now describe how to simulate $\cN$ using $O(\log (nm)+S)$ space using this oracle. We store the seed of $O(\log (nm))$ random bits $R$ and the current firing states of all non-input neurons in $\cN$. Then, as we traverse the stream, for every data-item in the stream, the algorithm evaluates the network $\cN$ on that data-item using the PRG oracle in the following manner. The simulation works in a round by round and a neuron by neuron fashion which only stores $O(\log nm)$ bits from $G(R)$ at any given time. To evaluate the firing state of neuron $u$ in layer $i\geq 0$, the total incoming edge weight of $u$ is computed as follows. Let $v_1,\ldots, v_k$ be the incoming neighbors of $u$. The firing states in round $i-1$ are stored explicitly (this is indeed within the space bound $S$). For each $v_j$ that fired in round $i-1$ we look up $O(\log nm)$ entries in $G(R)$ which describe the edge weight $w(v_i,u)$. We note that, without loss of generality we can assume that the edge weights have  precision $1/\poly(n,m)$ and thus can be described with $O(\log nm)$ bits. 
Rounding any edge weights to have this precision will not affect the success probability of the network by more than a $1/\poly(n,m)$ factor. We accumulate the edge weights into a value $P$, the total incoming potential of $u$, which again requires $O(\log nm)$ bits to store. Finally, using $P$ we evaluate the probability that $u$ fires in round $i$. We again can round this probability to $1/\poly(n,m)$ precision, and thus by looking up $O(\log nm)$ entries in $G(R)$ evaluate if $u$ fires in round $i$. We proceed in this way, iterating over all $S$ non-input neurons and storing their states in round $i$, before proceeding to the next round. Overall, our space complexity remains bounded by $O(\log (nm)+S)$.

%As the we have computed the total incoming edge weights from neighbors $v_1,\ldots, v_j$, and let it be $P$. To add that incoming edge weight from $v_{j+1}$ (in case $v_{j+1}$ fired in round $i-1$) we use the PRG oracle with input $R$ and the $O(\log n)$ entries in $P(R)$ that encode the edge weight $w(v_i,u)$...\textbf{MP: do we know that it is sufficient to red few bits from $P(R)$ to evaluate the edge weight?}

%However, we will only store one pseudorandom coin at a time, when its value is required as part of an update. 
%By knowing the states of the $S$ non-input neurons from the previous step, and using the coins to determine the edge weights one at a time (requiring $O(\log n)$ space as the edge weights are polynomially bounded) one can simulate the firing pattern of the neurons in the given step. When the neurons are probabilistic threshold gates, the pseudorandom coins are also used to simulate the firing decisions of these neurons. While each firing probability is real-valued, it can be rounded to $1/\poly(n,m)$ accuracy (and hence determined by $O(\log(nm))$ random coins) without changing the probability of any output configuration by more than $1/\poly(n,m)$. The step ends with the computation of the firing states of all non-input neurons, which are stored for the next step.   
The success probability of the algorithm overall is decreased by an additive $1/\poly(n,m)$ term, due to the rounding of edge weights and probabilities and the use of pseudorandom rather than truly random bits. %bit rather than truely random bits random coins is decreased by an additive term of $1/\poly(n,m)$. %Therefore by storing the seed and the states of the non-input neurons, one can simulate the execution of the network over the data-stream.
\end{proof}
%}%\APPENDLB
Lemma \ref{lem:snn-rand-low-space} implies that any randomized SNN with space $S$ that solves a streaming problem $P_{n,m}$ with probability $1-\delta$ in polynomial time translates into a randomized streaming algorithm for $P_{n,m}$ using space of $S+O(\log (nm))$. We therefore have:
\vspace{-2pt}\begin{theorem}\label{thm:SNNconStream}
$\PSNNrand(n,m,S,\delta) \subseteq \Streamrand(n,m,S+O(\log (nm)),\delta+1/\poly(n,m))~.$
\end{theorem}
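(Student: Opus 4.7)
The plan is to derive the theorem as a direct corollary of Lemma~\ref{lem:snn-rand-low-space} combined with the definitions of the two complexity classes. First, I would unpack the hypothesis: any problem $P \in \PSNNrand(n,m,S,\delta)$ admits a randomized SNN $\mathcal{N}$ with at most $O(S)$ non-input neurons, polynomially bounded edge weights, polynomial persistence time, and failure probability at most $\delta$ on every input stream of length at most $m$.

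The second step is to invoke Lemma~\ref{lem:snn-rand-low-space} on this $\mathcal{N}$. Since $S = \poly(n)$ and the persistence time is polynomial, the hypotheses of the lemma are met, yielding a streaming algorithm $\mathcal{A}$ that simulates $\mathcal{N}$ in a single pass over the stream using total space $O(S + \log(nm))$, with the simulation succeeding (i.e., producing the same output as $\mathcal{N}$ on the given realization of $\mathcal{N}$'s random coins) with probability $1 - 1/\poly(n,m)$. Concretely, $\mathcal{A}$ stores only the $O(\log(nm))$-bit PRG seed $R$ and the firing states of the $O(S)$ non-input neurons between rounds, using the hard-coded PRG oracle from Proposition~\ref{prop:PRG} to regenerate edge weights on demand.

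The third and final step is a union bound over the two independent sources of error: the event that $\mathcal{N}$ itself errs on the input stream (probability $\leq \delta$) and the event that the PRG-based simulation of $\mathcal{N}$ deviates from the true behavior of $\mathcal{N}$ (probability $\leq 1/\poly(n,m)$). Together this gives overall failure probability at most $\delta + 1/\poly(n,m)$, with space $O(S + \log(nm))$, which witnesses $P \in \Streamrand(n,m, S + O(\log(nm)), \delta + 1/\poly(n,m))$.

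There is essentially no new obstacle to overcome: the substantive work, namely the PRG-based small-space simulation of a randomized SNN, has already been carried out in Lemma~\ref{lem:snn-rand-low-space}. The only care needed at this stage is to make the bookkeeping of the two error contributions explicit and to verify that the polynomial persistence-time hypothesis of $\PSNNrand$ is precisely what is required to apply the lemma (so that the round-by-round, neuron-by-neuron simulation runs in polynomial time and fits inside a polynomial-size circuit that the PRG can fool). Once this is noted, the containment follows immediately.
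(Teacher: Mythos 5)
Your proposal is correct and takes essentially the same route as the paper: the paper derives Theorem~\ref{thm:SNNconStream} directly as a consequence of Lemma~\ref{lem:snn-rand-low-space}, and your additional explicit bookkeeping (union bound over the network's intrinsic $\delta$-error and the $1/\poly(n,m)$ simulation error, plus checking that polynomial persistence time enables the polynomial-size circuit needed for the PRG) merely fills in details the paper leaves implicit.
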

%\vspace{-10pt} 
%\paragraph{Streaming-Space Lower Bounds Imply SNN-Space Lower Bounds.} 
\noindent A useful implication of Theorem \ref{thm:SNNconStream} is that any space lower bound in the streaming model immediately translates into space lower bound for networks that have a polynomial persistence time on the input stream. 
%See Cor. \ref{cor:lower-bound} in Appendix for the precise formulation.
\vspace{-5pt}
\begin{corollary}\label{cor:lower-bound}
Let $P_{n,m}$ be a data-stream problem for which any randomized streaming algorithm that solves the problem with probability $1-\delta$ requires space $\Omega(S(n,m,\delta))$. Then, any SNN for solving $P_{n,m}$ within polynomial number of rounds with probability at least $1-\delta+1/\poly(n,m)$ requires space of $\Omega(S(n,m,\delta)-\log (nm))$.
\end{corollary}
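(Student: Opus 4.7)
The plan is to derive Corollary \ref{cor:lower-bound} as an immediate contrapositive application of Theorem \ref{thm:SNNconStream}; no new machinery is needed beyond careful bookkeeping of parameters. First, I would suppose toward contradiction that there is a randomized SNN $\mathcal{N}$ with $n$ input neurons, polynomial persistence time, and $S$ non-input neurons that solves $P_{n,m}$ with success probability at least $1 - \delta + 1/\poly(n,m)$, where $S = o(S(n,m,\delta) - \log(nm))$.

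By definition of the complexity class, such an $\mathcal{N}$ witnesses $P_{n,m} \in \PSNNrand(n,m,\,S,\,\delta - 1/\poly(n,m))$, since the failure probability of $\mathcal{N}$ is at most $\delta - 1/\poly(n,m)$. Applying Theorem \ref{thm:SNNconStream} with failure parameter $\delta' = \delta - 1/\poly(n,m)$, we obtain
\[
P_{n,m} \in \Streamrand\bigl(n,\,m,\, S + O(\log(nm)),\; \delta - 1/\poly(n,m) + 1/\poly(n,m)\bigr) = \Streamrand\bigl(n,\,m,\, S + O(\log(nm)),\, \delta\bigr).
\]
That is, there exists a single-pass randomized streaming algorithm for $P_{n,m}$ that succeeds with probability $1-\delta$ using space at most $S + O(\log(nm))$.

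Now I would invoke the hypothesis of the corollary: any such streaming algorithm requires space $\Omega(S(n,m,\delta))$. Combining the two bounds gives $S + O(\log(nm)) = \Omega(S(n,m,\delta))$, and hence $S = \Omega(S(n,m,\delta) - \log(nm))$, contradicting the assumed choice of $S$. This contradiction yields the desired neural-space lower bound.

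There is essentially no technical obstacle here since all the real work is already encapsulated in Theorem \ref{thm:SNNconStream}, and in particular in Lemma \ref{lem:snn-rand-low-space}, whose PRG-based simulation accounts for the $O(\log(nm))$ additive space overhead and the $1/\poly(n,m)$ additive loss in success probability. The only point that requires care is matching parameters: the $+1/\poly(n,m)$ slack built into the SNN's success guarantee in the hypothesis of the corollary is exactly what absorbs the $1/\poly(n,m)$ failure introduced by the PRG simulation, so that the resulting streaming algorithm meets the $1-\delta$ success requirement needed to invoke the streaming lower bound.
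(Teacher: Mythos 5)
Your proposal is correct and follows essentially the same route as the paper's own proof: assume a small-space SNN exists, push it through Theorem~\ref{thm:SNNconStream} to get a small-space streaming algorithm, and derive a contradiction with the assumed streaming lower bound. In fact your bookkeeping is a bit more careful than the paper's text, which conflates success and failure probability in the class parameters (writing $1-\delta+\poly(1/m)$ where the failure-probability slot of $\PSNNrand$ should read $\delta - 1/\poly(n,m)$); your version correctly identifies that the $+1/\poly(n,m)$ slack in the SNN success guarantee is exactly what absorbs the $1/\poly(n,m)$ loss from the PRG simulation so that the resulting streaming algorithm meets the $1-\delta$ threshold.
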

\begin{proof}%[Proof of Cor. \ref{cor:lower-bound}]
Assume towards contradiction that there is an SNN for solving $P_{n,m}$ with probability at least $1-\delta-\poly(1/m)$ within polynomial number of rounds, and using space of $o(S(n)+\log m)$. Thus, $P_{n,m} \in \PSNNrand(n,m,o(S(n,\delta)-\log m),1-\delta+\poly(1/m))$.
The exact specification of the $\poly(\cdot)$ terms are given by Theorem \ref{thm:SNNconStream}.
By Theorem \ref{thm:SNNconStream}, it then holds that $P_{n,m} \in  \Streamrand(n,m,o(S(n,\delta)),1-\delta)$. Contradiction for the fact that solving $P_{n,m}$ with probability $1-\delta$ requires streaming space of $\Omega(S(n,\delta))$. The corollary follows.
\end{proof}
%}%\APPENDCORBOUND

\paragraph{Acknowledgments.} 
We are very grateful to Eylon Yogev for various discussions on pseudorandom generators. We also thank David Woodruff for helpful discussions on streaming lower bounds.
\newpage
\bibliography{streaming}

\newcommand{\etalchar}[1]{$^{#1}$}
\begin{thebibliography}{KMM{\etalchar{+}}20}

\bibitem[AMS99]{alon1999space}
Noga Alon, Yossi Matias, and Mario Szegedy.
\newblock The space complexity of approximating the frequency moments.
\newblock {\em Journal of Computer and system sciences}, 58(1):137--147, 1999.

\bibitem[BJK{\etalchar{+}}02]{Bar-YossefJKST02}
Ziv Bar{-}Yossef, T.~S. Jayram, Ravi Kumar, D.~Sivakumar, and Luca Trevisan.
\newblock Counting distinct elements in a data stream.
\newblock In {\em Randomization and Approximation Techniques, 6th International
  Workshop, {RANDOM} 2002, Cambridge, MA, USA, September 13-15, 2002,
  Proceedings}, pages 1--10, 2002.

\bibitem[Bla18]{Blasiok18}
Jaroslaw Blasiok.
\newblock Optimal streaming and tracking distinct elements with high
  probability.
\newblock In {\em Proceedings of the Twenty-Ninth Annual {ACM-SIAM} Symposium
  on Discrete Algorithms, {SODA} 2018, New Orleans, LA, USA, January 7-10,
  2018}, pages 2432--2448, 2018.

\bibitem[CCFC02]{charikar2002finding}
Moses Charikar, Kevin Chen, and Martin Farach-Colton.
\newblock Finding frequent items in data streams.
\newblock In {\em International Colloquium on Automata, Languages, and
  Programming}, pages 693--703. Springer, 2002.

\bibitem[CCL18]{chou2018algorithmic}
Chi-Ning Chou, Kai-Min Chung, and Chi-Jen Lu.
\newblock On the algorithmic power of spiking neural networks.
\newblock In {\em 10th Innovations in Theoretical Computer Science Conference
  (ITCS 2019)}. Schloss Dagstuhl-Leibniz-Zentrum fuer Informatik, 2018.

\bibitem[CDIM03]{cormode2003comparing}
Graham Cormode, Mayur Datar, Piotr Indyk, and Shanmugavelayutham Muthukrishnan.
\newblock Comparing data streams using hamming norms (how to zero in).
\newblock {\em IEEE Transactions on Knowledge and Data Engineering},
  15(3):529--540, 2003.

\bibitem[CG07]{chassaing2007efficient}
Philippe Chassaing and Lucas Gerin.
\newblock Efficient estimation of the cardinality of large data sets.
\newblock {\em arXiv preprint math/0701347}, 2007.

\bibitem[CM05]{cormode2005improved}
Graham Cormode and Shan Muthukrishnan.
\newblock An improved data stream summary: the count-min sketch and its
  applications.
\newblock {\em Journal of Algorithms}, 55(1):58--75, 2005.

\bibitem[CW79]{carter1979universal}
J~Lawrence Carter and Mark~N Wegman.
\newblock Universal classes of hash functions.
\newblock {\em Journal of computer and system sciences}, 18(2):143--154, 1979.

\bibitem[CZ20]{ChenZ20}
Zhiwei Chen and Aoqian Zhang.
\newblock A survey of approximate quantile computation on large-scale data.
\newblock {\em {IEEE} Access}, 8:34585--34597, 2020.

\bibitem[DF03]{DurandF03}
Marianne Durand and Philippe Flajolet.
\newblock Loglog counting of large cardinalities (extended abstract).
\newblock In {\em Algorithms - {ESA} 2003, 11th Annual European Symposium,
  Budapest, Hungary, September 16-19, 2003, Proceedings}, pages 605--617, 2003.

\bibitem[DSN17]{dasgupta2017neural}
Sanjoy Dasgupta, Charles~F Stevens, and Saket Navlakha.
\newblock A neural algorithm for a fundamental computing problem.
\newblock {\em Science}, 358(6364):793--796, 2017.

\bibitem[FFGM07]{flajolet2007hyperloglog}
Philippe Flajolet, {\'E}ric Fusy, Olivier Gandouet, and Fr{\'e}d{\'e}ric
  Meunier.
\newblock Hyperloglog: the analysis of a near-optimal cardinality estimation
  algorithm.
\newblock 2007.

\bibitem[HLMP20]{HitronLMP20}
Yael Hitron, Nancy~A. Lynch, Cameron Musco, and Merav Parter.
\newblock Random sketching, clustering, and short-term memory in spiking neural
  networks.
\newblock In {\em 11th Innovations in Theoretical Computer Science Conference,
  {ITCS} 2020, January 12-14, 2020, Seattle, Washington, {USA}}, pages
  23:1--23:31, 2020.

\bibitem[HP19]{HitronP19}
Yael Hitron and Merav Parter.
\newblock Counting to ten with two fingers: Compressed counting with spiking
  neurons.
\newblock In {\em 27th Annual European Symposium on Algorithms, {ESA} 2019,
  September 9-11, 2019, Munich/Garching, Germany}, pages 57:1--57:17, 2019.

\bibitem[HPP20]{HitronPP20}
Yael Hitron, Merav Parter, and Gur Perri.
\newblock The computational cost of asynchronous neural communication.
\newblock In {\em 11th Innovations in Theoretical Computer Science Conference,
  {ITCS} 2020, January 12-14, 2020, Seattle, Washington, {USA}}, pages
  48:1--48:47, 2020.

\bibitem[Ind06]{indyk2006stable}
Piotr Indyk.
\newblock Stable distributions, pseudorandom generators, embeddings, and data
  stream computation.
\newblock {\em Journal of the ACM (JACM)}, 53(3):307--323, 2006.

\bibitem[IP11]{indyk2011k}
Piotr Indyk and Eric Price.
\newblock K-median clustering, model-based compressive sensing, and sparse
  recovery for earth mover distance.
\newblock In {\em Proceedings of the forty-third annual ACM symposium on Theory
  of computing}, pages 627--636, 2011.

\bibitem[IW03]{IndykW03}
Piotr Indyk and David~P. Woodruff.
\newblock Tight lower bounds for the distinct elements problem.
\newblock In {\em 44th Symposium on Foundations of Computer Science {(FOCS}
  2003), 11-14 October 2003, Cambridge, MA, USA, Proceedings}, pages 283--288,
  2003.

\bibitem[KLL16]{karnin2016optimal}
Zohar Karnin, Kevin Lang, and Edo Liberty.
\newblock Optimal quantile approximation in streams.
\newblock In {\em 2016 IEEE 57th Annual Symposium on Foundations of Computer
  Science (FOCS)}, pages 71--78. IEEE, 2016.

\bibitem[KMM{\etalchar{+}}20]{kapralov2020fast}
Michael Kapralov, Aida Mousavifar, Cameron Musco, Christopher Musco, Navid
  Nouri, Aaron Sidford, and Jakab Tardos.
\newblock Fast and space efficient spectral sparsification in dynamic streams.
\newblock In {\em Proceedings of the Fourteenth Annual ACM-SIAM Symposium on
  Discrete Algorithms}, pages 1814--1833. SIAM, 2020.

\bibitem[KNW10]{KaneNW10}
Daniel~M. Kane, Jelani Nelson, and David~P. Woodruff.
\newblock An optimal algorithm for the distinct elements problem.
\newblock In {\em Proceedings of the Twenty-Ninth {ACM} {SIGMOD-SIGACT-SIGART}
  Symposium on Principles of Database Systems, {PODS} 2010, June 6-11, 2010,
  Indianapolis, Indiana, {USA}}, pages 41--52, 2010.

\bibitem[KP20]{Kallaugher20}
John Kallaugher and Eric Price.
\newblock Separations and equivalences between turnstile streaming and linear
  sketching.
\newblock In {\em Symposium on Theory of Computing, {STOC} 2020}, 2020.

\bibitem[LDP16]{lee2016training}
Jun~Haeng Lee, Tobi Delbruck, and Michael Pfeiffer.
\newblock Training deep spiking neural networks using backpropagation.
\newblock {\em Frontiers in Neuroscience}, 10:508, 2016.

\bibitem[LMP17a]{lynch2017computational}
Nancy Lynch, Cameron Musco, and Merav Parter.
\newblock Computational tradeoffs in biological neural networks:
  Self-stabilizing winner-take-all networks.
\newblock {\em Innovations in Theoretical Computer Science}, 2017.

\bibitem[LMP17b]{lynch2017spiking}
Nancy Lynch, Cameron Musco, and Merav Parter.
\newblock Spiking neural networks: An algorithmic perspective.
\newblock In {\em 5th Workshop on Biological Distributed Algorithms (BDA
  2017)}, 2017.

\bibitem[LMP17c]{LynchMP17}
Nancy~A. Lynch, Cameron Musco, and Merav Parter.
\newblock Neuro-ram unit with applications to similarity testing and
  compression in spiking neural networks.
\newblock In {\em 31st International Symposium on Distributed Computing, {DISC}
  2017, October 16-20, 2017, Vienna, Austria}, pages 33:1--33:16, 2017.

\bibitem[LMPV18]{Legenstein0PV18}
Robert~A. Legenstein, Wolfgang Maass, Christos~H. Papadimitriou, and Santosh~S.
  Vempala.
\newblock Long term memory and the densest k-subgraph problem.
\newblock In {\em 9th Innovations in Theoretical Computer Science Conference,
  {ITCS} 2018, January 11-14, 2018, Cambridge, MA, {USA}}, pages 57:1--57:15,
  2018.

\bibitem[LNW14]{LiNW14}
Yi~Li, Huy~L. Nguyen, and David~P. Woodruff.
\newblock Turnstile streaming algorithms might as well be linear sketches.
\newblock In {\em Symposium on Theory of Computing, {STOC} 2014, New York, NY,
  USA, May 31 - June 03, 2014}, pages 174--183, 2014.

\bibitem[LW13]{LiW13}
Yi~Li and David~P. Woodruff.
\newblock A tight lower bound for high frequency moment estimation with small
  error.
\newblock In {\em Approximation, Randomization, and Combinatorial Optimization.
  Algorithms and Techniques - 16th International Workshop, {APPROX} 2013, and
  17th International Workshop, {RANDOM} 2013, Berkeley, CA, USA, August 21-23,
  2013. Proceedings}, pages 623--638, 2013.

\bibitem[LW19]{lynch2019integrating}
Nancy Lynch and Mien~Brabeeba Wang.
\newblock Integrating temporal information to spatial information in a neural
  circuit.
\newblock {\em arXiv preprint arXiv:1903.01217}, 2019.

\bibitem[Maa96]{maass1996computational}
Wolfgang Maass.
\newblock On the computational power of noisy spiking neurons.
\newblock In {\em Advances in neural information processing systems}, pages
  211--217, 1996.

\bibitem[Maa97]{maass1997networks}
Wolfgang Maass.
\newblock Networks of spiking neurons: the third generation of neural network
  models.
\newblock {\em Neural networks}, 10(9):1659--1671, 1997.

\bibitem[Maa00]{maass2000computational}
Wolfgang Maass.
\newblock On the computational power of winner-take-all.
\newblock {\em Neural computation}, 12(11):2519--2535, 2000.

\bibitem[MP80]{munro1980selection}
J~Ian Munro and Mike~S Paterson.
\newblock Selection and sorting with limited storage.
\newblock {\em Theoretical computer science}, 12(3):315--323, 1980.

\bibitem[MPVL19]{0001PVL19}
Wolfgang Maass, Christos~H. Papadimitriou, Santosh~S. Vempala, and Robert~A.
  Legenstein.
\newblock Brain computation: {A} computer science perspective.
\newblock In {\em Computing and Software Science - State of the Art and
  Perspectives}, pages 184--199. 2019.

\bibitem[MRL98]{manku1998approximate}
Gurmeet~Singh Manku, Sridhar Rajagopalan, and Bruce~G Lindsay.
\newblock Approximate medians and other quantiles in one pass and with limited
  memory.
\newblock {\em ACM SIGMOD Record}, 27(2):426--435, 1998.

\bibitem[Mut05]{muthukrishnan2005data}
Shanmugavelayutham Muthukrishnan.
\newblock {\em Data streams: Algorithms and applications}.
\newblock Now Publishers Inc, 2005.

\bibitem[PV19]{PapadimitriouV19}
Christos~H. Papadimitriou and Santosh~S. Vempala.
\newblock Random projection in the brain and computation with assemblies of
  neurons.
\newblock In {\em 10th Innovations in Theoretical Computer Science Conference,
  {ITCS} 2019, January 10-12, 2019, San Diego, California, {USA}}, pages
  57:1--57:19, 2019.

\bibitem[SCL19]{su2019spike}
Lili Su, Chia-Jung Chang, and Nancy Lynch.
\newblock Spike-based winner-take-all computation: Fundamental limits and
  order-optimal circuits.
\newblock {\em Neural Computation}, 31(12):2523--2561, 2019.

\bibitem[TGK{\etalchar{+}}19]{tavanaei2019deep}
Amirhossein Tavanaei, Masoud Ghodrati, Saeed~Reza Kheradpisheh, Timoth{\'e}e
  Masquelier, and Anthony Maida.
\newblock Deep learning in spiking neural networks.
\newblock {\em Neural Networks}, 111:47--63, 2019.

\bibitem[V{\etalchar{+}}12]{vadhan2012pseudorandomness}
Salil~P Vadhan et~al.
\newblock {\em Pseudorandomness}, volume~7.
\newblock Now, 2012.

\bibitem[Val17]{Valiant17}
Leslie~G. Valiant.
\newblock Capacity of neural networks for lifelong learning of composable
  tasks.
\newblock In {\em 58th {IEEE} Annual Symposium on Foundations of Computer
  Science, {FOCS} 2017, Berkeley, CA, USA, October 15-17, 2017}, pages
  367--378, 2017.

\bibitem[Woo04a]{woodruff2004optimal}
David Woodruff.
\newblock Optimal space lower bounds for all frequency moments.
\newblock In {\em Proceedings of the fifteenth annual ACM-SIAM symposium on
  Discrete algorithms}, pages 167--175. Society for Industrial and Applied
  Mathematics, 2004.

\bibitem[Woo04b]{Woodruff04}
David~P. Woodruff.
\newblock Optimal space lower bounds for all frequency moments.
\newblock In {\em Proceedings of the Fifteenth Annual {ACM-SIAM} Symposium on
  Discrete Algorithms, {SODA} 2004, New Orleans, Louisiana, USA, January 11-14,
  2004}, pages 167--175, 2004.

\end{thebibliography}
\bibliographystyle{alpha}

%\newpage
%\appendix

%\input{app-missproof.tex}
%\input{app-sketch.tex}
%\input{app-min-sketch.tex}

%\input{app-LB}
\end{document}